\documentclass[a4paper,onecolumn,11pt,unpublished]{quantumarticle}
\pdfoutput=1

\usepackage[utf8]{inputenc}
\usepackage[english]{babel}
\usepackage[T1]{fontenc}
\usepackage{amsmath,amssymb,amsthm,mathtools}
\usepackage{bm}
\usepackage{enumitem}
\usepackage{microtype}
\usepackage[numbers]{natbib}
\usepackage[colorlinks=true,linkcolor=blue,citecolor=blue,urlcolor=blue]{hyperref}
\usepackage{orcidlink}
\usepackage{subcaption}

\allowdisplaybreaks

\newtheorem{theorem}{Theorem}[section]
\newtheorem{lemma}[theorem]{Lemma}
\newtheorem{proposition}[theorem]{Proposition}
\newtheorem{corollary}[theorem]{Corollary}
\newtheorem{fact}[theorem]{Fact}
\theoremstyle{definition}

\theoremstyle{remark}
\newtheorem{remark}[theorem]{Remark}

\newcommand{\C}{\mathbb C}
\newcommand{\R}{\mathbb R}
\newcommand{\Z}{\mathbb Z}
\newcommand{\F}{\mathbb F}
\newcommand{\cH}{\mathcal H}
\newcommand{\cK}{\mathcal K}
\newcommand{\cI}{\mathcal I}
\newcommand{\cS}{\mathcal S}
\newcommand{\cW}{\mathcal W}
\newcommand{\Cl}{\mathrm{Cliff}}

\newcommand{\Sp}{\operatorname{Sp}}
\newcommand{\Tr}{\operatorname{Tr}}

\newcommand{\vol}{\operatorname{vol}}
\newcommand{\supp}{\operatorname{supp}}
\newcommand{\Span}{\operatorname{span}}
\newcommand{\Stab}{\operatorname{Stab}}
\newcommand{\im}{\operatorname{im}}

\newcommand{\dist}{\operatorname{dist}}
\newcommand{\ket}[1]{\lvert #1\rangle}
\newcommand{\bra}[1]{\langle #1\rvert}

\newcommand{\proj}[1]{\lvert #1\rangle\!\langle #1\rvert}
\newcommand{\eps}{\varepsilon}
\newcommand{\Ncir}{N_{\mathrm{cir}}}
\newcommand{\BE}{\mathrm{BE}}

\begin{document}
\title{Optimal T-Count for Block Encodings of
Fermionic and Spin Hamiltonians}

\author{Jiaxin Ma}
\affiliation{Department of Electrical and Computer Engineering, North Carolina State University, Raleigh, NC 27606, USA}

\author{Kevin J. Joven}
\affiliation{Department of Electrical and Computer Engineering, North Carolina State University, Raleigh, NC 27606, USA}

\author{Yuan Liu}
\affiliation{Department of Electrical and Computer Engineering, North Carolina State University, Raleigh, NC 27606, USA}
\affiliation{Department of Computer Science, North Carolina State University, Raleigh, NC 27606, USA}
\affiliation{Department of Physics, North Carolina State University, Raleigh, NC 27606, USA}
\email{q\_yuanliu@ncsu.edu}

\maketitle

\begin{abstract}
We determine the non-Clifford $T$-gate cost of constructing block
encodings of structured fermionic and spin Hamiltonians in a unitary Clifford$+T$ model, when arbitrarily many clean
ancillas and unrestricted block-encoding subnormalization are allowed, but without mid-circuit
measurements or classical feed-forward. Our main technical tool is an
ancilla-compression theorem: any block encoding of an $n$-qubit operator with $a$ clean ancillas and at most $s$ $T$ gates can be compressed to use at most $\min\{a,n+2s\}$ ancillas, without increasing the absolute error or $T$-count.
For general second-quantized Hamiltonians with bounded one- and
two-body coefficients, at operator-norm block-encoding error $\epsilon$, a volume-covering argument combined with
circuit counting gives the worst-case lower bound
$\Omega\bigl(n^2\sqrt{\log(n^4/\eps)}\bigr)$, matching the existing
upper bound at fixed precision. For the bond-dependent Kitaev honeycomb family on $n$ spins, we obtain
independent lower bounds $\Omega(n)$ from stabilizer nullity and
$\Omega(\log(1/\eps))$ from one-qubit state preparation, established using different Hamiltonian instances. Together with an explicit LCU construction, they
give the tight worst-case scaling
$\Theta(n+\log(1/\eps))$.
As an application, we evaluate the $T$-count of a
Hamiltonian simulation circuit based on quantum singular value transformation, with each block-encoding query compiled separately. When phase synthesis and controlled queries add at most constant-factor overhead, the simulation $T$-count scales as the query count times the optimal $T$-count per query.
\end{abstract}

\section{Introduction}

Block encoding provides a general framework for representing a nonunitary
operator as a subblock of a larger unitary transformation.  It has become a
standard input model for quantum algorithms based on qubitization
\cite{lowchuang2019qubitization} and quantum singular value transformation (QSVT)
\cite{gilyen2019qsvt}, including algorithms for Hamiltonian simulation and
quantum linear algebra \cite{10.1063/5.0312254,zlokapa2024hamiltonian,chakraborty2019power}. The cost of constructing a block encoding can
be a significant part of the total cost of such algorithms.

In fault-tolerant quantum computation it is useful to distinguish the cost of
Clifford operations from non-Clifford operations.  In the
Clifford$+T$ model, Clifford operations are comparatively inexpensive, whereas
$T$ gates require non-Clifford resources, typically supplied through
magic-state distillation and injection \cite{bravyi2005magic}.  The
$T$-count---the number of $T$ gates in a circuit---is therefore a natural
measure of non-Clifford cost, and a growing body of work has investigated the
optimal $T$-count of basic quantum algorithms
\cite{gheorghiu2022tcount,low2024trading,gosset2026stateprep,li2026sparsity}. Earlier work developed
Clifford$+T$ synthesis and optimization methods, including $T$-count and
$T$-depth optimization and near-optimal approximation of single-qubit rotations
\cite{amy2014tdepth,selinger2015efficient,ross2016optimal}. Low \emph{et al.} developed space--$T$-count tradeoffs for data
lookup, state preparation, and unitary synthesis \cite{low2024trading}.
Gosset \emph{et al.} subsequently determined the optimal asymptotic
$T$-count for arbitrary state preparation and diagonal unitary synthesis
\cite{gosset2026stateprep}.  More recently, Li \emph{et al.} 
obtained optimal
$T$-count bounds for sparse QROM, sparse state preparation, and block encodings
of general sparse matrices \cite{li2026sparsity}.

For Hamiltonian simulation, however, the matrices to be block encoded are
typically not arbitrary.  Their entries are constrained by the algebraic and
geometric structure of the underlying physical model, and these constraints may
substantially reduce the resources required.  This raises a different question:

\begin{quote}
\emph{What is the asymptotically optimal $T$-count required, in the
worst case, to construct an $\eps$-approximate block encoding of a
Hamiltonian from a structured family $\mathfrak H_m$?}
\end{quote}

In this paper we study one fermionic and one spin family. The first consists of general
second-quantized Hamiltonians with bounded one- and two-body
coefficients. For this family, Liu \emph{et al.} constructed block
encodings using $    O\!\left(n^{2}\sqrt{\log(n^{4}/\eps)}\right) $
$T$ gates at fixed precision
\cite{liu2025block}. Here $n$ is the number of fermionic modes,
equivalently the number of system qubits, and $\eps$ is the
block-encoding error. The second consists of bond-dependent Kitaev honeycomb spin Hamiltonians on $n$ spins,
with real bond couplings $J_e\in[-1,1]$
\cite{kitaev2006anyons,yamada2020anderson}. For this family, we construct
an explicit LCU block encoding using
$
    O\!\left(n+\log(1/\eps)\right)
$
$T$ gates. We ask whether either upper bound can be improved
asymptotically in the worst case.

\subsection{Summary of results and scope}
We work in the unitary Clifford$+T$ circuit model. We allow arbitrarily
many clean ancillas to fully explore tradeoff between circuit width and depth, and place no restriction on the block-encoding
subnormalization factor $\alpha>0$. Mid-circuit measurements and
classical feed-forward are excluded. All complexity bounds below are
worst-case over the corresponding Hamiltonian family.

\paragraph{Main result 1.}
Our first result concerns the family $\cH_n$ of 
general
bounded-coefficient $n$-fermionic mode Hamiltonians in second quantization. 
Its one- and two-body coefficients obey
$h_{qp}=\overline{h_{pq}}$, $V_{RP}=\overline{V_{PR}}$,
$|h_{pq}|\le1$, and $|V_{PR}|\le1$.
These are the only restrictions on the coefficients; in particular, we do not impose the additional relations arising from molecular Coulomb
integrals. The resulting general algebraic family is defined in
Section~\ref{sec:secondq}.
Under the conditions of Theorem~\ref{thm:secondq-main},
\begin{equation}
    T^{\BE}_{\eps}\!\left(\cH_n\right)
    =\Theta\!\left(
      n^{2}\sqrt{\log\!\left(\frac{n^{4}}{\eps}\right)}
    \right).
    \label{eq:intro-secondq}
\end{equation}
The lower bound follows from the circuit-counting argument in
Section~\ref{sec:circuit-counting}. Combined with the construction of
Liu \emph{et al.}~\cite{liu2025block}, it proves that this scaling is
asymptotically optimal for the general algebraic family.

\paragraph{Main result 2.}
Our second result concerns the bond-dependent Kitaev honeycomb spin family
$\cK_n$ on $n$ spins, defined in Section~\ref{sec:kitaev}. Its real bond
couplings satisfy $J_e\in[-1,1]$. Under the conditions of Theorem~\ref{thm:kitaev-optimal},
\begin{equation}
    T^{\BE}_{\eps}(\cK_n)
    =\Theta\!\left(n+\log\frac{1}{\eps}\right).
    \label{eq:intro-kitaev}
\end{equation}
An explicit LCU construction
developed in this work gives the matching upper bound
$O(n+\log(1/\eps))$.

\paragraph{Main result 3.}
Finally, we apply our results to a fixed QSVT circuit
for Hamiltonian simulation. Let $N_{\rm calls}$ count all uncontrolled block-encoding queries,
including those used in OAA, and let $B$ be the optimal $T$-count
per query at a fixed normalization. We compile each query as a
separate module. If the phase-synthesis and controlled-query costs
are both $O(N_{\rm calls}B)$, the total modular $T$-count is
$\Theta(N_{\rm calls}B)$. This is a bound for the fixed
QSVT circuit as written directly from the algorithms before additional circuit compilation, not for all Hamiltonian simulation algorithms.

\paragraph{Scope and limitations.}
Our results concern the two Hamiltonian families defined above and
unitary Clifford$+T$ circuits without mid-circuit measurements or
classical feed-forward. The fermionic lower bound is worst-case over
the full bounded-coefficient family, and does not apply to every
Hamiltonian. This means that for molecular Hamiltonians with more structures in their coefficients, there may be even smaller lower bound on the T-count than what we established here. For the
Kitaev family, the $\Omega(n)$ and $\Omega(\log(1/\eps))$ bounds follow from two different Hamiltonian instances. The QSVT result uses modular
compilation of a fixed circuit and is not a lower bound for all
Hamiltonian simulation algorithms. 

\paragraph{Ancilla compression.}
A direct circuit count fails when the number of clean ancillas is
unrestricted. Our compression theorem removes this obstacle. Given a
block encoding of an $n$-qubit operator using at most $s$ $T$ gates,
there exists a new block encoding derived from the original one. It acts
on at most $2(n+s)$ qubits, uses no more $T$ gates, and achieves the same
absolute error. The relevant circuit family is therefore finite, and the
second-quantized lower bound no longer depends on the original ancilla
count. The proof combines symplectic reduction for non-Clifford Pauli
rotations with a structural analysis of postselected stabilizer maps.
\paragraph{Proof overview.}
The two lower bounds proceed by different mechanisms.  We describe them
informally here; the objects they refer to are defined in
Sections~\ref{sec:secondq} and~\ref{sec:kitaev}.

For the general second-quantized Hamiltonian family the proof is a volume-counting argument in the
space of Hamiltonian coefficients.  Fix a Clifford$+T$ circuit and look only at
the operator obtained by projecting its ancillas onto $\ket{0^a}$ for $a$ being the number of ancilla qubits.  Changing the
normalization of the block encoding only rescales this projected operator, so
this circuit can approximate Hamiltonians only near a single line segment
in coefficient space.  The allowed Hamiltonian family, however, contains a
full-dimensional body of coefficients.  Hence many different circuits are
needed to cover the family at accuracy $\eps$.  
Combining this covering requirement with ancilla compression and
circuit counting yields the desired $T$-count lower bound.

For the spin family, the Kitaev bounds require different arguments, since that family has
only $\Theta(n)$ unique coupling coefficients and the bound from the counting method is
correspondingly weak.  For the system-size dependence we choose a stabilizer
input on which the Hamiltonian acts to produce a state of maximal stabilizer
nullity, and invoke the monotonicity of nullity under stabilizer operations
\cite{beverland2020lower}.  For the precision dependence we restrict to a
two-bond subfamily and reduce block encoding to the preparation of a one-qubit
state requiring $\Omega(\log(1/\eps))$ non-Clifford
resources.  An explicit PREPARE--SELECT construction supplies the matching
upper bound.

The QSVT application follows by combining the two block-encoding bounds
with the standard query complexity of QSVT Hamiltonian simulation. The
additional gate and error estimates are given in
Appendix~\ref{app:qsp-cost}.

\section{Framework and notation}\label{sec:framework}

\subsection{Block encodings and worst-case \texorpdfstring{$T$}{T} complexity}

For an $n$-qubit operator $H$, let $U$ be a unitary on $n+a$ qubits and define
\begin{equation}\label{eq:corner-def}
  A_U:=(\bra{0^a}\otimes I)\,U\,(\ket{0^a}\otimes I).
\end{equation}
We call $U$ an \emph{$(\alpha,a,\eps)$ block encoding} of $H$ if
\begin{equation}\label{eq:block-def}
  \|H-\alpha A_U\|\le\eps,\qquad\alpha>0,
\end{equation}
where $\|\cdot\|$ denotes the operator norm. Throughout, unless stated otherwise, all block-encoding error parameters are assumed to satisfy
$0<\eps\le1$. The ancilla register may have
arbitrary size, and we only require $\alpha>0$.  We do not assume
$\alpha\ge\|H\|$.  In the exact case $\eps=0$, this follows from
$\|A_U\|\le1$.  For nonzero error, we can only conclude
$\alpha\ge\|H\|-\eps$.
We use this weaker definition because the compression in
Theorem~\ref{thm:compression} may decrease $\alpha$.

All lower-bound arguments use unitary Clifford$+T$ circuits, with no
intermediate measurements or classically controlled gates.  For counting
purposes, each $T^\dagger$ can be replaced by $S^\dagger T$, where
$S^\dagger$ is Clifford.  We therefore use $T(U)$ for the number of $T$
gates after this replacement.
For a family $\mathfrak H$ of Hamiltonians, the worst-case
block-encoding $T$ complexity is
\begin{equation}\label{eq:worstcase}
  T^{\BE}_{\eps}(\mathfrak H)
  :=\sup_{H\in\mathfrak H}\ \
    \inf_{\substack{a\ge0,\ \alpha>0,\ U\\ \|H-\alpha A_U\|\le\eps}}T(U).
\end{equation}
Here the inner infimum is over all valid block encodings of a fixed $H$ with
error at most $\eps$ and arbitrary $a \geq 0$ and $\alpha > 0$, while the outer supremum takes the worst case over
$H\in\mathfrak H$.
All lower bounds below are worst-case statements in this sense.

For the QSVT application, we also need to keep the normalization fixed,
because the number of QSVT queries depends on $\alpha$. We define
\[
T^{\BE}_{\eps,\alpha}(\mathfrak H)
:=
\sup_{H\in\mathfrak H}
\inf_{\substack{a,U:\\
\|H-\alpha A_U\|\le\eps}}
T(U).
\]
Since the normalization is fixed,
\[
T^{\BE}_{\eps,\alpha}(\mathfrak H)
\ge
T^{\BE}_{\eps}(\mathfrak H).
\]
We use this fixed-normalization quantity only in
Section~\ref{sec:qsp-application}.

\subsection{Canonical form of Clifford\texorpdfstring{$+T$}{+T} circuits}
We use the following canonical form for unitary Clifford$+T$ circuits, based
on Gosset \emph{et al.}~\cite[Proposition~1]{gosset2014tcount}.  If a circuit $U$ satisfies
$T(U) \leq s$, then, up to a global phase,
it can be written as
\begin{equation}\label{eq:pauli-normal-form}
    U=C\prod_{j=1}^{s}R(P_j),
    \qquad
    R(P):=\exp\!\left(-\frac{i\pi}{8}P\right),
\end{equation}
where $C$ is Clifford and each $P_j$ is a Hermitian generator (for example, a Pauli string).  We allow
$P_j=\pm I$ in this representation: such factors contribute only a global
phase. They also serve to pad a representation containing fewer than $s$
nontrivial Pauli rotations to exactly $s$ factors if $T(U) < s$.
Since multiplying a Clifford unitary by a global phase is still a Clifford, we absorb the overall phase into $C$ and hence use
\eqref{eq:pauli-normal-form} as an exact equality.

\subsection{Binary symplectic representation}\label{subsec:symplectic}

Here we briefly introduce the binary symplectic representation needed for
the proof of the ancilla-compression theorem, Theorem~\ref{thm:compression}.
For further background on symplectic vector spaces and symplectic groups,
see, e.g., Ref.~\cite{artin1957geometric,grove2002classical}; the specific facts used here are
collected in Appendix~\ref{app:symplectic}.

Ignoring phases, the $a$-qubit Pauli group can be identified with
$V=\F_2^{2a}$ through the mapping
\begin{equation}\label{eq:nu}
\nu\bigl(X^{x}Z^{z}\bigr)=(x,z).
\end{equation}
The symplectic form $\omega$ on $V$ is defined by
\[
\omega\bigl((x,z),(x',z')\bigr)
=x\cdot z'+z\cdot x'\pmod2,
\]
so that two Pauli operators commute if and only if their corresponding
symplectic product is $0$.  We write
$e_i=(\mathbf e_i,0)$ and $f_i=(0,\mathbf e_i)$ for the standard symplectic
basis, and
\begin{equation}\label{eq:LZ}
L_Z=\{(0,z):z\in\F_2^{a}\}
=\Span\{f_1,\ldots,f_a\}
\end{equation}
for the subspace which is the image of the $Z$-type Paulis under $\nu$.
Furthermore, conjugation by Clifford operators will induce symplectic transformations of
$V$ \cite{dehaene2003clifford}.

\section{Ancilla compression}
\label{sec:compression}

Circuit counting requires a bound on the total number of qubits.
However, the definition of worst-case block-encoding complexity allows
an unrestricted number of clean ancillas. We therefore need an
ancilla bound that depends only on the system size and the $T$-count.

Allowing arbitrarily many clean ancillas is a relaxation of the
circuit model, not an assumption about practical workspace. In many
practical constructions the ancilla count satisfies $a\le s$. Then
$n+a\le n+s$, so the original circuit already satisfies the required
qubit bound and no compression is needed. The compression theorem
rules out the possibility that a circuit could evade the counting
lower bound by using an asymptotically larger ancilla register.

Beverland \emph{et al.}
eliminate stabilizer ancillas from postselected stabilizer
computations~\cite[Theorem~5.3]{beverland2020lower}. Related ancilla-independent canonical forms have also
been used in counting lower bounds for state preparation and block encodings of sparse matrices \cite{li2026sparsity}.

The result needed here is not a direct application of these earlier
compression statements. We use the unitary block-encoding structure
and treat the non-Clifford and Clifford parts of the circuit
separately. The Pauli rotations are first compressed to a small register.
The remaining projected Clifford block is then characterized through
its Choi state and the bipartite stabilizer normal form of
Fattal \emph{et al.}~\cite{fattal2004entanglement}. This produces a
unitary block encoding with a bounded number of ancillas.

Our construction also preserves the absolute approximation error and
does not increase either the $T$-count or the normalization. It
therefore allows us to restrict the subsequent counting argument to a
finite number of qubits without weakening the block encoding.

\begin{theorem}[Ancilla compression]
\label{thm:compression}
Let $H$ act on $n$ qubits, and let $U$ be an
$(\alpha,a,\eps)$ block encoding of $H$ with $T(U)\le s$.
Then there exist an integer $\kappa\ge0$ and a unitary circuit $U'$
that is an $(\alpha',a',\eps)$ block encoding of $H$, where
\[
    \alpha'=2^{-\kappa/2}\alpha,
    \qquad
    T(U')\le s,
    \qquad
    a'\le\min\{a,n+2s\}.
\]
Equivalently,
\[
    n+a'\le\min\{n+a,2(n+s)\}.
\]
In particular, the ancilla count is strictly reduced whenever
$a>n+2s$, and $\alpha'\le\alpha$.
\end{theorem}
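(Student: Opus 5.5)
We will prove the theorem by manipulating the canonical form \eqref{eq:pauli-normal-form}. Write $U=C\prod_{j=1}^{s}R(P_j)$ with $C$ Clifford on $n+a$ qubits, and split each generator as $P_j=Q_j\otimes A_j$, where $Q_j$ acts on the system and $A_j$ acts on the ancillas. If $a\le n+2s$ there is nothing to prove, so we take $U'=U$ and $\kappa=0$; from now on we assume $a>n+2s$.

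\textbf{Step 1: compressing the rotations.} Let $W\subseteq V=\F_2^{2a}$ be the span of $\nu(A_1),\dots,\nu(A_s)$, so $\dim W\le s$. By the symplectic facts collected in Appendix~\ref{app:symplectic}, some symplectic map sends $W$ into the span of $e_1,f_1,\dots,e_s,f_s$. Hence there is an ancilla Clifford $D$ such that $D A_j D^\dagger$ is supported on the first $s$ ancillas for every $j$. Inserting $D^\dagger D$ gives
\[
U=(C D^\dagger)\prod_j R(Q_j\otimes DA_jD^\dagger)\,D .
\]
The $a-s$ untouched ancillas enter only through the Clifford $D$ acting on $\ket{0^a}$, which prepares a stabilizer state, and through the final Clifford $CD^\dagger$ followed by projection onto $\bra{0^a}$. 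The rotation part therefore acts on only $n+s$ qubits.

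\textbf{Step 2: reducing the projected Clifford block.} What remains is a postselected stabilizer map. Its input is $n+s$ qubits of the rotation register together with a stabilizer state on the rest. Its output is the projection of $CD^\dagger$ onto $\bra{0}$ on the remaining ancillas. Take the Choi state of this map (the map from $n+s$ qubits to $n$ system qubits, obtained after fixing the discarded ancillas). Up to normalization it is a stabilizer state $\ket{\psi}$ on a bipartition of $(n+s)$ input qubits and $n$ output qubits, and $\|\ket{\psi}\|^2=2^{-\kappa}$ for some integer $\kappa$, because amplitudes of projected stabilizer states are powers of $2^{-1/2}$. Apply the Fattal \emph{et al.}~normal form~\cite{fattal2004entanglement}. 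Up to local Cliffords on each side, $\ket{\psi}$ is a product of at most $\min(n+s,n)$ Bell pairs and $\ket{0}$ states. The projected Clifford map is therefore $2^{-\kappa/2}\,C_{\rm out}\,\Pi\, C_{\rm in}$, where $\Pi$ routes some qubits through as identity and projects the rest onto $\ket{0}$.

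\textbf{Step 3: realizing it unitarily.} The map $\Pi$ is implemented with a unitary Clifford by swapping the projected qubits into fresh ancillas that are postselected on $\ket 0$. Counting qubits, the new ancillas are the $s$ rotation ancillas, plus at most $n+s$ ancillas needed to absorb the projected qubits. This gives $a'\le n+2s$, and the total width is $2(n+s)$. Because $A_{U'}=2^{\kappa/2}A_U$, setting $\alpha'=2^{-\kappa/2}\alpha$ gives $\alpha'A_{U'}=\alpha A_U$, so the error is exactly preserved. The $T$-count is unchanged because only Cliffords were modified.

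\textbf{Main obstacle.} The hardest part is Step 2. The difficulty is showing precisely that the postselected Clifford, including the non-unitary projection and the stabilizer input, equals a scalar $2^{-\kappa/2}$ times a projected Clifford of bounded width. We would attack this first through the Choi stabilizer state and its Bell-pair normal form, tracking the normalization carefully. The normalization must come out as an exact power of two, and the map must be nonzero whenever $A_U\neq0$. The case $A_U=0$ is handled separately by a trivial circuit.
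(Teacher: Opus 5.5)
\textbf{Gap.} The problem is where your Steps~1 and~2 meet. In Step~1 you only require $D$ to move every ancilla factor $A_j$ onto the first $s$ ancillas. Call these $M$ and the remaining ancillas $B$. Nothing forces $D\ket{0^a}=\ket{0^a}$, or even that $D\ket{0^a}$ is a product across $M|B$, and in general it is not. For instance, take two ancillas and $A_1=X_1X_2$. The Clifford $D=\mathrm{CNOT}_{2\to1}H_2\,\mathrm{CNOT}_{1\to2}$ gives $DA_1D^\dagger=X_1$, but $D\ket{00}=(\ket{00}+\ket{11})/\sqrt2$. The $B$ qubits bypass the rotations, yet they are entangled with the rotation register when the rotations act. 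Hence $A_U=(\bra{0^a}\otimes I)\,CD^\dagger(R'\otimes I_B)(D\ket{0^a}\otimes I)$ is not of the form ``fixed stabilizer input on $M$, rotations on $M\cup S$, then a postselected Clifford map from $n+s$ to $n$ qubits with a fixed stabilizer state on $B$.'' The map whose Choi state you take in Step~2 is therefore not well defined; the phrase ``after fixing the discarded ancillas'' hides exactly this. Closing it is the nontrivial content of the paper's symplectic lemma. That lemma constructs $S\in\Sp(2a,\F_2)$ satisfying $S(\cW)\subseteq\Span\{e_1,f_1,\dots,e_r,f_r\}$ \emph{and} $S(L_Z)=L_Z$ simultaneously. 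It does so by building a symplectic basis whose $g_i$ span $L_Z$ and whose last $a-r$ pairs lie in $\cW^\perp$. It then corrects signs and phase with $X^b$ so that $D_A\ket{0^a}=\ket{0^a}$. Standard symplectic facts give only the first condition.

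\textbf{Repair and comparison.} Suppose $D\ket{0^a}=\ket{0^a}$. Then $M$ enters in $\ket{0^s}$, $B$ is prepared and postselected in $\ket{0^{a-s}}$, and $A_U=\Lambda_{\mathrm{out}}R_F(\ket{0^s}_M\otimes I_S)$ with $\Lambda_{\mathrm{out}}=(\bra{0^a}\otimes I_S)\,CD^\dagger(\ket{0^{a-s}}_B\otimes I_F)$, so your Steps~2--3 go through. Alternatively, you can keep your $D$ and apply the bipartite normal form to $D\ket{0^a}$ across $M|B$. The $B$-side local Clifford commutes with the rotations, and the $M$-side one only conjugates them into other Pauli rotations on $F$. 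Adding the at most $s$ $B$-halves of its Bell pairs to the kept register then still gives $a'\le 2s+n$. Once repaired, your Step~3 is a genuinely different realization from the paper's. The paper keeps the square map $F\to F$, writes it as $2^{-\kappa/2}W\Pi$, and implements $\Pi$ coherently with $m\le n+s$ syndrome qubits. You instead fold the output projection on $M$ into the rectangular map $F\to S$, giving $\Lambda_{\mathrm{out}}=2^{-\kappa/2}C_1JC_2^{\top}$. Here $J$ routes $p\le n$ qubits, projects $n+s-p$ onto $\ket0$, and creates $n-p$ outputs in $\ket0$. This needs only $n-p$ fresh qubits, so in fact $a'\le n+s$, slightly better than the stated bound. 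Do fix the normalization bookkeeping, though. If $\|\psi\|^2=2^{-t}$, the operator scalar has $\kappa=t+p-n-s$, not $t$, once you account for the normalization of $\ket{\Omega}$ and the projected qubits. Also, $\kappa\ge0$ (hence $\alpha'\le\alpha$) must be derived from $\|\Lambda_{\mathrm{out}}\|\le1$ and $\|C_1JC_2^{\top}\|=1$.
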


\begin{proof}[Proof sketch]
Let $\widetilde H=\alpha A_U$ be the operator encoded exactly by $U$.
Then $\|H-\widetilde H\|\le\eps$, so it is enough to compress the
exact block encoding of $\widetilde H$.

Write $U$ as a Clifford followed by at most $s$ Pauli $\pi/8$
rotations. Proposition~\ref{prop:pauli-compression} gives an ancilla
Clifford that fixes $\ket{0^a}$ and confines the ancilla support of
all these rotations to at most $s$ qubits. After projecting out the
remaining ancillas, the Clifford part gives a projected block
$\Lambda$. By Theorem~\ref{thm:corner-structure},
\[
    \Lambda=2^{-\kappa/2}W\Pi,
\]
where $W$ is Clifford and $\Pi$ is a stabilizer-code projector with at
most $n+s$ independent generators.

Proposition~\ref{prop:reconstruct} implements $\Pi$ coherently using
at most $n+s$ syndrome qubits and Clifford gates only. The resulting
circuit therefore uses at most
\[
    s+(n+s)=n+2s
\]
ancillas, has $T$-count at most $s$, and has normalization
$2^{-\kappa/2}\alpha$. If the original circuit already uses at most
$n+2s$ ancillas, we retain it instead. The complete construction,
including the zero-block case, is given in
Appendices~\ref{app:pauli-compression}--\ref{app:compressed-circuit}.
\end{proof}

\section{General second-quantized Hamiltonians}\label{sec:secondq}

\subsection{Hamiltonian family and coefficient geometry}

Let $a_1,\ldots,a_n$ be fermionic annihilation operators.  Write
$\cI=\{(p,q):1\le p<q\le n\}$ and $D=|\cI|=\binom n2$, and for $P=(p,q)\in\cI$
put
\[
  b_P^{\dagger}=a_p^{\dagger}a_q^{\dagger},\qquad b_P=a_qa_p.
\]
We consider the family of number-conserving one- and two-body Hamiltonians
\begin{equation}\label{eq:secondq-H}
  H(h,V)=\sum_{p,q=1}^{n}h_{pq}a_p^{\dagger}a_q
         +\sum_{P,R\in\cI}V_{PR}\,b_P^{\dagger}b_R,
\end{equation}
where $h=(h_{pq})_{1\le p,q\le n}\in\C^{n\times n}$ is the one-body
coefficient matrix and $V=(V_{PR})_{P,R\in\cI}\in\C^{D\times D}$ is the
two-body coefficient matrix.  The coefficients satisfy the Hermiticity
constraints $h_{qp}=\overline{h_{pq}}$ and $V_{RP}=\overline{V_{PR}}$.
We consider an algebraic family in which $h$ and $V$ vary independently.
They are constrained only by Hermiticity and the entry-wise bounds stated below.  In particular, \eqref{eq:secondq-H} includes molecular electronic Hamiltonians in which their integral
coefficients arise from one- and two-electron integrals of a Coulomb
interaction in a chosen orbital basis, although they may obey
additional structural constraints among their coefficients arising from molecular geometries.  Accordingly,
the $\Theta(n^4)$ parameter count below applies to the general algebraic
family considered here; we make no claim that the Coulomb-integral molecular
subfamily has the same parameter dimension.
The independent real parameters are the diagonal entries of $h$ and $V$
together with the real and imaginary parts of their strictly upper-triangular
entries, so the real parameter dimension is
\begin{equation}\label{eq:d-secondq}
  d=n^{2}+D^{2}=\frac{n^{4}-2n^{3}+5n^{2}}{4}=\Theta(n^{4}).
\end{equation}

We now work with the bounded-coefficient family $\cH_n$: throughout this
section, every Hamiltonian in the family is assumed to satisfy
\begin{equation}
  |h_{pq}|\le 1 \quad (1\le p,q\le n),\qquad
  |V_{PR}|\le 1 \quad (P,R\in\cI).
\end{equation}
Identify a pair $(h,V)$ with its
vector $x\in\R^{d}$ of independent real coordinates and let $\cK\subset\R^{d}$
be the corresponding coefficient body.  Each complex unit disc contains a
centred square of side $\sqrt2$ and each real diagonal coordinate ranges over
$[-1,1]$, so
\begin{equation}\label{eq:K-volume}
  \vol(\cK)\ge 2^{d/2},
  \qquad
  \|x\|_2\le\sqrt d\quad(x\in\cK),
\end{equation}
the second inequality follows because every real coordinate has modulus at most one.
The map $x\mapsto H(x)$ is $\R$-linear, and $\cK$ is compact, so
\begin{equation}\label{eq:Gamma}
  \Gamma:=\sup_{x\in\cK}\|H(x)\|<\infty .
\end{equation}
The following lemma relates the two norms used in the proof.
Block-encoding error is measured in operator norm.
The volume argument instead uses the Euclidean norm of the coefficient vector.  Unlike the coefficient body
$\cK$, this estimate is homogeneous and applies to arbitrary coefficients, not
only to those satisfying the entrywise bounds above.

\begin{lemma}\label{lem:coefficient-norm}
For every $n\ge2$ and every coefficient vector $x=(h,V)$,
\begin{equation}\label{eq:norm-comparison}
  \|H(h,V)\|\ \ge\ \frac{1}{4n^{3/2}}\,\|x\|_2 .
\end{equation}
\end{lemma}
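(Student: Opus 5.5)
The plan is to bound $\|H\|$ from below by restricting $H(h,V)$ to fixed particle-number sectors, where the one- and two-body coefficient matrices appear essentially as explicit matrix blocks. Then I would pass from operator norms of those blocks to Frobenius norms, and from Frobenius norms to the Euclidean norm $\|x\|_2$ of the real coordinate vector. Since $H$ commutes with the number operator $N=\sum_p a_p^\dagger a_p$, the operator norm of $H$ dominates the norm of its restriction to every $N$-eigenspace. The proof then has three steps.

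\emph{Step 1 (one-particle sector).} In the $N=1$ sector with basis $a_p^\dagger\ket{\mathrm{vac}}$, the two-body part vanishes, because $b_R$ annihilates every state with fewer than two particles. The one-body part acts exactly as the $n\times n$ matrix $h$, since $\bra{\mathrm{vac}}a_r\,a_p^\dagger a_q\,a_s^\dagger\ket{\mathrm{vac}}=\delta_{rp}\delta_{qs}$. Hence $\|h\|\le\|H\|$, and therefore $\|h\|_F\le\sqrt n\,\|H\|$.

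\emph{Step 2 (two-particle sector).} In the $N=2$ sector I would use the orthonormal basis $b_P^\dagger\ket{\mathrm{vac}}$, $P\in\cI$, of dimension $D$. By the canonical anticommutation relations, $\bra{\mathrm{vac}}b_Q\,b_P^\dagger b_R\,b_S^\dagger\ket{\mathrm{vac}}=\delta_{QP}\delta_{RS}$, so the two-body part restricts exactly to the matrix $V$. The one-body part restricts to the second quantization of $h$ on two particles, i.e.\ $h\otimes I+I\otimes h$ compressed to the antisymmetric subspace. Its norm is therefore at most $2\|h\|\le 2\|H\|$. By the triangle inequality on the $N=2$ block,
\[
\|V\|\le\|H\|+2\|h\|\le 3\|H\|,
\qquad\text{so}\qquad
\|V\|_F\le 3\sqrt D\,\|H\|.
\]

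\emph{Step 3 (coordinates).} Each diagonal entry contributes its square to both $\|h\|_F^2$ and the coordinate norm. Each off-diagonal pair $h_{pq},h_{qp}=\overline{h_{pq}}$ contributes $2(\mathrm{Re}^2+\mathrm{Im}^2)$ to $\|h\|_F^2$ but only $\mathrm{Re}^2+\mathrm{Im}^2$ to $\|x\|_2^2$, and the same holds for $V$. Thus $\|x\|_2^2\le\|h\|_F^2+\|V\|_F^2$. Combining the three steps gives
\[
\|x\|_2^2\le\bigl(n+9D\bigr)\|H\|^2\le\Bigl(n+\tfrac{9}{2}n^2\Bigr)\|H\|^2\le 16\,n^3\|H\|^2
\]
for $n\ge2$, and taking square roots yields \eqref{eq:norm-comparison}.

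The only step needing care is Step 2. There I must check the sign and ordering conventions for $b_P^\dagger=a_p^\dagger a_q^\dagger$ and $b_R=a_ra_s$, so that the restriction is exactly $V$ rather than a signed permutation of it. I also need to justify the bound $2\|h\|$ for the one-body part on the two-particle space: it is the restriction of $h\otimes I+I\otimes h$ to an invariant subspace, whose norm is at most $2\|h\|$. Neither point should cause real difficulty, and the constant $4$ leaves ample slack.
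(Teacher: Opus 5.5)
Your proof is correct, and it takes a genuinely different and sharper route through the two-particle sector than the paper does.

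\emph{The paper's route.} It works entrywise. From the explicit matrix element $\bra P H\ket R=V_{PR}+\delta_{qs}h_{pr}-\delta_{ps}h_{qr}-\delta_{qr}h_{ps}+\delta_{pr}h_{qs}$ it obtains $|V_{PR}|\le 3\|H\|$ for every entry. It applies a Frobenius bound to $\Pi_2H\Pi_2$ only on the entries with $|P\cup R|=4$, where the one-body terms drop out. The remaining $\Theta(n^3)$ real coordinates with $|P\cup R|\le 3$ are then bounded one at a time. That entrywise step is exactly where the factor $n^{3/2}$ comes from.

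\emph{Your route.} You subtract the one-body contribution as an operator. On the two-particle sector it acts as $h\otimes I+I\otimes h$ restricted to the antisymmetric subspace, so its norm is at most $2\|h\|\le 2\|H\|$. This gives $\|V\|\le 3\|H\|$ for the whole block, and a single Frobenius bound $\|V\|_F\le 3\sqrt D\,\|H\|$ with no case split.

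\emph{What each buys.} Your route gives a stronger estimate, $\|x\|_2\le\sqrt{n+9D}\,\|H\|\le 3n\,\|H\|$, which beats the lemma by a factor of order $\sqrt n$. In the covering argument this would allow $\rho=O(n\eps)$ and $\mathcal L=\log_2(\Theta(n)/\eps)$, although the final $\Theta(n^2\sqrt{\log(n^4/\eps)})$ bound is unchanged. The paper's route needs only explicit matrix elements and never identifies the one-body operator with its first-quantized form, but it pays for this with the lossy entrywise treatment of the $|P\cup R|\le 3$ coordinates.

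On your convention check: with $b_R:=(b_R^\dagger)^\dagger=a_sa_r$, as in the paper, the two-particle block is exactly $V$. An overall sign would not change any norm anyway.
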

The proof is given in Appendix~\ref{app:second-coefficient-norm}.

\subsection{The set covered by one circuit}

Fix a unitary $U$ together with a division of its qubits into $a$ block
ancillas and $n$ system qubits, and let $[U]$ denote its equivalence class
up to global phase. Define
\begin{equation}\label{eq:SU}
  S_{[U]}
  :=
  \bigl\{
    x\in\cK:\ \exists\,z\in\C\setminus\{0\}\ \text{with}\
    \|H(x)-zA_U\|\le\eps
  \bigr\}.
\end{equation}
This definition is independent of the representative chosen from $[U]$.

\begin{lemma}[One circuit covers a tube]\label{lem:tube}
Let $c_0=(4n^{3/2})^{-1}$ and $\rho=4\eps/c_0=16n^{3/2}\eps$.  If $S_{[U]}\ne\varnothing$ then
there exists $x_\star\in\cK$ with
\begin{equation}\label{eq:tube-inclusion}
  S_{[U]}\subseteq\bigl\{x\in\R^{d}:\dist\bigl(x,\{\mu x_\star:-2\le\mu\le2\}\bigr)\le\rho\bigr\},
\end{equation}
and consequently
\begin{equation}\label{eq:tube-volume}
  \vol(S_{[U]})\ \le\ (4\sqrt d+3\rho)\,\omega_{d-1}\rho^{\,d-1},
\end{equation}
where $\omega_k$ is the volume of the $k$-dimensional Euclidean unit ball.
\end{lemma}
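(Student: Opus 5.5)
The plan is to pick one reference point $x_\star\in S_{[U]}$ whose scalar $z$ has almost the largest possible modulus. Every other point of $S_{[U]}$ is then a bounded real multiple of $x_\star$ up to a small operator-norm error. Lemma~\ref{lem:coefficient-norm} converts that error into Euclidean distance in coefficient space.

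\emph{Degenerate case.} If $A_U=0$, membership $x\in S_{[U]}$ means $\|H(x)\|\le\eps$. Lemma~\ref{lem:coefficient-norm} then gives $\|x\|_2\le\eps/c_0\le\rho$, so $S_{[U]}$ lies in the $\rho$-ball around the origin. That ball is contained in the tube for any choice of $x_\star\in S_{[U]}$.

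\emph{Choosing the reference point.} Assume $A_U\ne0$. For any admissible pair $(x,z)$ we have $|z|\,\|A_U\|\le\|H(x)\|+\eps\le\Gamma+\eps$, using \eqref{eq:Gamma}. Hence
\[
M:=\sup\{|z|:\exists x\in\cK,\ \|H(x)-zA_U\|\le\eps\}
\]
is finite and positive, and we can choose an admissible pair $(x_\star,z_\star)$ with $|z_\star|\ge M/2$.

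\emph{Every point is near a real multiple of $x_\star$.} Take any $y\in S_{[U]}$ with witness $w$, and set $\lambda=w/z_\star$, so $|\lambda|\le2$. By the triangle inequality,
\[
\|H(y)-\lambda H(x_\star)\|\le\|H(y)-wA_U\|+|\lambda|\,\|z_\star A_U-H(x_\star)\|\le3\eps .
\]
The main subtlety is that $\lambda$ is complex, while the tube in \eqref{eq:tube-inclusion} is built from real multiples. Hermiticity resolves this. Write $\lambda=\mu+i\beta$ with $\mu,\beta$ real, and take the Hermitian part $(M+M^\dagger)/2$ of $M=H(y)-\lambda H(x_\star)$. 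Since $H(y)$ and $H(x_\star)$ are Hermitian, the Hermitian part is exactly $H(y)-\mu H(x_\star)$. Taking the Hermitian part does not increase the operator norm, so
\[
\|H(y-\mu x_\star)\|\le3\eps .
\]
Here $|\mu|\le|\lambda|\le2$, and we use that $x\mapsto H(x)$ is $\R$-linear. Lemma~\ref{lem:coefficient-norm} then gives $\|y-\mu x_\star\|_2\le 3\eps/c_0\le\rho$, which proves \eqref{eq:tube-inclusion}.

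\emph{Volume estimate.} The segment $\{\mu x_\star:|\mu|\le2\}$ has length $4\|x_\star\|_2\le4\sqrt d$ by \eqref{eq:K-volume}. Its $\rho$-neighbourhood is contained in a cylinder of that length and cross-sectional radius $\rho$, together with two half-balls of radius $\rho$ at the ends. This gives a volume of at most $4\sqrt d\,\omega_{d-1}\rho^{d-1}+\omega_d\rho^d$. Finally, $\omega_d\le 2\omega_{d-1}\le3\omega_{d-1}$, which follows from the Gamma-function formula $\omega_d/\omega_{d-1}=\sqrt\pi\,\Gamma(\tfrac{d+1}{2})/\Gamma(\tfrac d2+1)$. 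This yields \eqref{eq:tube-volume}.

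The step that needs the most care is the reduction from a complex ratio to a real one via Hermitian parts, together with choosing $x_\star$ so that all ratios stay bounded. The rest is routine triangle-inequality bookkeeping.
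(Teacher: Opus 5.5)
Your proof is correct and follows the paper's argument step for step. The shared steps are the near-maximal choice $|z_\star|\ge\frac12\sup|z|$, the bound $(1+|\mu|)\eps\le3\eps$, the passage to the Hermitian part to replace $\mu$ by $\operatorname{Re}\mu$, and the cylinder-plus-two-half-balls volume estimate. The differences are cosmetic: the paper takes $x_\star=0$ when $A_U=0$ and uses Gautschi's inequality to get $\omega_d/\omega_{d-1}\le\sqrt{2\pi/d}$, whereas your bound $\omega_d\le2\omega_{d-1}$ (immediate from $B_d\subseteq B_{d-1}\times[-1,1]$) works equally well.
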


\begin{proposition}[Covering requirement]\label{prop:covering}
Let $c_\star:=\bigl(32\,\sqrt{2\pi e}\bigr)^{-1}$ and assume
$0<\eps\le1$ and $n\ge2$.  If a collection of $M$ different unitary classes up to a global phase has encoded
regions $S_{[U]}$ whose union covers $\cK$, then
\begin{equation}\label{eq:covering-count}
  \log_2 M\ \ge\ (d-1)\,\mathcal L\ -\ 4\log_2 d,
  \qquad
  \mathcal L:=\log_2\frac{c_\star\,\sqrt n}{\eps}.
\end{equation}
\end{proposition}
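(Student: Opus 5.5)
The plan is a standard volume comparison. Since the regions $S_{[U]}$ of the $M$ classes cover $\cK$, subadditivity of Lebesgue measure gives
\[
\vol(\cK)\le \sum_{[U]}\vol(S_{[U]})\le M\,(4\sqrt d+3\rho)\,\omega_{d-1}\rho^{d-1}.
\]
The first inequality is just the cover. The second applies Lemma~\ref{lem:tube} to each nonempty region; empty regions contribute nothing. Combining with the lower bound $\vol(\cK)\ge 2^{d/2}$ from \eqref{eq:K-volume} and taking $\log_2$ gives
\[
\log_2 M\ \ge\ \tfrac d2-\log_2(4\sqrt d+3\rho)-\log_2\omega_{d-1}-(d-1)\log_2\rho .
\]
The rest of the proof is bookkeeping that puts this into the form $(d-1)\mathcal L-4\log_2 d$.

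\textbf{Step 1: the ball volume.} Use $\omega_k=\pi^{k/2}/\Gamma(k/2+1)$ and a Stirling lower bound $\Gamma(k/2+1)\ge (k/(2e))^{k/2}$, valid for all $k\ge1$. This gives
\[
\omega_{d-1}\le \Bigl(\frac{2\pi e}{d-1}\Bigr)^{(d-1)/2}.
\]
Substituting $\rho=16n^{3/2}\eps$, the $(d-1)$-proportional terms combine into
\[
(d-1)\log_2\frac{\sqrt{d-1}}{16n^{3/2}\eps\sqrt{2\pi e}},
\]
plus the leftover $d/2$. Split $d/2=(d-1)/2+1/2$, so the $(d-1)/2$ part becomes a factor $\sqrt2$ inside the logarithm.

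\textbf{Step 2: the $n$-dependence.} Use $d\ge n^4/8$ for $n\ge2$, which follows since $n^4-2n^3+5n^2\ge n^4/2$. Then $\sqrt{d-1}\gtrsim n^2/4$ for $n\ge2$, so that
\[
\frac{\sqrt{d-1}}{n^{3/2}}\ \ge\ \text{const}\cdot\sqrt n .
\]
The per-coordinate logarithm therefore becomes $\log_2\bigl(c\sqrt n/\eps\bigr)$ for some constant $c$. I would check that $c\ge c_\star=(32\sqrt{2\pi e})^{-1}$, i.e.\ that
\[
\frac{\sqrt2\,\sqrt{d-1}}{16n^{3/2}}\ \ge\ \frac{\sqrt n}{32}.
\]
This is equivalent to $d-1\ge n^4/512$, which holds comfortably. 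If the edge case $n=2$ ($d=5$) is tight, I would verify it by hand.

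\textbf{Step 3: the prefactor.} Since $\eps\le1$, we have $\rho\le 16n^{3/2}\le 16 d$. Then $4\sqrt d+3\rho\le 52d\le d^4$ once $d\ge 5$; every $n\ge2$ gives $d\ge5$. Absorbing this term, together with the leftover $+1/2$, yields the $-4\log_2 d$ term.

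The main obstacle is keeping the constants honest. The Stirling bound for $\Gamma$ at half-integer arguments, the inequality $d-1\ge n^4/512$, and the absorption of $\log_2(4\sqrt d+3\rho)$ into $4\log_2 d$ must all hold uniformly for $n\ge2$ and $0<\eps\le1$. I would first try the crude bounds above and fall back on direct verification for the smallest $n$ if needed.
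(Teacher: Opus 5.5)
Your argument is essentially the paper's: subadditivity combined with Lemma~\ref{lem:tube}, the bound $\omega_k\le(2\pi e/k)^{k/2}$, the spare factor $\sqrt2$ per coordinate taken from $2^{d/2}$, and absorbing $\log_2(4\sqrt d+3\rho)$ into $4\log_2 d$. Your bound $52d\le d^4$ works as well as the paper's $100\sqrt d$.

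There is one arithmetic slip. Your constant check $\sqrt2\sqrt{d-1}/(16n^{3/2})\ge\sqrt n/32$ is equivalent to $d-1\ge n^4/8$, not $n^4/512$. The corrected inequality is exactly the bound the paper uses, and it does hold for all $n\ge2$, because $8(d-1)-n^4=n^2\bigl((n-2)^2+6\bigr)-8>0$. However, the slack is less than a factor of $2$, not comfortable. Consequently your cruder estimate $\sqrt{d-1}\ge n^2/4$ from Step~2 would only yield the constant $c_\star/\sqrt2$, so you need the sharper bound $d-1\ge n^4/8$.
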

The proofs of Lemma~\ref{lem:tube} and Proposition~\ref{prop:covering}
are given in Appendix~\ref{app:second-covering-est}.

\begin{remark}\label{rem:L-equivalence}
For $0<\eps\le1$ one has
$\mathcal L=\tfrac12\log_2 n+\log_2(1/\eps)-\log_2(1/c_\star)$ and
$\log_2(n^{4}/\eps)=4\log_2 n+\log_2(1/\eps)$, so
\[
  \mathcal L\ \ge\ \tfrac18\log_2\frac{n^{4}}{\eps}-\log_2\frac1{c_\star},
  \qquad
  \mathcal L\ \le\ \log_2\frac{n^{4}}{\eps}.
\]
Hence $\mathcal L=\Theta\bigl(\log(n^{4}/\eps)\bigr)$ whenever
$\log_2(n^{4}/\eps)\ge 8\log_2(1/c_\star)$, which holds for all $n$ beyond an
absolute constant.  We use this equivalence without further comment.
\end{remark}

\subsection{Circuit counting and the lower bound}
\label{sec:circuit-counting}
\begin{lemma}[Circuit count]\label{lem:circuit-count}
Let $\Ncir(Q,s)$ be the number of distinct unitaries on $Q$ qubits, up to
global phase, implemented by a unitary Clifford$+T$ circuit with at most $s$
$T$ gates.  Then
\begin{equation}\label{eq:circuit-count}
  \log_2\Ncir(Q,s)\ \le\ 2Q^{2}+3Q+s(2Q+1).
\end{equation}
\end{lemma}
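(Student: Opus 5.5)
We use the canonical form \eqref{eq:pauli-normal-form}: every $Q$-qubit unitary with $T(U)\le s$ can be written exactly, with the global phase absorbed into the Clifford, as $U=C\prod_{j=1}^{s}R(P_j)$. Here $C$ is a $Q$-qubit Clifford and each $P_j$ is a signed Pauli string, with $\pm I$ allowed as padding. The number of distinct unitaries up to global phase is therefore at most the number of tuples $([C],P_1,\dots,P_s)$, where $[C]$ is the class of $C$ modulo phase. So it suffices to bound the number of such tuples.

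\emph{Generators.} For each $P_j$ we only need Hermitian Pauli strings with sign $\pm1$. There are $4^{Q}$ unsigned Pauli strings and two signs, so each $P_j$ ranges over at most $2\cdot 4^Q=2^{2Q+1}$ choices. The $s$ factors together contribute $s(2Q+1)$ bits. The sign matters, since $R(-P)=R(P)^\dagger$, but a factor of two per generator covers it.

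\emph{Clifford part.} The group of $Q$-qubit Cliffords modulo phase is an extension of $\Sp(2Q,\F_2)$ by the Pauli group modulo phase, of order $4^Q$. We have $|\Sp(2Q,\F_2)|=2^{Q^2}\prod_{k=1}^{Q}(4^k-1)\le 2^{Q^2}\cdot 2^{Q(Q+1)}=2^{2Q^2+Q}$. Multiplying by $4^Q$ gives $|\mathcal C_Q/U(1)|\le 2^{2Q^2+3Q}$, i.e.\ at most $2Q^2+3Q$ bits.

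Adding the two contributions gives $\log_2\Ncir(Q,s)\le 2Q^2+3Q+s(2Q+1)$. The map from tuples to unitary classes is surjective, so the count of classes is at most the count of tuples. Circuits with fewer than $s$ $T$ gates are already included through the $\pm I$ padding.

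The only step needing care is the Clifford group order. For that I would cite the standard formula via the exact sequence $1\to\mathcal P_Q/\text{phase}\to\mathcal C_Q/U(1)\to\Sp(2Q,\F_2)\to1$ (e.g.\ Dehaene--De Moor, \cite{dehaene2003clifford}) and check the crude bound $\prod_k(4^k-1)\le 2^{\sum_k 2k}$. We also need the canonical form to hold with $C$ determined up to phase, so the phase bookkeeping in \eqref{eq:pauli-normal-form} must be consistent. This holds because we absorbed the phase into $C$ and count only classes.
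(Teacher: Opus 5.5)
Your proof is correct and follows essentially the same route as the paper: you bound the number of tuples $([C],P_1,\dots,P_s)$ coming from the canonical form, using $2^{2Q+1}$ choices per Hermitian Pauli and $|\Cl_Q|=2^{Q^2+2Q}\prod_{j=1}^{Q}(4^j-1)\le 2^{2Q^2+3Q}$ for the Clifford group modulo phases. The only cosmetic difference is that you obtain the Clifford order as $|\Sp(2Q,\F_2)|\cdot 4^Q$ from the exact sequence, whereas the paper quotes the formula directly.
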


\begin{proof}
Every unitary implemented with at most $s$ $T$ gates can be written in the form
\eqref{eq:pauli-normal-form} with exactly $s$ factors once trivial factors
$P_j=\pm I$ are permitted, since such a factor contributes only a global phase.
Up to global phase, $U$ is therefore determined by the Clifford $C$ together
with the ordered tuple $(P_1,\ldots,P_s)$ of Hermitian Paulis.  The number of
Hermitian Paulis on $Q$ qubits, including $\pm I$, is $2\cdot4^{Q}=2^{2Q+1}$.
The order of the $Q$-qubit Clifford group modulo phases is
\[
  |\Cl_Q|=2^{Q^{2}+2Q}\prod_{j=1}^{Q}(4^{j}-1)
  <2^{Q^{2}+2Q}\cdot 2^{Q(Q+1)}=2^{2Q^{2}+3Q}.
\]
Multiplying these choices upper bounds the number of distinct unitaries
implemented by such Clifford$+T$ circuits, since different choices may represent
the same unitary.
\end{proof}

\begin{theorem}[Second-quantized lower bound]\label{thm:secondq-lower}
Let $n\ge16$ and $0<\eps\le \min\{1,c_\star\sqrt n/2\}$.  Then
\begin{equation}\label{eq:secondq-lb}
  T^{\BE}_{\eps}\bigl(\cH_n\bigr)
  \ =\ \Omega\!\left(n^{2}\sqrt{\log(n^{4}/\eps)}\right).
\end{equation}
The bound is independent of the ancilla count and of the subnormalization.
\end{theorem}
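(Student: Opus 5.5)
Let $s:=T^{\BE}_{\eps}(\cH_n)$. The plan is to turn the worst-case definition \eqref{eq:worstcase} into a covering of $\cK$ by finitely many circuits with bounded width, and then compare two counts: the circuit count (Lemma~\ref{lem:circuit-count}) and the covering requirement (Proposition~\ref{prop:covering}).

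\textbf{Step 1: build the covering.} Since the infimum in \eqref{eq:worstcase} is over a set of integers, it is attained. So every $x\in\cK$ has a block encoding $U_x$ with $T(U_x)\le s$, some ancilla count, and some $\alpha>0$. Apply Theorem~\ref{thm:compression} to $U_x$. This gives a block encoding with at most $n+2s$ ancillas, at most $s$ $T$ gates, and the same error $\eps$. Pad with idle clean ancillas so that every circuit acts on exactly $Q:=2n+2s$ qubits; idle ancillas leave $A_U$ unchanged. Set $z=1/\alpha'$. Then $x\in S_{[U']}$, where the system/ancilla division is the fixed one with $Q-n$ ancillas. Hence the regions $S_{[U]}$, taken over the set $\mathcal U$ of all phase classes of $Q$-qubit Clifford$+T$ unitaries with at most $s$ $T$ gates, cover $\cK$.

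\textbf{Step 2: compare the counts.} Proposition~\ref{prop:covering} gives $\log_2|\mathcal U|\ge(d-1)\mathcal L-4\log_2 d$. Lemma~\ref{lem:circuit-count} gives $\log_2|\mathcal U|\le 2Q^2+3Q+s(2Q+1)$. With $Q=2n+2s$, the right-hand side is at most $c(n+s)^2$ for an absolute constant $c$. Therefore
\[
(n+s)^2\ \ge\ c'\bigl(d\,\mathcal L-4\log_2 d\bigr).
\]

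\textbf{Step 3: extract the bound.} Use $d=\Theta(n^4)$ from \eqref{eq:d-secondq}. The hypothesis $\eps\le c_\star\sqrt n/2$ gives $\mathcal L\ge1$, so $d\mathcal L$ dominates $4\log_2 d$ once $n$ is not tiny; the hypothesis $n\ge16$ is what makes explicit constants work. This yields $n+s\ge c''n^2\sqrt{\mathcal L}$. Since $n^2\sqrt{\mathcal L}\gg n$, we get $s=\Omega(n^2\sqrt{\mathcal L})$. Remark~\ref{rem:L-equivalence} then converts $\mathcal L$ into $\Theta(\log(n^4/\eps))$, taking care at the borderline range where $\mathcal L$ is only bounded below by $1$.

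The bound does not depend on the ancilla count or on $\alpha$. Ancillas are bounded by compression, and $\alpha$ is absorbed into the free scalar $z$ in \eqref{eq:SU}.

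\textbf{Expected main obstacle.} The hardest part is the constant bookkeeping in Step 3, in two places.
\begin{itemize}
\item The quadratic term $2Q^2$ in the circuit count must not swamp the linear $sQ$ term in a way that weakens the bound. Since $Q^2\approx 4(n+s)^2$, this only changes constants, and the inequality $(n+s)^2\gtrsim n^4\mathcal L$ is robust.
\item The regime of small $\mathcal L$, where $\eps$ is close to $c_\star\sqrt n/2$, needs care. There $\mathcal L$ is $\Theta(1)$ rather than $\Theta(\log(n^4/\eps))$. I would handle it by lower-bounding $\mathcal L$ by a constant times $\log_2(n^4/\eps)$, using $\log_2(n^4/\eps)\le 4\log_2 n+O(1)$ relative to $\tfrac12\log_2 n$. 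This is valid for $n\ge16$.
\end{itemize}
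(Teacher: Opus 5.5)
Your proposal is correct and follows the paper's proof essentially step for step. You compress via Theorem~\ref{thm:compression} to $Q=2(n+s)$ qubits, pad, cover $\cK$ by the regions $S_{[U]}$, and compare Proposition~\ref{prop:covering} with Lemma~\ref{lem:circuit-count}. The only difference is that you bound $2Q^2+3Q+s(2Q+1)\le c(n+s)^2$ instead of solving $12s^2+(20n+7)s+(8n^2+6n)\ge(d-1)\mathcal L-4\log_2 d$ exactly, and this changes only constants. One small slip: the witness in \eqref{eq:SU} should be $z=\alpha'$, not $1/\alpha'$, because the compressed circuit satisfies $\|H(x)-\alpha' A_{U'}\|\le\eps$.
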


\begin{proof}
Suppose every $H\in\cH_n$ admits an $\eps$ block encoding by a unitary
Clifford$+T$ circuit with at most $s$ $T$ gates, with arbitrary ancilla count
and arbitrary subnormalization.  By Theorem~\ref{thm:compression} each such
encoding may be replaced by a block encoding of the same target, with the same $\eps$, with
$T$-count at most $s$, and on at most
\[
  Q:=2(n+s)
\]
total qubits.  Padding with idle clean qubits makes every circuit act on
exactly $Q$ qubits without changing its encoded block.  Hence the $\Ncir(Q,s)$  encoded regions
$S_{[U]}$ cover $\cK$, and Proposition~\ref{prop:covering} together with
Lemma~\ref{lem:circuit-count} gives
\[
  2Q^{2}+3Q+s(2Q+1)\ \ge\ (d-1)\mathcal L-4\log_2 d.
\]
Substituting $Q=2n+2s$ and expanding,
\[
  8(n+s)^{2}+6(n+s)+s(4n+4s+1)
  =
  12s^{2}+(20n+7)s+(8n^{2}+6n)
  \ge (d-1)\mathcal L-4\log_2 d.
\]
Solving this quadratic inequality for $s$ gives
\begin{equation}\label{eq:secondq-quadratic-root}
  s\ \ge\
  \frac{-(20n+7)+
  \sqrt{48\bigl((d-1)\mathcal L-4\log_2 d\bigr)+16n^{2}-8n+49}}{24}.
\end{equation}
Dropping the nonnegative term $16n^{2}-8n+49$ under the square root in
\eqref{eq:secondq-quadratic-root} only decreases the right-hand side, so
\[
  s\ \ge\
  \frac{\sqrt{48\bigl((d-1)\mathcal L-4\log_2 d\bigr)}-(20n+7)}{24}.
\]
For the square-root term, Remark~\ref{rem:L-equivalence} and $d=\Theta(n^4)$
imply
\[
  \sqrt{48\bigl((d-1)\mathcal L-4\log_2 d\bigr)}
  =\Theta\!\left(n^2\sqrt{\log(n^4/\eps)}\right).
\]
Since $n=O(n^2\sqrt{\log(n^4/\eps)})$ in the stated range, subtracting
$(20n+7)/24$ does not change the lower-bound order.
Thus $T^{\BE}_{\eps}\bigl(\cH_n\bigr)
  \ =\ \Omega\!\left(n^{2}\sqrt{\log(n^{4}/\eps)}\right)$.
\end{proof}

Liu \emph{et al.}~\cite{liu2025block} construct block encodings for general second-quantized
Hamiltonians with $L=O(n^{4})$ interaction coefficients and optimize a
SELECT--SWAP lookup tradeoff to obtain
\begin{equation}\label{eq:liu-upper}
  O\!\left(n+\sqrt{L\,m_b}\right)=O\!\left(n^{2}\sqrt{\log(n^{4}/\eps)}\right)
\end{equation}
$T$ gates in the regime $\log(n^{4}/\epsilon)=O(n^{4})$, with
$m_b=\Theta(\log(n^{4}/\eps))$.; the additive
$O(n)$ is absorbed by the displayed $O(n^2\sqrt{\log(n^4/\eps)})$ bound.
This upper bound applies to the present family because its number of interaction
terms is $O(n^4)$, while the lower bound above uses the stronger fact that the
coefficient space has dimension $d=\Theta(n^4)$.

For reference, we provide the T-count of a set of known molecules in second quantization, and compare the construction in Ref. \cite{PhysRevX.8.041015} to our lower bound in Fig \ref{fig:distribution}. The molecules are extracted using the software PySCF in near equilibrium positions under the STO-3G basis set. Then a Jordan Wigner transformation is implemented to get the Pauli string corresponding to the description of the Hamiltonian which are the number of terms to block encode.

\begin{figure}[h]
    \centering
    \includegraphics[width=0.8\textwidth]{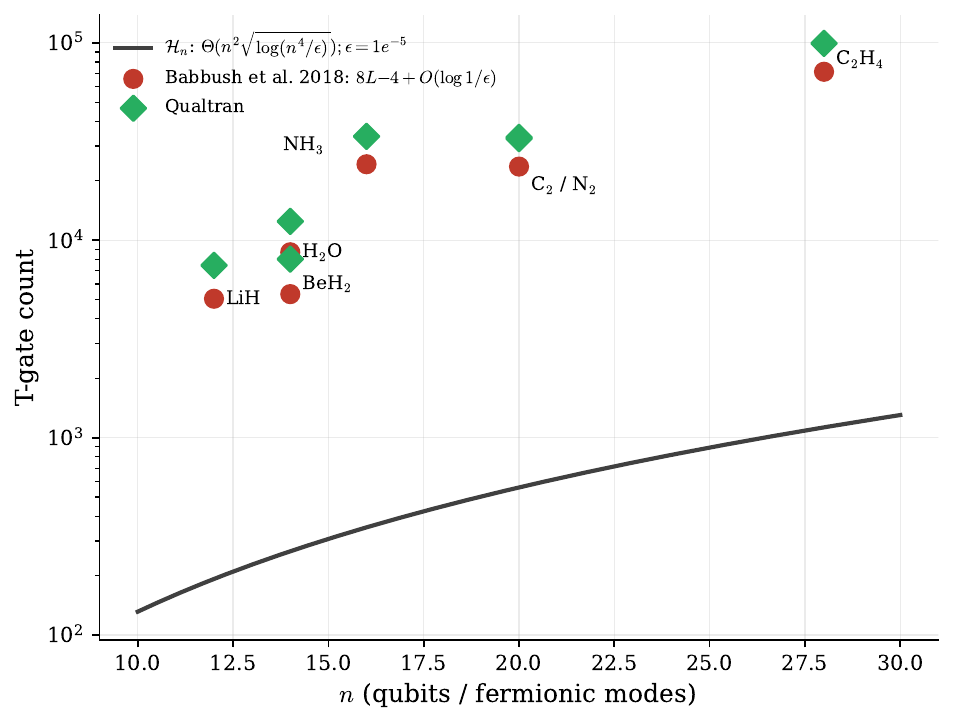}
    \caption{T-gate counts for different chemistry Hamiltonian in second quantization with STO-3G basis, comparing out lower bound (black line) with prior constructions in Ref. \cite{PhysRevX.8.041015} and Qualtran \cite{harrigan2024qualtran}. The constant factor $\sqrt{48}/24 \approx 0.288$ is included for our lower bound. $\epsilon=10^{-5}$ is chosen for all cases.}
    \label{fig:distribution}
\end{figure}

\begin{theorem}[Optimal scaling for the general family]\label{thm:secondq-main}
Let $\cH_n$ be the family defined in Eq.~\eqref{eq:secondq-H} of second-quantized
Hamiltonians $H(h,V)$ with $|h_{pq}|\le1$ for all
$1\le p,q\le n$ and $|V_{PR}|\le1$ for all $P,R\in\cI$. For every fixed $0<\epsilon \leq 1$, and for all sufficiently large $n$,
\begin{equation}\label{eq:secondq-theta}
  T^{\BE}_{\eps}\bigl(\cH_n\bigr)=\Theta\!\left(n^{2}\sqrt{\log(n^{4}/\eps)}\right)
\end{equation}
in the unitary Clifford$+T$ circuit model of Section~\ref{sec:framework}.
\end{theorem}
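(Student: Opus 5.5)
The plan is to obtain the two directions of \eqref{eq:secondq-theta} separately and then check that their hypotheses agree for fixed $\eps$ and large $n$.

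\emph{Lower bound.} We invoke Theorem~\ref{thm:secondq-lower}, whose hypotheses are $n\ge16$ and $0<\eps\le\min\{1,c_\star\sqrt n/2\}$. For a fixed $\eps\in(0,1]$ the condition $\eps\le c_\star\sqrt n/2$ is equivalent to $n\ge 4\eps^2/c_\star^2$, which holds once $n\ge\max\{16,4/c_\star^2\}$. The threshold depends only on the absolute constant $c_\star$ and on $\eps\le 1$. In this range Theorem~\ref{thm:secondq-lower} gives $T^{\BE}_{\eps}(\cH_n)=\Omega(n^2\sqrt{\log(n^4/\eps)})$. The step to check is that the hidden constant is uniform in $n$. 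The proof of Theorem~\ref{thm:secondq-lower} uses Remark~\ref{rem:L-equivalence}, which requires $\log_2(n^4/\eps)\ge 8\log_2(1/c_\star)$; this is automatic for $n$ beyond an absolute constant since $\eps\le1$. It also absorbs the $(20n+7)/24$ correction and the $4\log_2 d$ term, both of which are lower order than $n^2\sqrt{\log(n^4/\eps)}$ for large $n$. So I would state an explicit $n_0$ and a constant $c>0$ with $T^{\BE}_\eps(\cH_n)\ge c\,n^2\sqrt{\log(n^4/\eps)}$ for all $n\ge n_0$.

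\emph{Upper bound.} We take the construction of Liu \emph{et al.}~\cite{liu2025block}, quoted in \eqref{eq:liu-upper}. It gives a block encoding with $O(n+\sqrt{Lm_b})$ $T$ gates, where $L=O(n^4)$ and $m_b=\Theta(\log(n^4/\eps))$, valid when $\log(n^4/\eps)=O(n^4)$; for fixed $\eps$ and large $n$ this condition holds trivially. To apply it to every $H\in\cH_n$ we need the following:
\begin{itemize}[nosep]
\item the Jordan--Wigner expansion of $H(h,V)$ has $L=O(n^4)$ terms with coefficient magnitudes bounded by a constant, since $|h_{pq}|,|V_{PR}|\le1$;
\item the bits of precision $m_b$ used to load coefficients suffice to keep the total operator-norm error at most $\eps$.
\end{itemize}
The second point is where $\sqrt{\log(n^4/\eps)}$ enters. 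Rounding each of $L$ coefficients to $m_b$ bits gives error $O(L\,2^{-m_b})$ in the LCU norm, so we choose $m_b=\lceil\log_2(CL/\eps)\rceil$. The LCU normalization $\alpha=O(n^4)$ is allowed, since the definition \eqref{eq:worstcase} places no constraint on $\alpha$. The precision and rotation-synthesis details belong to the cited construction, so I would only verify that its stated error model is operator-norm error $\eps$ on $H$ itself, as in \eqref{eq:block-def}, rather than error on $H/\alpha$. Finally, the additive $O(n)$ is dominated by $n^2\sqrt{\log(n^4/\eps)}$.

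Combining the two directions for $n\ge n_0$ yields $\Theta$. I expect the main obstacle to be this error-model matching in the upper bound. If the cited construction controls error in $H/\alpha$, then with $\alpha=O(n^4)$ the precision target becomes $\eps/n^4$. This only changes $m_b$ by $O(\log n)$, and $\log(n^8/\eps)=\Theta(\log(n^4/\eps))$, so the scaling is unaffected; I would make this reduction explicit.
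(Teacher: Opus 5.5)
Your proposal is correct and matches the paper's proof, which simply combines the lower bound of Theorem~\ref{thm:secondq-lower} with the upper bound \eqref{eq:liu-upper} of Liu \emph{et al.} Your additional checks are details the paper leaves implicit: that $n\ge\max\{16,4/c_\star^2\}$ meets the hypothesis $\eps\le c_\star\sqrt n/2$ for fixed $\eps\le1$, and that an error target of $\eps/\alpha$ with $\alpha=O(n^4)$ changes $m_b$ only by $O(\log n)$.
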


\begin{proof}
Combine \eqref{eq:secondq-lb} with \eqref{eq:liu-upper}.
\end{proof}

\begin{corollary}[Fixed normalization]
\label{cor:secondq-fixed-normalization}
There is a normalization
\[
  \overline\alpha_{2,n}=\Theta(n^4)
\]
that is independent of the coefficients $h$ and $V$.
For every fixed $0<\eps\le1$, and for all sufficiently large $n$,
\[
  T^{\BE}_{\eps,\overline\alpha_{2,n}}
  \bigl(\cH_n\bigr)
  =
  \Theta\!\left(
    n^2\sqrt{\log(n^4/\eps)}
  \right).
\]
The matching construction uses
\[
  O\!\left(
    n^2\sqrt{\log(n^4/\eps)}
  \right)
\]
ancillas.
\end{corollary}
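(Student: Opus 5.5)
The proof has two halves. The lower bound is inherited for free from the variable-normalization result; the upper bound needs a construction whose normalization does not depend on the coefficients.

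\emph{Lower bound.} Any block encoding with the prescribed normalization $\overline\alpha_{2,n}$ is in particular admissible in the infimum of \eqref{eq:worstcase}. Hence
\[
T^{\BE}_{\eps,\overline\alpha_{2,n}}(\cH_n)\ \ge\ T^{\BE}_{\eps}(\cH_n),
\]
as noted in Section~\ref{sec:framework}. For fixed $\eps\in(0,1]$ and $n\ge16$ large enough that $\eps\le c_\star\sqrt n/2$, Theorem~\ref{thm:secondq-lower} then gives the $\Omega(n^2\sqrt{\log(n^4/\eps)})$ bound directly.

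\emph{Upper bound.} I will use the LCU construction of Liu \emph{et al.}~\cite{liu2025block} and track its normalization.

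First, after the Jordan--Wigner transform, $H(h,V)$ is a linear combination of $L=O(n^4)$ Pauli strings. Each $a_p^\dagger a_q$ expands into $O(1)$ Pauli strings with coefficients of modulus at most $1/2$. Each $b_P^\dagger b_R$ expands into at most $16$ strings with coefficients of modulus $1/16$. The resulting Pauli coefficients $c_j$ depend linearly on $(h,V)$.

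Second, I fix the normalization so it is independent of the instance. Rather than the instance-dependent $\lambda=\sum_j|c_j|$, I define
\[
\overline\alpha_{2,n}:=\sum_{\text{terms}}\sup_{\cK}|c_j|.
\]
By the entrywise bounds, every $|c_j|$ is at most a constant, so $\overline\alpha_{2,n}=\Theta(n^4)$. The lower estimate $\overline\alpha_{2,n}=\Omega(n^4)$ holds because there are $\Theta(n^4)$ distinct two-body terms with nonvanishing possible coefficients.

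Third, I let PREPARE load the amplitudes $\sqrt{|c_j|/\overline\alpha_{2,n}}$, together with one extra "slack" amplitude
\[
\sqrt{1-\lambda/\overline\alpha_{2,n}}
\]
on a flagged index. SELECT acts on the flagged index by a block-encoded zero, for instance by flipping an additional ancilla so that it contributes nothing to the $\ket{0}$ block. The encoded block is then exactly $H/\overline\alpha_{2,n}$ up to the precision of the amplitude loading.

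Fourth, I implement PREPARE via the SELECT--SWAP QROM-based coherent alias sampling with $m_b=\Theta(\log(n^4/\eps))$ bits of precision and $\Theta(\sqrt{L/m_b})$-type block size. This gives $T$-count $O(n+\sqrt{L m_b})=O(n^2\sqrt{\log(n^4/\eps)})$, as in \eqref{eq:liu-upper}, and uses $O(\sqrt{L m_b})=O(n^2\sqrt{\log(n^4/\eps)})$ ancillas. The SELECT step over $L$ Pauli strings costs $O(L)$ $T$ gates if done naively. That would exceed the target, so I need the structured fermionic SELECT, which costs $O(n)$ per unary-iterated index rather than per term, or equivalently the authors' construction.

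\emph{Main obstacle: the error budget.} Rounding each amplitude to $m_b$ bits perturbs the block by an operator-norm amount of at most $\overline\alpha_{2,n}\cdot O(L\,2^{-m_b})$. Keeping this at most $\eps$ with $\overline\alpha_{2,n}=\Theta(n^4)$ and $L=\Theta(n^4)$ forces $m_b=\Theta(\log(n^8/\eps))$, which is still $\Theta(\log(n^4/\eps))$. Finally, the slack amplitude must be loaded with the same precision, at a cost of one extra QROM entry.
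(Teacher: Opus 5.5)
Your proposal is correct, but the upper bound takes a different route from the paper's.

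Your lower bound matches the paper exactly: $T^{\BE}_{\eps,\overline\alpha_{2,n}}(\cH_n)\ge T^{\BE}_{\eps}(\cH_n)$, combined with Theorem~\ref{thm:secondq-lower}.

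The paper's upper bound adds no slack. It observes that the Liu \emph{et al.} construction is table-based, and its normalization is linear in the table size times a coefficient scale. So the paper pads every instance to the maximal table size $L_{\max}=n^2+\binom{n}{2}^{2}$ with zero entries and fixes the scale to $1$, which the bounds $|h_{pq}|,|V_{PR}|\le1$ allow. This already makes the normalization $\Theta(L_{\max})$ and instance-independent, and the $T$-count $O(n+\sqrt{L_{\max}m_b})$ and ancilla count carry over unchanged.

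You instead start from an alias-sampling LCU whose native normalization $\lambda=\sum_j|c_j|$ depends on the instance. You force it to $\overline\alpha_{2,n}$ with a slack amplitude routed to a block-encoded zero. This is essentially the device the paper uses for the Kitaev family (two cancelling $\pm I$ terms, Appendix~\ref{app:common-normalization}).

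Your route buys three things. It is more general, since any LCU with a uniform bound on $\lambda$ works. It gives an explicit error budget, showing that the $\Theta(n^4)$ normalization only raises $m_b$ to $\Theta(\log(n^8/\eps))=\Theta(\log(n^4/\eps))$. And it justifies the $O(\sqrt{Lm_b})$ ancilla count via QROAM, which the paper only asserts.

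The price is that the $T$-count rests on an $O(n)$-cost SELECT (the unary-iteration Majorana SELECT in the line of Ref.~\cite{PhysRevX.8.041015}), which you invoke rather than build. For that SELECT to apply, the alias-sampled index must be the structured, uncombined label $(P,R,\text{Majorana pattern})$. Its real coefficients should come from pairing $V_{PR}$ with $V_{RP}=\overline{V_{PR}}$, so no arbitrary phases must be applied.

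Only in this uncombined expansion is your claim that every $|c_j|$ is $O(1)$ true. After merging like Pauli strings, the coefficients of $I$ and $Z_p$ grow like $n^2$ and $n$. Still, $\overline\alpha_{2,n}=O(n^4)$ holds either way by the triangle inequality.
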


\begin{proof}
The lower bound follows from the unrestricted-normalization result.
The padded version of the construction of Liu \emph{et al.} gives the
upper bound \cite{liu2025block}; see
Appendix~\ref{app:secondq-common-normalization}.
\end{proof}

\section{Bond-dependent Kitaev honeycomb Hamiltonians}\label{sec:kitaev}

We now consider a family of Pauli spin Hamiltonians based on the
Kitaev honeycomb interaction pattern. Let $G=(V,E)$ be a honeycomb graph with $n=|V|\ge4$ vertices and bipartition
$V=A\sqcup B$, whose edge set decomposes into three perfect matchings
$E=E_x\sqcup E_y\sqcup E_z$; thus every vertex has exactly one neighbour of
each bond type, and $|E_x|=|E_y|=|E_z|=n/2$.  For an edge $e=(u,v)$ set
\[
  P_e=\begin{cases}
    X_uX_v,&e\in E_x,\\
    Y_uY_v,&e\in E_y,\\
    Z_uZ_v,&e\in E_z,
  \end{cases}
\]
following the bond-disordered variant of the Kitaev honeycomb interaction
pattern \cite{kitaev2006anyons,yamada2020anderson}. And we define the family
\begin{equation}\label{eq:kitaev-H}
  \cK_n=\Bigl\{H_K(J)=\sum_{e\in E}J_eP_e:\ J_e\in[-1,1]\Bigr\},
  \qquad M:=|E|=\tfrac32n .
\end{equation}

We prove that, for the family \eqref{eq:kitaev-H},
\[
  T^{\BE}_{\eps}(\cK_n)=\Theta\!\left(n+\log\frac1\eps\right)
\]
in the parameter range stated in Theorem~\ref{thm:kitaev-optimal}.

The coefficient-space covering argument of Section~\ref{sec:secondq} is not the
right tool here: it is tailored to the full second-quantized coefficient body,
where the number of independent parameters is quartic in the system size,
whereas \eqref{eq:kitaev-H} has only $M=\Theta(n)$ couplings and a different
Pauli structure.  Even if one had an analogous coefficient-space estimate with
only $M$ parameters, the resulting comparison with the circuit count would have
the schematic form
\[
  12s^{2}+O(ns)+O(n^{2})\ \gtrsim\ M\,\mathcal L .
\]
Since $M=\Theta(n)$ and $\mathcal L$ is logarithmic in $n$ and $1/\eps$,
solving such an inequality gives a root of the form
\[
  s\ \gtrsim\ -O(n)+\sqrt{O(n^{2})+O(n\mathcal L)} ,
\]
and the negative linear term cannot be controlled by the square-root term in
the parameter range relevant here.  Thus this route does not produce the desired
linear lower bound.  We instead use two model-specific arguments: a
stabilizer-nullity bound (Section~\ref{subsec:kitaev-nullity}) for the linear
term, and a reduction to one-qubit state preparation
(Section~\ref{subsec:kitaev-precision}) for the logarithmic term.

\subsection{A linear lower bound from stabilizer nullity}\label{subsec:kitaev-nullity}

Stabilizer nullity is especially useful for proving linear lower bounds on
non-Clifford resources: it was introduced and used by Beverland
\emph{et al.}~\cite{beverland2020lower} to lower bound resource conversions
and to prove linear lower bounds for several gates, e.g.\ multiply
controlled $Z$ gates.  The same technique for proving that exact multi-qubit
Toffoli gates have linear complexity is also a starting point in the recent
work of Gosset \emph{et al.} on approximate Toffoli
implementations~\cite{gosset2025toffoli}.  We show that this monotone is
effective for the Kitaev family: for suitable couplings, applying $H_K(J)$ to a
stabilizer input produces a state with maximal stabilizer nullity.

For a $Q$-qubit state $\sigma$, define the stabilizer nullity
\begin{equation}\label{eq:nullity}
  \nu(\sigma)
  =
  Q-\log_2\Bigl|\bigl\{
  P\in\{\pm1\}\cdot\{I,X,Y,Z\}^{Q}:P\sigma=\sigma
  \bigr\}\Bigr| .
\end{equation}
For a pure state we write $\nu(\ket\psi):=\nu(\proj\psi)$.
We use the following standard properties \cite{beverland2020lower}: $\nu$
vanishes on stabilizer states, is invariant under Clifford unitaries, is
additive under tensoring with a stabilizer state, does not increase under
postselected Pauli measurement, and increases by at most one under a $T$ or
$T^{\dagger}$ gate.  Consequently a state obtained from a stabilizer input by a
unitary Clifford$+T$ circuit with $s$ $T$ gates followed by Pauli postselection
has nullity at most $s$.

Let the local bases on the two sublattices be
\[
  \ket{0_A}=\frac{\ket{0_c}+\ket{1_c}}{\sqrt2},\quad
  \ket{1_A}=\frac{\ket{0_c}-\ket{1_c}}{\sqrt2},
  \qquad
  \ket{0_B}=\frac{\ket{0_c}+i\ket{1_c}}{\sqrt2},\quad
  \ket{1_B}=\frac{\ket{0_c}-i\ket{1_c}}{\sqrt2},
\]
where the subscript $c$ denotes the computational basis.  Fix the product
stabilizer state
\begin{equation}\label{eq:kitaev-input}
  \ket\Phi=\bigotimes_{u\in A}\ket{0_A}_u\ \bigotimes_{v\in B}\ket{0_B}_v .
\end{equation}
In this local product basis $\ket\Phi$ is the all-zero string.  Write
$\ket{e_w}$ for the basis vector with a single $1$ at vertex $w$, meaning
$\ket{1_A}$ if $w\in A$ and $\ket{1_B}$ if $w\in B$, and identify basis vectors
with elements of $\F_2^{V}$.

\begin{lemma}[Action of the bond terms]\label{lem:bond-action}
With the above conventions, for $e=(u,v)$ with $u\in A$ and $v\in B$,
\begin{equation}\label{eq:bond-action}
  P_e\ket\Phi=
  \begin{cases}
    i\,\ket{e_v},& e\in E_x,\\
    -i\,\ket{e_u},& e\in E_y,\\
    \ket{e_u+e_v},& e\in E_z .
  \end{cases}
\end{equation}
\end{lemma}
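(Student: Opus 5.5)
The proof is a direct single-qubit computation. Because $\ket\Phi$ is a product state and each $P_e$ is a tensor product of two single-qubit Paulis acting on the endpoints $u\in A$ and $v\in B$, we have
\[
P_e\ket\Phi=(\sigma_u\ket{0_A})\otimes(\sigma_v\ket{0_B})\otimes\ket{\text{rest}},
\]
where $\sigma\in\{X,Y,Z\}$ is the bond type and $\ket{\text{rest}}$ is the unchanged product of local $\ket{0}$ states on all other vertices. It therefore suffices to tabulate the action of $X,Y,Z$ on $\ket{0_A}=\ket{+}$ and on $\ket{0_B}=\ket{+i}$, and to express each result in the corresponding local basis.

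\emph{Sublattice $A$.} Using $Y\ket{0_c}=i\ket{1_c}$ and $Y\ket{1_c}=-i\ket{0_c}$:
\begin{itemize}
\item $X\ket{0_A}=\ket{0_A}$;
\item $Y\ket{0_A}=(i\ket{1_c}-i\ket{0_c})/\sqrt2=-i\ket{1_A}$;
\item $Z\ket{0_A}=\ket{1_A}$.
\end{itemize}

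\emph{Sublattice $B$.}
\begin{itemize}
\item $X\ket{0_B}=(\ket{1_c}+i\ket{0_c})/\sqrt2=i(\ket{0_c}-i\ket{1_c})/\sqrt2=i\ket{1_B}$;
\item $Y\ket{0_B}=(i\ket{1_c}+i(-i)\ket{0_c})/\sqrt2=\ket{0_B}$;
\item $Z\ket{0_B}=\ket{1_B}$.
\end{itemize}

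Multiplying the two endpoint factors gives each case of \eqref{eq:bond-action}:
\begin{itemize}
\item For $e\in E_x$: $X_uX_v\ket\Phi=\ket{0_A}_u\otimes i\ket{1_B}_v\otimes\cdots=i\ket{e_v}$.
\item For $e\in E_y$: $Y_uY_v\ket\Phi=(-i)\ket{1_A}_u\otimes\ket{0_B}_v\otimes\cdots=-i\ket{e_u}$.
\item For $e\in E_z$: $Z_uZ_v\ket\Phi=\ket{1_A}_u\otimes\ket{1_B}_v\otimes\cdots=\ket{e_u+e_v}$.
\end{itemize}

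There is no real obstacle. The only point requiring care is phase bookkeeping, in particular the factor of $i$ produced by $X$ on $\ket{+i}$ and by $Y$ on $\ket{+}$. The phase convention for the basis states $\ket{1_A}$ and $\ket{1_B}$ is fixed by their stated definitions, so I would verify each phase directly against those definitions, as above.
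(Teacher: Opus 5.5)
Your proposal is correct and follows the paper's proof essentially verbatim. The paper also tabulates the action of $X,Y,Z$ on $\ket{0_A}$ and $\ket{0_B}$, obtaining the same phases $-i$ for $Y\ket{0_A}$ and $i$ for $X\ket{0_B}$, and then multiplies the two endpoint factors using that $P_e$ acts trivially on all other vertices.
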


\begin{proof}
On $A$ we have $X\ket{0_A}=\ket{0_A}$,
$Y\ket{0_A}=-i\ket{1_A}$ and $Z\ket{0_A}=\ket{1_A}$.  On $B$ we have
$Y\ket{0_B}=\ket{0_B}$, $X\ket{0_B}=i\ket{1_B}$ and
$Z\ket{0_B}=\ket{1_B}$.  Applying these to the two factors of $P_e$ and using
that $P_e$ acts as the identity on all other vertices gives
\eqref{eq:bond-action}.
\end{proof}

Since each $v\in B$ lies on exactly one $x$-bond and each $u\in A$ on exactly
one $y$-bond, the $n$ weight-one strings $\{e_w:w\in V\}$ all occur in
\eqref{eq:bond-action}, and they are distinct from the $n/2$ weight-two strings
coming from $E_z$.  Hence, for $J$ with all $J_e\ne0$, the vector
$H_K(J)\ket\Phi$ has support exactly the following set of bit-string labels in
the local product basis:
\begin{equation}\label{eq:support}
  \cS=\{e_w:w\in V\}\cup\{e_u+e_v:(u,v)\in E_z\},
  \qquad |\cS|=M,
\end{equation}
with all $M$ amplitudes of modulus $|J_e|$.  Thus $\cS$ consists of all
single-vertex excitations, together with the two-vertex excitations supported
on the endpoints of $z$-bonds.

\begin{lemma}[Trivial output stabilizer]\label{lem:kitaev-trivial-stab}
Let $n\ge4$ and let all $J_e\ne0$.  Then the normalized state
\[
  \ket{\Psi_J}=\frac{H_K(J)\ket\Phi}{\|H_K(J)\ket\Phi\|}
\]
is stabilized by no Pauli operator other than the identity; equivalently
$\nu(\ket{\Psi_J})=n$.
\end{lemma}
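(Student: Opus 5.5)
The plan is to move to a frame in which $\ket\Phi$ and its local basis become the computational basis, and then read off the stabilizer condition directly from the support $\cS$ in \eqref{eq:support}. Let $C_{\rm loc}=\bigotimes_{u\in A}C_A\otimes\bigotimes_{v\in B}C_B$ be the product of single-qubit Cliffords that maps $\ket{0_c},\ket{1_c}$ to $\ket{0_A},\ket{1_A}$ on $A$ (a Hadamard) and to $\ket{0_B},\ket{1_B}$ on $B$ (for instance $SH$). Conjugation by $C_{\rm loc}$ permutes the set $\{\pm1\}\cdot\{I,X,Y,Z\}^{n}$ bijectively. Hence the stabilizer group of $\ket{\Psi_J}$ has the same size as that of
\[
  \ket{\psi}:=C_{\rm loc}^\dagger\ket{\Psi_J}\propto\sum_{x\in\cS}c_x\ket{x},
\]
where every $c_x\neq 0$ because all $J_e\ne0$, by Lemma~\ref{lem:bond-action} and the discussion after it. It therefore suffices to show that the only $P\in\{\pm1\}\cdot\{I,X,Y,Z\}^n$ with $P\ket\psi=\ket\psi$ is $P=I$. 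Then the stabilizer group has size $1$, and \eqref{eq:nullity} gives $\nu=n-0=n$.

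Write such a $P$ as $\lambda X^{a}Z^{b}$ with $a,b\in\F_2^{V}$ and $\lambda\in\{\pm1,\pm i\}$; this absorbs any $Y$ factors. Since $P\ket{x}=\lambda(-1)^{b\cdot x}\ket{x+a}$ and the $c_x$ are nonzero, $P\ket\psi=\ket\psi$ forces the support to be translation invariant, $\cS+a=\cS$.

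\textbf{Step 1: the $X$-part vanishes.} I rule out $a\neq0$ by a weight argument, using that $\cS$ contains exactly the strings of weight one and the $z$-bond pairs, and in particular $0\notin\cS$.
\begin{itemize}[leftmargin=*]
\item If $|a|=1$, say $a=e_u$, then $e_u+a=0\notin\cS$.
\item If $|a|=2$, pick $w\notin\supp a$, which exists since $n\ge4$. Then $e_w+a$ has weight $3$ and is not in $\cS$.
\item If $|a|\ge3$, pick $w\notin\supp a$ when one exists. Then $e_w+a$ has weight at least $4$, so it is not in $\cS$. If $\supp a=V$, then $e_w+a$ has weight $n-1\ge3$ for any $w$, so again it is not in $\cS$.
\end{itemize}
Hence $a=0$.

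\textbf{Step 2: the diagonal part is trivial.} Now $P=\lambda Z^{b}$, and the condition becomes $\lambda(-1)^{b\cdot x}=1$ for every $x\in\cS$.
\begin{itemize}[leftmargin=*]
\item Taking $x=e_w$ gives $\lambda(-1)^{b_w}=1$ for all $w$. So $\lambda=\pm1$ and all $b_w$ are equal.
\item Taking $x=e_u+e_v$ for a $z$-bond gives $\lambda(-1)^{b_u+b_v}=\lambda=1$, using the equality of the $b_w$ from the first item.
\end{itemize}
Then $\lambda=1$, and the first item yields $b_w=0$ for all $w$. Thus $P=I$.

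The only delicate point is Step 1: it must cover all weights of $a$, and it uses $n\ge4$ to find a vertex outside $\supp a$, or to make the weight $n-1$ large enough when $\supp a=V$. Step 2 is where the two-body $z$-bond strings are essential. Without them, $P=-Z^{\otimes n}$ would stabilize the weight-one part of the state.
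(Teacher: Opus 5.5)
Your proof is correct and follows essentially the same route as the paper. Translation invariance $\cS+a=\cS$ of the support eliminates the $X$-part by a Hamming-weight argument that uses $n\ge4$. The weight-one strings together with a $z$-bond string then force the diagonal part to be $I$.

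The differences are cosmetic. You make the change to the local frame explicit via $C_{\rm loc}$, which the paper leaves implicit when it says $X^r$ translates the local product basis. You split Step~1 by $|a|$ rather than first deducing $2\le|r|\le3$ from a vertex in $\supp(r)$.
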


\begin{proof}
Let $P=\zeta X^{r}Z^{z}$ with $\zeta$ a phase, $r,z\in\F_2^{V}$, and suppose
$P\ket{\Psi_J}=\ket{\Psi_J}$.  For $r\in\F_2^V$ write
$\cS+r:=\{s+r:s\in\cS\}$, with addition in $\F_2^V$.  Since $X^{r}$ permutes the
local product basis by translation and $Z^{z}$ acts diagonally, the support of
$P\ket{\Psi_J}$ is $\cS+r$.  Equality of the two states forces
\begin{equation}\label{eq:support-invariance}
  \cS+r=\cS .
\end{equation}

\emph{Step 1: $r=0$.}  For a bit string $r\in\F_2^V$, let
$\supp(r)=\{u\in V:r_u=1\}$ and let $|r|$ denote its Hamming weight.  Suppose
$r\ne0$ and pick $u\in\supp(r)$.  Then
$e_u\in\cS$, and by \eqref{eq:support-invariance} also $e_u+r\in\cS$.  Every
element of $\cS$ has Hamming weight $1$ or $2$, so
\begin{equation}\label{eq:weight-constraint}
  |e_u+r|=|r|-1\in\{1,2\},\qquad\text{and hence}\qquad 2\le |r|\le3 .
\end{equation}

Choose $w\notin\supp(r)$; such a vertex exists because $n\ge4>3\ge|r|$.  Then
$e_w\in\cS$ and $|e_w+r|=|r|+1\in\{3,4\}$, so $e_w+r\notin\cS$, contradicting
\eqref{eq:support-invariance}.

Hence $r=0$ and $P=\zeta Z^{z}$; Hermiticity of a stabilizing Pauli forces
$\zeta=\pm1$.

\emph{Step 2: $z=0$.}  Write $\ket{\Psi_J}=\sum_{s\in\cS}c_s\ket s$ with all
$c_s\ne0$.  Then $\zeta Z^{z}\ket{\Psi_J}=\ket{\Psi_J}$ means
$\zeta(-1)^{z\cdot s}=1$ for every $s\in\cS$, i.e.\ $z\cdot s$ is the same
element of $\F_2$ for all $s\in\cS$.  Applying this to the weight-one elements
$e_w$, $w\in V$, shows that $z_w$ is independent of $w$, so $z=0$ or
$z=(1,\ldots,1)$.  If $z=(1,\ldots,1)$ then $z\cdot e_w=1$ while
$z\cdot(e_u+e_v)=0$ for every $(u,v)\in E_z$.  But $\cS$ contains all
single-vertex strings $e_w$ and also at least one $z$-bond string
$e_u+e_v$, so $z\cdot s$ cannot be constant on $\cS$.  Hence $z=0$ and
$P=\zeta I$, and $\zeta=1$ because $-I$ stabilizes no nonzero vector.

Thus $\Stab(\ket{\Psi_J})=\{I\}$ and $\nu(\ket{\Psi_J})=n$.
\end{proof}

\begin{theorem}[Exact and constant-error lower bounds]\label{thm:kitaev-size}
Let $n\ge4$.
\begin{enumerate}[label=(\roman*),ref=(\roman*), leftmargin=*]
\item \label{kitaev-part1} Every exact unitary Clifford$+T$ block encoding of a Kitaev Hamiltonian
$H_K(J)$ with all $J_e\ne0$ has $T$-count at least $n$.
\item \label{kitaev-part2} For the uniform instance $H_{\mathrm{unif}}=\sum_{e\in E}P_e$, every
$(\alpha,a,\eps)$ block encoding with $\eps\le1/4$ satisfies $T(U)\ge n$,
independent of $a$ and $\alpha$.
\end{enumerate}
\end{theorem}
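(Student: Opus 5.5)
The plan is to reduce both parts to the stabilizer-nullity monotone, applied to the state that a block encoding produces from the stabilizer input $\ket\Phi$ of \eqref{eq:kitaev-input}. Let $U$ be an $(\alpha,a,\eps)$ block encoding with $T(U)=s$. Prepare the stabilizer state $\ket{0^a}\otimes\ket\Phi$, apply $U$, and postselect each ancilla on the $Z=+1$ outcome. This is a Pauli postselection. Whenever the outcome has nonzero probability, the post-measurement state is $\ket{0^a}\otimes\ket{\phi}$ with $\ket\phi:=A_U\ket\Phi/\|A_U\ket\Phi\|$. By the listed properties of $\nu$ from \cite{beverland2020lower} (zero on stabilizer inputs, at most $+1$ per $T$ gate, non-increasing under postselected Pauli measurement, additive with the stabilizer factor $\ket{0^a}$), we get $\nu(\ket\phi)\le s$.

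\emph{Part \ref{kitaev-part1}.} In the exact case we have $\alpha A_U=H_K(J)$. Since $H_K(J)\ket\Phi\neq0$ (its support is $\cS$ by \eqref{eq:support}), the postselection succeeds and $\ket\phi=\ket{\Psi_J}$ up to phase. Lemma~\ref{lem:kitaev-trivial-stab} gives $\nu(\ket{\Psi_J})=n$, hence $s\ge n$.

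\emph{Part \ref{kitaev-part2}.} Set $v:=H_{\mathrm{unif}}\ket\Phi$ and $w:=\alpha A_U\ket\Phi$. By \eqref{eq:bond-action}, $v$ has $M$ amplitudes of modulus one, so $\|v\|=\sqrt M$. The block-encoding condition gives $\|v-w\|\le\eps$, so $\|w\|\ge\sqrt M-\eps>0$ and the postselection succeeds. The standard estimate for normalized vectors then gives $\|\ket{\Psi}-e^{i\theta}\ket\phi\|\le 2\eps/\sqrt M$ for a suitable phase $\theta$, where $\ket\Psi$ is the normalized uniform state.

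Now suppose $s<n$. Then $\nu(\ket\phi)<n$, so some Pauli $P\neq I$ satisfies $P\ket\phi=\ket\phi$. Because $P$ is unitary,
\[
\|(P-I)\ket\Psi\|\le 2\|\ket\Psi-e^{i\theta}\ket\phi\|\le 4\eps/\sqrt M\le 1/\sqrt M .
\]
I would contradict this with a lower bound $\|(P-I)\ket\Psi\|>1/\sqrt M$ for every Pauli $P\neq I$, by a quantitative rerun of the two steps of Lemma~\ref{lem:kitaev-trivial-stab}. Write $P=\zeta X^rZ^z$ and $\ket\Psi=M^{-1/2}\sum_{s\in\cS}c_s\ket s$ with $|c_s|=1$.
\begin{itemize}[leftmargin=*]
\item If $r\neq0$, Step~1 of that lemma produces some $s\in\cS$ with $s+r\notin\cS$. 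The coordinate $\ket{s+r}$ of $(P-I)\ket\Psi$ equals the coefficient of $\ket s$ in $\ket\Psi$ up to a phase. In fact Step~1 yields at least two such labels: both $e_u+r$ for $u\in\supp(r)$ and $e_w+r$ for $w\notin\supp(r)$ can be excluded, giving $\|(P-I)\ket\Psi\|^2\ge 2/M$.
\item If $r=0$, then $(P-I)\ket\Psi$ has entries $M^{-1/2}c_s(\zeta(-1)^{z\cdot s}-1)$. Step~2 shows that $\zeta(-1)^{z\cdot s}\neq 1$ for at least one $s\in\cS$. Because $P$ is Hermitian, $\zeta=\pm1$, so each such entry has modulus $2M^{-1/2}$, giving $\|(P-I)\ket\Psi\|^2\ge 4/M$.
\end{itemize}
In both cases $\|(P-I)\ket\Psi\|\ge\sqrt2/\sqrt M>1/\sqrt M$, which is the desired contradiction. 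Hence $s\ge n$, independently of $a$ and $\alpha$.

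The main obstacle is the counting in the $r\neq0$ case, where the margin against $\eps\le1/4$ is tight. A single missing label only gives $\|(P-I)\ket\Psi\|^2\ge1/M$, which exactly meets the upper bound $1/\sqrt M$ at $\eps=1/4$, so I need either a second missing label or strictness in the normalization estimate. A second label can be extracted from Step~1 as indicated above, provided the two labels are distinct. Alternatively, the sharper bound $\|\ket\Psi-e^{i\theta}\ket\phi\|\le\eps/(\sqrt M-\eps)$, together with the phase-free form of the $r\neq0$ estimate, also closes the gap. I would use whichever route is cleaner.
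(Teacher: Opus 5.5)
Part~(i), and the reduction of part~(ii) to nullity of the postselected state $A_U\ket\Phi/\|A_U\ket\Phi\|$, are the paper's argument. For part~(ii) you prove the Pauli gap by a different route.

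The paper (Lemma~\ref{lem:pauli-gap}) uses Lemma~\ref{lem:kitaev-trivial-stab} only qualitatively and adds an arithmetic fact. The amplitudes of $\ket\Psi$ lie in $M^{-1/2}\{1,\pm i\}$ and Pauli matrix entries lie in $\{0,\pm1,\pm i\}$, so $\bra\Psi P\ket\Psi\in M^{-1}\Z$. Since this value is not $\pm1$, it obeys $|\bra\Psi P\ket\Psi|\le 1-1/M$, which gives $\dist(\ket\Psi,\ker(P\mp I))\ge 1/\sqrt{2M}$. You instead rerun Steps~1--2 quantitatively on $(P-I)\ket\Psi$. For a Hermitian Pauli, $\|(P-I)\ket\Psi\|=2\,\dist(\ket\Psi,\ker(P-I))$, so your target $\sqrt{2/M}$ is the same gap, and your final comparison is the paper's multiplied by two. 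The paper's route is shorter. Yours uses only the support $\cS$ and the moduli of the amplitudes, not their Gaussian-integer structure. It therefore extends to any coupling vector and gives $T(U)\ge n$ whenever $\min_e|J_e|>2\sqrt2\,\eps$. The discreteness argument is tied to lattice-valued couplings such as the uniform instance.

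One justification in your $r\neq0$ case is false as written: Step~1 does not exclude both $e_u+r$ and $e_w+r$. If $r=e_u+e_{u'}$, then $e_u+r=e_{u'}\in\cS$. Step~1 only guarantees that one of the two labels leaves $\cS$.

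You do not actually need a second label. Suppose $s\in\cS$ and $s+r\notin\cS$; then also $s\notin\cS+r$. So $(P-I)\ket\Psi$ has a coordinate of modulus $M^{-1/2}$ at $\ket{s+r}$, where only $P\ket\Psi$ contributes. It has another such coordinate at $\ket{s}\neq\ket{s+r}$, where only $-\ket\Psi$ contributes. This already gives $\|(P-I)\ket\Psi\|^2\ge 2/M$.

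Your fallback also works. The inequality $\bigl\|v/\|v\|-w/\|w\|\bigr\|\le\|v-w\|/\min\{\|v\|,\|w\|\}$ gives $\|\ket\Psi-\ket\phi\|\le\eps/(\sqrt M-\eps)$. Moreover, $2\eps/(\sqrt M-\eps)<1/\sqrt M$ for $\eps\le 1/4$, so a single coordinate of modulus $M^{-1/2}$ suffices. The proof closes either way.
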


\begin{proof}
For part~\ref{kitaev-part1}, let $U$ be an exact block encoding, so $\alpha A_U=H_K(J)$.  Starting from
the stabilizer state $\ket{0^a}\ket\Phi$ and postselecting the block ancillas
onto $\ket{0^a}$ gives the unnormalized output
\[
  (\proj{0^a}\otimes I)U(\ket{0^a}\ket\Phi)
  =
  \ket{0^a}\otimes A_U\ket\Phi
  =
  \frac{1}{\alpha}\ket{0^a}\otimes H_K(J)\ket\Phi .
\]
This vector is nonzero, and after normalization it is
$\ket{0^a}\otimes\ket{\Psi_J}$.  Hence
\[
  \nu(\ket{0^a}\otimes\ket{\Psi_J})
  =
  \nu(\ket{\Psi_J})
  =
  n
\]
by Lemma~\ref{lem:kitaev-trivial-stab}.  Since the input is a stabilizer state
and Pauli postselection does not increase nullity, the $s=T(U)$ non-Clifford
gates must satisfy $s\ge n$.

The proof of part~\ref{kitaev-part2} is given in
Appendix~\ref{app:kitaev-gap}.
\end{proof}

\subsection{A precision lower bound}\label{subsec:kitaev-precision}

The dependence on $\eps$ comes from a different instance of the family.  We
use the following result on postselected one-qubit state preparation, which we
quote in the form in which it is applied.

\begin{fact}[{\cite[Lemma~5.9]{beverland2020lower}}]\label{fact:bchk}
There are absolute constants such that the following holds.  Let $0<\delta<1/8$
and let $p$ satisfy $\delta<p<3\delta$.  Any stabilizer protocol that, from
stabilizer resources together with $k$ copies of the magic state $\ket T$,
prepares a one-qubit state within trace distance $\delta$ of $\proj{\psi_p}$
with $\ket{\psi_p}=\sqrt p\ket0+\sqrt{1-p}\ket1$, must satisfy
\[
  k\ \ge\ \tfrac13\log_2(1/\delta)-\tfrac23 .
\]
\end{fact}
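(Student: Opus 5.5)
Since this Fact is imported verbatim from \cite[Lemma~5.9]{beverland2020lower}, the proof in the paper will simply cite it. If I had to reprove it, I would use an algebraic-number argument. The idea is that any one-qubit state produced with $k$ copies of $\ket T$ has amplitudes of bounded algebraic complexity. A \emph{nonzero} quantity of that kind cannot be smaller than $2^{-O(k)}$. The target $\ket{\psi_p}$, however, forces such a quantity to be nonzero but of size $O(\delta)$.

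\textbf{Step 1 (normal form).} Model the protocol as an adaptive stabilizer protocol with postselection. Each accepted branch maps $\ket{T}^{\otimes k}\otimes\ket{0^m}$ by a Clifford, then postselects Pauli measurements. As in the $T$-gate picture of \eqref{eq:pauli-normal-form}, the unnormalized output vector then has the form $\ket{\phi}=2^{-r/2}\sum_j c_j\ket j$ with $c_j\in\Z[\zeta_8]$, where $\zeta_8=e^{i\pi/4}$. I would fix a normalization in which the exponent $r$ and the Galois conjugates of the $c_j$ are controlled by $O(k)$. This follows because each $\ket T=(\ket0+\zeta_8\ket1)/\sqrt2$ contributes one factor of $\sqrt2$ and entries in $\Z[\zeta_8]$, while stabilizer postselection contributes only powers of $\sqrt2$ that can be cleared from the denominators. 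A mixed output, from discarding qubits or classical randomness, is a convex combination of such pure branches. I would handle this by noting that if the mixture is within trace distance $\delta$ of the pure target, then some branch is within $O(\sqrt\delta)$ in fidelity terms. To keep the constant $\tfrac13$, I would instead work directly with the entry $\langle0|\rho|0\rangle$ of the mixture, whose numerator and denominator are again elements of $\Z[\sqrt2]$.

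\textbf{Step 2 (forcing a small nonzero quantity).} Let $p'=\langle0|\rho|0\rangle$ for the prepared state $\rho$. Trace distance at most $\delta$ gives $|p'-p|\le\delta$, so $p'<4\delta$. Conversely, $p'$ cannot vanish. If $p'=0$, then $\rho=\proj1$, and $\proj1$ has trace distance $\sqrt{p}>\sqrt\delta>\delta$ from $\proj{\psi_p}$. Hence $0<p'<4\delta$.

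\textbf{Step 3 (Liouville-type gap, the main obstacle).} We have $p'=a/b$ with $a,b\in\Z[\sqrt2]$, where $b=\|\phi\|^2$ and $a=|\langle0|\phi\rangle|^2$ are both scaled by $2^{r}$. I would show that $|a^\sigma|$ and $|b^\sigma|$ are at most $2^{O(k)}$ for the nontrivial Galois conjugation $\sigma:\sqrt2\mapsto-\sqrt2$. Since $a\ne0$ is an algebraic integer, its norm satisfies $|a\,a^\sigma|\ge1$. Therefore $|a|\ge2^{-O(k)}$, and together with $|b|\le2^{O(k)}$ this yields $p'\ge2^{-ck}$. Combining with $p'<4\delta$ gives $k\ge c^{-1}\log_2(1/\delta)-O(1)$. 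The delicate part is bounding the conjugate $a^\sigma$. Under $\sigma$, $\ket T$ maps to the non-normalized vector $\ket0+\zeta_8^{3}\ket1$ (times $-1/\sqrt2$), and the Clifford/postselection structure must be shown not to inflate conjugate norms beyond $2^{O(k)}$. I would try to prove this by tracking the stabilizer projector, which is Galois-invariant up to the $\sqrt2$ normalization, acting on the conjugated product state; its norm is at most $2^{k/2}$. Finally, I would tune the exponent bookkeeping to recover the stated $\tfrac13\log_2(1/\delta)-\tfrac23$.
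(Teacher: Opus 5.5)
The paper does not prove this Fact; it imports Lemma~5.9 of Beverland et al.\ and gives no argument of its own. Your Galois-norm strategy is a sensible way to reprove it, and Step~2 is correct: $2^{-3k}\le p'<4\delta$ is exactly what produces $\tfrac13\log_2(1/\delta)-\tfrac23$.

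\textbf{The main gap: independence from stabilizer resources.} The substance of the proof is the claim in Step~1, used in Step~3, that $r$ and the Galois conjugates are $2^{O(k)}$-controlled \emph{independently of the stabilizer resources}. Neither of your justifications gives this. With $m$ postselected stabilizer ancillas, $\|\phi\|^2$ can be as small as $2^{-m}$ (e.g.\ ancillas prepared in $\ket{+}$ and postselected on $\ket0$). So in the normalization where $a$ is a stabilizer projector sandwiched by the product state, you need $r\gtrsim m$. Your Step~3 bound $|a^\sigma|\le 2^{r}$, obtained by tracking the projector on the conjugated product state, then yields only $p'\ge2^{-O(k+m)}$. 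That is useless, because the Fact allows unbounded stabilizer resources.

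``Clearing denominators'' does not repair this. What you need is that the ancilla/postselection contribution is a single \emph{common} scalar that cancels in $p'$. This is what Beverland et al.'s canonical form provides (the paper cites their Theorem~5.3 for eliminating stabilizer ancillas in Section~\ref{sec:compression}). Equivalently, it is the equal-singular-value structure of projected Clifford blocks (Theorem~\ref{thm:corner-structure}). For a pure postselected output it gives $p'=|\langle\chi|T^{\otimes k}\rangle|^2/\|\phi\|^2$, with $\chi$ a stabilizer state on only $k$ qubits and $\|\phi\|\le1$.

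\textbf{Mixed outputs.} The same reduction is needed for discarded qubits. Your claim that the numerator and denominator of $\langle0|\rho|0\rangle$ are controlled elements of $\Z[\sqrt2]$ is unsupported for partial traces, since the numerator sums over exponentially many branches. It is false if the protocol may use classical randomness with arbitrary weights. Your fallback, that some branch is within $O(\sqrt\delta)$, is too weak: since $\sqrt\delta\gg p$, it does not even exclude $p_i'=0$. The correct convexity step is that some branch has fidelity at least $1-\delta>1-p$. This excludes $\ket1$ and gives $0<p_i'\le(\sqrt p+\sqrt\delta)^2=O(\delta)$, at some cost in the additive constant.

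\textbf{The constant $\tfrac13$.} Getting $\tfrac13$ is not a matter of tuning. Amplitude-level bookkeeping, where each rescaled amplitude is a sum of at most $2^k$ eighth roots of unity, gives only $p'\ge2^{-4k-1}$, i.e.\ the constant $\tfrac14$. To reach $2^{-3k}$, work with the $k$-qubit form above:
\begin{enumerate}[label=(\arabic*)]
\item Write $|\langle\chi|T^{\otimes k}\rangle|^2=2^{-k}\sum_{g\in\mathrm{Stab}(\chi)}\langle T^{\otimes k}|g|T^{\otimes k}\rangle$.
\item Use $\langle T|Z|T\rangle=0$ and $\langle T|X|T\rangle=\langle T|Y|T\rangle=2^{-1/2}$. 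Then $\beta:=2^{3k/2}|\langle\chi|T^{\otimes k}\rangle|^2$ is a sum of at most $2^k$ terms $\pm(\sqrt2)^{k-w(g)}$, so $\beta\in\Z[\sqrt2]$.
\item The conjugate of $\beta$ has modulus at most $2^{3k/2}$. Hence $\beta\ne0$ implies $|\beta|\ge2^{-3k/2}$.
\item Therefore $p'\ge|\langle\chi|T^{\otimes k}\rangle|^2\ge2^{-3k}$, and combining with $p'<4\delta$ gives exactly the stated bound.
\end{enumerate}
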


\begin{theorem}[Precision lower bound]\label{thm:kitaev-precision}
Let $n\ge4$ and $0<\eps<1/32$.  There is $H_\eps\in\cK_n$ with $\|H_\eps\|=1$
such that every $(\alpha,a,\eps)$ unitary Clifford$+T$ block encoding $U$ of
$H_\eps$ obeys
\begin{equation}\label{eq:kitaev-precision-bound}
  T(U)\ \ge\ \tfrac13\log_2\tfrac{1}{4\eps}-\tfrac23 .
\end{equation}
\end{theorem}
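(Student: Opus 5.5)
We prove Theorem~\ref{thm:kitaev-precision} by reduction to Fact~\ref{fact:bchk}. The idea is to choose a two-bond instance $H_\eps\in\cK_n$ that squares to the identity. Then $H_\eps$, applied to a suitable stabilizer input, produces (up to Cliffords) a one-qubit state $\sqrt p\ket0+\sqrt{1-p}\ket1$ tensored with a stabilizer state. Any block encoding $U$ then becomes a postselected stabilizer protocol preparing that state, with one magic state per $T$ gate.

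\emph{Choice of instance.} Pick a vertex $u\in A$, its $x$-bond $e=(u,v)$ and its $z$-bond $e'=(u,w)$. Then $P_e=X_uX_v$ and $P_{e'}=Z_uZ_w$ anticommute, since they overlap only at $u$ with $X$ versus $Z$. Set
\[
H_\eps=\sqrt p\,P_e+\sqrt{1-p}\,P_{e'},\qquad J_e=\sqrt p,\quad J_{e'}=\sqrt{1-p},
\]
with all other couplings $0$. Anticommutation gives $H_\eps^2=(p+1-p)I=I$, so $\|H_\eps\|=1$ and $H_\eps\in\cK_n$. We will take $\delta=4\eps<1/8$, which is where $\eps<1/32$ is used, and $p=2\delta=8\eps\in(\delta,3\delta)$.

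\emph{Stabilizer input and Clifford reduction.} Since $P_e$ and $P_{e'}$ form an anticommuting pair, there is a Clifford $C$ on the $n$ system qubits with $CP_eC^\dagger=Z_1$ and $CP_{e'}C^\dagger=X_1$. Let $\ket\phi=C^\dagger(\ket0\otimes\ket{0^{n-1}})$, which is a stabilizer state. Then
\[
CH_\eps\ket\phi=\bigl(\sqrt p\ket0+\sqrt{1-p}\ket1\bigr)\otimes\ket{0^{n-1}}=\ket{\psi_p}\otimes\ket{0^{n-1}},
\]
a unit vector.

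\emph{From the block encoding to state preparation.} Let $U$ satisfy $\|H_\eps-\alpha A_U\|\le\eps$. Run $U$ on $\ket{0^a}\ket\phi$, postselect the ancillas on $\ket{0^a}$, apply $C$, and discard the last $n-1$ qubits. The unnormalized output $w=\alpha A_U\ket\phi$ satisfies $\|w-H_\eps\ket\phi\|\le\eps$. Because $\|H_\eps\ket\phi\|=1$, we have $w\neq0$, so the postselection has nonzero success probability. Normalizing costs at most a factor of two, so $\|w/\|w\|-H_\eps\ket\phi\|\le2\eps$. The trace distance between the two pure states is therefore at most $2\eps$, and after partial trace over the stabilizer factor the one-qubit marginal is within $\delta=4\eps$ of $\proj{\psi_p}$ (the factor $4$ leaves room for the pure-state distance conversion). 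Each of the $T(U)$ gates is replaced by the standard injection gadget, which consumes one $\ket T$ and uses only stabilizer operations with Clifford correction (postselection is also acceptable). This yields a stabilizer protocol with $k=T(U)$ magic states. Fact~\ref{fact:bchk} then gives $T(U)\ge\frac13\log_2\frac1{4\eps}-\frac23$, independent of $a$ and $\alpha$.

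\emph{Expected obstacle.} The delicate point is matching the exact hypotheses of Fact~\ref{fact:bchk}. We must check that the class of stabilizer protocols in \cite{beverland2020lower} admits postselection onto $\ket{0^a}$ and partial trace; if it requires success with certainty, we would instead use its postselected variant, or argue via nullity/robustness directly. We must also keep track of constants so that $\delta<p<3\delta$ and $\delta<1/8$ hold simultaneously with $\delta=4\eps$. A secondary check is that $J_e=\sqrt p$ and $J_{e'}=\sqrt{1-p}$ lie in $[-1,1]$, which is immediate.
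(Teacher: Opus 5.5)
Your proposal is correct and follows essentially the same route as the paper: the same two-bond instance $H_\eps=\sqrt p\,X_uX_{v_x}+\sqrt{1-p}\,Z_uZ_{v_z}$ with $p=8\eps$ and $\delta=4\eps$, the same $2\eps$ normalization and partial-trace estimate, and the same injection-gadget reduction to Fact~\ref{fact:bchk}. The paper also applies that fact to the postselected protocol without further comment. The only cosmetic difference is the input state: the paper uses the explicit product input $\ket+_u\ket+_{v_x}\ket0_{v_z}\ket{0\cdots0}$ followed by a Hadamard on $u$, whereas you obtain the stabilizer input abstractly from a Clifford that maps the anticommuting pair to $(Z_1,X_1)$.
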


\begin{proof}[Proof idea]
Choose two anticommuting bond operators $P$ and $Q$, set
$p=8\eps$, and define
\[
    H_\eps=\sqrt p\,P+\sqrt{1-p}\,Q.
\]
Then $H_\eps^2=I$. For a suitable stabilizer input, applying
$H_\eps$ produces, up to Clifford operations and stabilizer tensor
factors, the one-qubit state
$\ket{\psi_p}=\sqrt p\ket0+\sqrt{1-p}\ket1$.

An $\eps$-approximate block encoding therefore gives a stabilizer
protocol that prepares this state within trace-distance error
$\delta=4\eps$. Since $\delta<p<3\delta$, Fact~\ref{fact:bchk}
gives the claimed lower bound. The complete reduction and error
analysis are given in Appendix~\ref{app:kitaev-precision}.
\end{proof}

\subsection{An explicit block encoding}\label{subsec:kitaev-upper}

We now give a construction matching the lower bounds above.  The point is that
$H_K(J)$ is already a linear combination of $M=\tfrac32n$ two-qubit Pauli
unitaries, so the standard LCU block-encoding form can be used directly
\cite{childs2012lcu}.

Fix $J$ and set $\lambda=\sum_{e\in E}|J_e|$.  If $\lambda\le\eps$ then the
zero block, implemented by $U=X\otimes I$ on one ancilla, already gives an $\eps$
approximation with no $T$ gates, so assume $\lambda>\eps$.  Let
$\ell=\lceil\log_2M\rceil$ and define on an $\ell$-qubit address register
\begin{equation}\label{eq:prepare-state}
  \ket G=\sum_{e\in E}\sqrt{\frac{|J_e|}{\lambda}}\,\ket e .
\end{equation}
With $\eta_e=\operatorname{sgn}(J_e)$ for $J_e\ne0$ and $\eta_e=1$ otherwise,
define
\begin{equation}\label{eq:kitaev-select}
  \mathrm{SEL}=\sum_{e\in E}\proj e\otimes\eta_eP_e
   +\sum_{e\notin E}\proj e\otimes I ,
\end{equation}
the second sum ranging over unused binary addresses.  If
$\mathrm{PREP}\ket{0^{\ell}}=\ket G$ then
\begin{equation}\label{eq:lcu-identity}
  (\bra{0^{\ell}}\otimes I)\,
  (\mathrm{PREP}^{\dagger}\otimes I)\,\mathrm{SEL}\,(\mathrm{PREP}\otimes I)\,
  (\ket{0^{\ell}}\otimes I)
  =\sum_{e\in E}\frac{|J_e|}{\lambda}\eta_eP_e=\frac{H_K(J)}{\lambda},
\end{equation}
which is the usual LCU block-encoding identity.

Using the state-preparation construction of
Ref.~\cite{gosset2026stateprep}, the implementation and error analysis
in Appendix~\ref{app:kitaev-blockencoding} show that the LCU circuit
can be chosen as a $(\lambda,a,\eps)$ block encoding with $a=O(n+\log(1/\eps))$ and
\begin{equation}
\label{eq:kitaev-upper}
\begin{aligned}
    T(U)
    &=
    O\!\left(
        n+\sqrt{n\log(n/\eps)}+\log(n/\eps)
    \right) \\
    &=
    O\!\left(n+\log(1/\eps)\right).
\end{aligned}
\end{equation}
\begin{theorem}[Optimal Kitaev worst-case scaling]\label{thm:kitaev-optimal}
For $n\ge4$ and $0<\eps<1/32$,
\begin{equation}\label{eq:kitaev-theta}
  T^{\BE}_{\eps}(\cK_n)=\Theta\!\left(n+\log(1/\eps)\right).
\end{equation}
\end{theorem}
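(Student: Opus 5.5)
The plan is to combine the two lower bounds already established with the explicit upper bound. For the upper bound, \eqref{eq:kitaev-upper} gives, for every $H_K(J)\in\cK_n$, a $(\lambda,a,\eps)$ block encoding with $T$-count $O(n+\sqrt{n\log(n/\eps)}+\log(n/\eps))$. The zero-block case $\lambda\le\eps$ costs no $T$ gates. Since the infimum in \eqref{eq:worstcase} is at most this cost for each $H$, taking the supremum yields $T^{\BE}_\eps(\cK_n)=O(n+\sqrt{n\log(n/\eps)}+\log(n/\eps))$.

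I then collapse this to $O(n+\log(1/\eps))$. Write $x=n$ and $y=\log(1/\eps)$, noting $y\ge\log 32>0$. By AM--GM, $\sqrt{n\log(n/\eps)}\le \tfrac12(n+\log n+\log(1/\eps))$. Also $\log(n/\eps)=\log n+\log(1/\eps)\le n+\log(1/\eps)$. So every term is $O(n+\log(1/\eps))$ uniformly over the range $n\ge4$, $0<\eps<1/32$.

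For the lower bound, I take the worst case over two separate instances. First, the uniform instance $H_{\mathrm{unif}}\in\cK_n$ (all $J_e=1\in[-1,1]$) satisfies $T(U)\ge n$ for every $(\alpha,a,\eps)$ block encoding with $\eps\le1/4$, by Theorem~\ref{thm:kitaev-size}\ref{kitaev-part2}. This applies because $\eps<1/32\le1/4$, and the bound holds for arbitrary $a$ and $\alpha$. Hence $T^{\BE}_\eps(\cK_n)\ge n$. Second, Theorem~\ref{thm:kitaev-precision} supplies $H_\eps\in\cK_n$ with $T(U)\ge\tfrac13\log_2\tfrac1{4\eps}-\tfrac23$, so $T^{\BE}_\eps(\cK_n)\ge \tfrac13\log_2(1/\eps)-\tfrac43$. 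Because the supremum dominates each instance, we get
\[
T^{\BE}_\eps(\cK_n)\ge\max\{n,\tfrac13\log_2(1/\eps)-\tfrac43\}\ge \tfrac12\bigl(n+\tfrac13\log_2(1/\eps)-\tfrac43\bigr).
\]
This is $\Omega(n+\log(1/\eps))$ after absorbing the constant $-\tfrac43$ into $n\ge4$. For instance, $\tfrac12(n-\tfrac43)\ge n/3$.

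The main obstacle is not in this assembly, which is routine once the cited results hold. It lies in the appendix-level ingredients: part~\ref{kitaev-part2} of Theorem~\ref{thm:kitaev-size} (robustness of the nullity bound under $\eps\le1/4$ error) and the error bookkeeping in the LCU construction via \cite{gosset2026stateprep}. Here I would only check that their hypotheses ($n\ge4$, $\eps<1/32$, and membership of the chosen instances in $\cK_n$, in particular $\sqrt p,\sqrt{1-p}\in[-1,1]$ for $p=8\eps<1/4$) are met in the stated parameter range.
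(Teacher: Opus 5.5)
Your proposal is correct and follows the paper's proof essentially verbatim. The upper bound comes from applying the LCU construction \eqref{eq:kitaev-upper} to every $H_K(J)$. The lower bound takes the supremum over the uniform instance (part~\ref{kitaev-part2} of Theorem~\ref{thm:kitaev-size}, valid since $\eps<1/32<1/4$) and the two-bond instance of Theorem~\ref{thm:kitaev-precision}, combined via $\max\{x,y\}\ge\tfrac12(x+y)$. Your AM--GM collapse of $\sqrt{n\log(n/\eps)}$ and your use of $n\ge4$ to absorb the constant $-\tfrac43$ (giving $T^{\BE}_\eps(\cK_n)\ge n/3+\tfrac16\log_2(1/\eps)$) only make explicit the asymptotic bookkeeping that the paper leaves to the appendix and the $\Omega$-notation.
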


\begin{proof}
The upper bound \eqref{eq:kitaev-upper} holds for every $H_K(J)\in\cK_n$, hence
bounds the supremum in \eqref{eq:worstcase}.  For the lower bound,
part~\ref{kitaev-part2} of Theorem~\ref{thm:kitaev-size} exhibits, in the uniform instance
$H_{\mathrm{unif}}$, a Hamiltonian requiring at least $n$ $T$ gates (note
$\eps<1/32<1/4$), and Theorem~\ref{thm:kitaev-precision} exhibits a
two-bond instance \eqref{eq:precision-H} requiring at least
$\tfrac13\log_2(1/4\eps)-\tfrac23$.  Since \eqref{eq:worstcase} takes a
supremum over the family,
\[
  T^{\BE}_{\eps}(\cK_n)\ \ge\ \max\bigl\{n,\ \tfrac13\log_2\tfrac1{4\eps}-\tfrac23\bigr\}
  \ \ge\ \tfrac12\Bigl(n+\tfrac13\log_2\tfrac1{4\eps}-\tfrac23\Bigr)
  =\Omega\!\left(n+\log(1/\eps)\right),
\]
using $\max\{x,y\}\ge\tfrac12(x+y)$.
\end{proof}

\begin{corollary}[Fixed normalization]
\label{cor:kitaev-fixed-normalization}
Recall that
\[
  M=|E|=\frac{3n}{2}.
\]
Every Hamiltonian in $\cK_n$ admits a block encoding with the common
normalization $M$. For this normalization,
\[
  T^{\BE}_{\eps,M}(\cK_n)
  =
  \Theta\!\left(n+\log(1/\eps)\right).
\]
The matching construction uses $O(n+\log(1/\eps))$ ancillas.
\end{corollary}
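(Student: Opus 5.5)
The proof has a lower half and an upper half. For the lower half, fixing the normalization only restricts the inner infimum in \eqref{eq:worstcase}. This gives $T^{\BE}_{\eps,M}(\cK_n)\ge T^{\BE}_{\eps}(\cK_n)$, and Theorem~\ref{thm:kitaev-optimal} then supplies the bound $\Omega(n+\log(1/\eps))$. The real work is the upper half: every $H_K(J)\in\cK_n$ must admit an $(M,a,\eps)$ block encoding with the normalization $M=\tfrac32 n$ independent of $J$, using $O(n+\log(1/\eps))$ $T$ gates and ancillas.

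For the upper bound I will pad the LCU construction of Section~\ref{subsec:kitaev-upper}. The coefficient vector is extended to a full probability distribution. Set $\lambda=\sum_e|J_e|\le M$, and add one dummy address $e_0$ (one extra address qubit at most) with weight $w_0=M-\lambda\ge0$. It carries the pair of unitaries $+I$ and $-I$, each with weight $w_0/2$. Equivalently, add a flag qubit in the state $\tfrac{1}{\sqrt2}(\ket0+\ket1)$ and let SELECT apply $Z$ on the flag in the dummy branch, so that this branch contributes zero to the block. The prepared state is then
\[
\ket{G'}=\sum_{e\in E}\sqrt{|J_e|/M}\,\ket e+\sqrt{(M-\lambda)/M}\,\ket{e_0}\otimes\ket{-}.
\]
The alternative is to realize the $-I$ branch by placing $\ket{-}$ on an ancilla and applying a Pauli $Z$ to it. Either way, the LCU identity \eqref{eq:lcu-identity} becomes $(\bra{0}\otimes I)\,\mathrm{PREP}'^\dagger\,\mathrm{SEL}'\,\mathrm{PREP}'\,(\ket{0}\otimes I)=H_K(J)/M$. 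The new SELECT differs from \eqref{eq:kitaev-select} by only one extra address controlling a Clifford, so its $T$-cost stays $O(M)=O(n)$ through the same unary-iteration implementation used in Appendix~\ref{app:kitaev-blockencoding}.

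The step I expect to be the main obstacle is the error analysis for the approximate preparation of $\ket{G'}$, which is an $O(\log n)$-qubit state with $M+1$ amplitudes. I will invoke the state-preparation result of Ref.~\cite{gosset2026stateprep} exactly as in Appendix~\ref{app:kitaev-blockencoding}. If the prepared state $\ket{\tilde G}$ satisfies $\|\ket{\tilde G}-\ket{G'}\|\le\delta$, the standard LCU perturbation bound gives
\[
\|M A_{U}-H_K(J)\|\le 2M\delta.
\]
So I take $\delta=\eps/(2M)=\Theta(\eps/n)$. The preparation then costs $O(\sqrt{n\log(n/\eps)}+\log(n/\eps))$ $T$ gates and $O(\sqrt{n\log(n/\eps)})$ ancillas in the appropriate tradeoff regime. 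Two elementary estimates finish the argument. First, $\sqrt{n\log(n/\eps)}\le n+\log(n/\eps)$ by AM--GM. Second, $\log(n/\eps)=O(n+\log(1/\eps))$. Both the $T$-count and the ancilla count are therefore $O(n+\log(1/\eps))$, uniformly in $J$.

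The point to check is the previous zero-block shortcut for $\lambda\le\eps$. It changes the normalization, so here it must be dropped. Padding handles that case automatically: with $J=0$ the prepared state lies entirely on the dummy branch, and the encoded block is exactly zero. Combining this upper bound with the lower bound from the first paragraph gives $\Theta(n+\log(1/\eps))$.
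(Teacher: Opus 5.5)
Your proposal is correct and follows the paper's proof: the lower bound is inherited from $T^{\BE}_{\eps,M}\ge T^{\BE}_{\eps}$ together with Theorem~\ref{thm:kitaev-optimal}, and the upper bound pads the LCU with cancelling $\pm I$ terms of weight $(M-\lambda)/2$ each, using the same $2M\delta$ state-preparation error analysis (your flag-qubit variant is an equivalent realization of this padding). One small correction: the zero-block shortcut does not have to be dropped, because $A_U=0$ makes $X\otimes I$ a valid block encoding at every normalization, including $M$, whenever $\|H\|\le\lambda\le\eps$ — though handling that case by padding is equally fine.
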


\begin{proof}
The lower bound follows from the unrestricted-normalization result.
For the upper bound, we pad the LCU decomposition with two cancelling
identity terms so that its total weight is $M$. This does not change
the asymptotic $T$-count or ancilla count. The details are given in
Appendix~\ref{app:common-normalization}.
\end{proof}

\section{Application: T-Count of fixed QSVT Hamiltonian Simulation}
\label{sec:qsp-application}
A block encoding is often used as an input to another quantum
algorithm. Here we apply our results to the standard QSVT construction
for Hamiltonian simulation
\cite[Theorem~58]{gilyen2019qsvt}. The two preceding corollaries~\ref{cor:secondq-fixed-normalization}
and~\ref{cor:kitaev-fixed-normalization}
give a common normalization for each Hamiltonian family. Thus, for
fixed simulation time and error, the same QSVT construction can be used
throughout the family.

\subsection{The QSVT circuit}

Let $U$ be an $(\alpha,a,\eps_{\mathrm{BE}})$ block encoding of $H$,
and write
\[
  \widetilde H:=\alpha A_U.
\]
By definition,
\[
  \|H-\widetilde H\|\le\eps_{\mathrm{BE}}.
\]
For the two constructions considered here, $\widetilde H$ can be
chosen Hermitian; see
Appendix~\ref{app:qsp-hermitian-approximation}.

Define
\[
  \Pi_a:=|0^a\rangle\langle0^a|\otimes I.
\]
The QSVT circuit alternates calls to $U$ and $U^\dagger$ with phase
operators
\[
  \Pi_\phi:=e^{i\phi(2\Pi_a-I)}.
\]
Each $\Pi_\phi$ can be implemented using two
projector-controlled NOT gates and one single-qubit $Z$ rotation
\cite{gilyen2019qsvt,martyn2021grand}.

Following Ref.~\cite[Lemma~57 and Theorem~58]{gilyen2019qsvt},
we use even and odd polynomial approximations to
$\cos(\alpha t x)$ and $\sin(\alpha t x)$, respectively,
with degrees chosen to achieve the query bound in
Eq.~\eqref{eq:qsp-query-count}.
Following the proof of Ref.~\cite[Theorem~58]{gilyen2019qsvt},
we coherently combine these approximations with relative phase $-i$
to obtain a unitary $V=\mathcal U_{\boldsymbol{\Phi}}$ satisfying
\[
  (\langle0^{a+2}|\otimes I)V(|0^{a+2}\rangle\otimes I)
  \approx \frac{e^{-it\widetilde H}}{2}.
\]
Robust oblivious amplitude amplification (OAA)
then yields an $\eps_{\mathrm{QSVT}}$-approximate block encoding
of $e^{-it\widetilde H}$. 
The circuit before OAA is shown in
Fig.~\ref{fig:qsvt-simulation}.

\begin{figure}[t]
  \centering

  \begin{subfigure}[c]{0.25\textwidth}
    \centering
    \includegraphics[width=\linewidth]{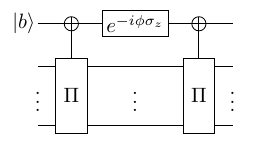}
    \caption{$\ket{b}\bra{b}\otimes e^{(-1)^{b}i\phi(2\Pi_a-I)}$}
    \label{fig:qsvt-phase}
  \end{subfigure}
  \hfill
  \begin{subfigure}[c]{0.70\textwidth}
    \centering
    \includegraphics[width=\linewidth]
      {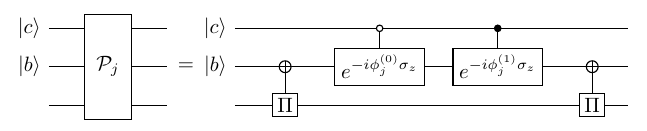}
    \caption{$\ket{cb}\bra{cb}\otimes e^{(-1)^{b}i\phi^{c}_j(2\Pi_a-I)}$}
    \label{fig:qsvt-selected-phase}
  \end{subfigure}

  \vspace{0.8em}

  \begin{subfigure}[c]{0.98\textwidth}
    \centering
    \includegraphics[width=\linewidth]
      {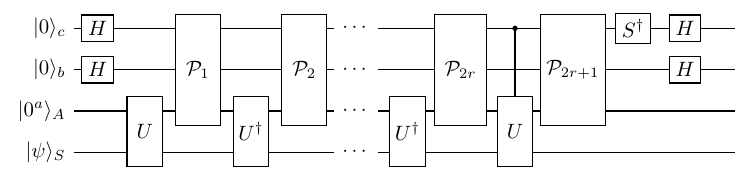}
    \caption{$\displaystyle \mathcal U_{\boldsymbol{\Phi}}
  = (H_c\otimes H_b)S_c^\dagger\,
  \mathcal P_{2r+1}\,C_c(U)\, \overleftarrow{\prod}_{j=1}^{r}
  \left(\mathcal P_{2j}U^\dagger \mathcal P_{2j-1}U\right)
  (H_c\otimes H_b)
  $}
    \label{fig:qsvt-sequence}
  \end{subfigure}

  \caption{
    The QSVT circuit $\mathcal U_{\boldsymbol{\Phi}}$ approximately block-encodes $e^{-it\widetilde H}/2$ before OAA.
    (a) Decomposition of a projector-phase operator into two
    projector-controlled NOT gates and a single-qubit $Z$ rotation, adapted from Ref.~\cite[Fig.~1(b)]{gilyen2019qsvt}.
    (b) The selected phase operator $\mathcal P_j$, adapted from
    Ref.~\cite[Fig.~1(c)]{gilyen2019qsvt}. The auxiliary qubit $c$ here selects the phase sequence for the cosine
    or sine polynomial.
    (c) The complete QSVT alternating sequence. The register $A$ contains the $a$ block-encoding ancilla qubits,
and $S$ is the system register. The final controlled
    query accounts for the difference between the even cosine
    polynomial and the odd sine polynomial, while $S^\dagger$
    supplies the factor $-i$ on the sine branch.
    Here, $r$ is the number of query pairs, giving $2r$ uncontrolled queries before OAA.
  }
  \label{fig:qsvt-simulation}
\end{figure}

Throughout this section, $N_{\rm calls}$ counts all uncontrolled
uses of $U$ and $U^\dagger$ in the complete simulation circuit,
including OAA. The construction also uses three controlled
queries to $U$ or $U^\dagger$ \cite[Theorem~58]{gilyen2019qsvt},
which we count separately from $N_{\rm calls}$.

\subsection{Modular T-count}

We count each occurrence of $U$ or $U^\dagger$ as a separate circuit
module without recompiling several calls into a new circuit. And we call
the resulting cost the modular $T$-count.

For the two Hamiltonian families, define
\[
  B_2(n,\eps_{\mathrm{BE}})
  :=
  n^2\sqrt{\log\!\left(\frac{n^4}{\eps_{\mathrm{BE}}}\right)},
  \qquad
  B_{\rm K}(n,\eps_{\mathrm{BE}})
  :=
  n+\log\!\left(\frac{1}{\eps_{\mathrm{BE}}}\right).
\]
These are the optimal costs of one block-encoding module under the conditions stated above at the common
normalizations given in
Corollaries~\ref{cor:secondq-fixed-normalization}
and~\ref{cor:kitaev-fixed-normalization}.
And let $C_F$ denote the $T$-count of one controlled-$U$ or controlled-$U^{\dagger}$, where $U$ is the corresponding block encoding.

\begin{theorem}[Cost of a fixed QSVT circuit]
\label{thm:qsp-modular-cost}
Let $F\in\{2,{\rm K}\}$, and let $B_F$ denote the corresponding
block-encoding cost above. Let $C_F$ denote the $T$-count of one controlled-$U$ or controlled-$U^\dagger$ query. Suppose the QSVT circuit makes
$N_{\rm calls}$ calls to $U$ and $U^\dagger$. Then its worst-case
modular $T$-count satisfies
\[
  T^{\mathrm{mod}}_{\mathrm{QSVT},F}
  =
  \Omega\!\left(N_{\mathrm{calls}}B_F\right),
  \qquad
  T^{\mathrm{mod}}_{\mathrm{QSVT},F}
  =
  O\!\left(
    N_{\mathrm{calls}}
    \left[
      B_F+
      \log\!\left(
        \frac{N_{\mathrm{calls}}}{\eps_{\mathrm{syn}}}
      \right)
    \right]
    + C_F
  \right).
\]
where $\eps_{\mathrm{syn}}$ is the total error from synthesizing the
QSVT phase rotations.

In particular, if
\[
  \log\!\left(
    \frac{N_{\rm calls}}{\eps_{\mathrm{syn}}}
  \right)
  =
  O(B_F),
  \qquad\text{and}\qquad
C_F=O(N_{\rm calls}B_F),
\]
then
\[
  T^{\rm mod}_{\rm QSVT,F}
  =
  \Theta\!\left(N_{\rm calls}B_F\right).
\]
\end{theorem}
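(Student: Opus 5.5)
The plan is to treat the QSVT circuit as a fixed sequence of modules and bound the $T$-count module by module, taking the worst case over the family $\mathfrak H_F$ at the common normalization of Corollaries~\ref{cor:secondq-fixed-normalization} and~\ref{cor:kitaev-fixed-normalization}. The modules are the $N_{\rm calls}$ uncontrolled queries to $U$ or $U^\dagger$, the three controlled queries, and the phase operators $\Pi_\phi$ (and the selected phase operators $\mathcal P_j$), each built from projector-controlled NOT gates and one $Z$ rotation. Because the modular cost is defined as a sum over separately compiled modules, the total $T$-count is the sum of the module costs.

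\textbf{Lower bound.} For the lower bound, I would pick a worst-case Hamiltonian $H\in\mathfrak H_F$ that attains the supremum in $T^{\BE}_{\eps_{\rm BE},\alpha}$ up to a constant factor. Every uncontrolled occurrence of $U$ must then be an $(\alpha,a,\eps_{\rm BE})$ block encoding of $H$. By the fixed-normalization corollaries this costs at least $c\,B_F$ $T$ gates. An occurrence of $U^\dagger$ has the same $T$-count as $U$, since $T^\dagger=S^\dagger T$ costs one $T$ each. In addition, $U^\dagger$ block-encodes $H^\dagger=H$ with the same error, because $A_{U^\dagger}=A_U^\dagger$ and $\|H-\alpha A_U^\dagger\|=\|H-\alpha A_U\|$. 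Each of the $N_{\rm calls}$ modules therefore contributes $\Omega(B_F)$, and all other modules have nonnegative cost, which gives $\Omega(N_{\rm calls}B_F)$.

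\textbf{Upper bound.} For the upper bound, I would implement each query with the optimal construction: the padded construction of Liu \emph{et al.} for $F=2$, and the padded LCU construction for $F={\rm K}$. Each uncontrolled query then costs $O(B_F)$ $T$ gates.
\begin{itemize}
\item The projector-controlled NOTs on $a$ ancillas are multi-controlled Toffolis, costing $O(a)$ $T$ gates with clean ancillas. Since $a=O(B_F)$ in both corollaries, each phase module costs $O(B_F)$ for its Toffolis.
\item There are $O(N_{\rm calls})$ single-qubit $Z$ rotations. Allotting each a synthesis error of $\eps_{\rm syn}/O(N_{\rm calls})$ and using Ross--Selinger synthesis costs $O(\log(N_{\rm calls}/\eps_{\rm syn}))$ $T$ gates per rotation.
\item The three controlled queries contribute $O(C_F)$.
\end{itemize}
Summing these gives the stated $O\bigl(N_{\rm calls}[B_F+\log(N_{\rm calls}/\eps_{\rm syn})]+C_F\bigr)$. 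Under the two additional hypotheses each extra term is $O(N_{\rm calls}B_F)$, so the upper and lower bounds combine to $\Theta(N_{\rm calls}B_F)$.

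\textbf{Main obstacle.} I expect the hardest part to be justifying that each phase module costs only $O(B_F)$. This needs the ancilla count $a$ of the optimal constructions to be $O(B_F)$, which the corollaries supply. I would also cite Appendix~\ref{app:qsp-cost} to confirm that the synthesized rotations' errors add linearly to at most $\eps_{\rm syn}$ in the final simulation error. For the lower bound, the obstacle is only the observation that the adversary's worst-case $H$ works simultaneously for all $N_{\rm calls}$ modules: every module must block-encode the same $H$ at the same $\alpha$ and $\eps_{\rm BE}$.
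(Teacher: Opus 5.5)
Your proposal is correct and follows the paper's proof essentially step for step. The lower bound comes from the $N_{\rm calls}$ separately compiled block-encoding modules, and the upper bound sums $O(B_F)$ per query, $O(a)=O(B_F)$ per projector-controlled NOT, $O(\log(N_{\rm calls}/\eps_{\rm syn}))$ per synthesized $Z$ rotation, and $O(C_F)$ for the three controlled queries. The only item you omit is the extra OAA reflections about $\Pi_{\mathrm{s}}$, which cost $O(a+1)=O(B_F)$ and are absorbed. Your remarks that $U^\dagger$ also block-encodes $H$ with the same $(\alpha,\eps_{\rm BE})$, and that a single worst-case $H$ serves all modules, make explicit points the paper leaves implicit.
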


\begin{proof}
The $N_{\rm calls}$ block-encoding modules give the lower bound. The
same modules contribute $O(N_{\rm calls}B_F)$ $T$ gates to the upper
bound.

The projector-controlled NOT gates in the phase operators cost $O(a)$ $T$ gates,
where $a=O(B_F)$ for both constructions. The complete circuit contains $O(N_{\rm calls})$ single-qubit Z rotations, including those used to select between the cosine and sine branches. Synthesizing these rotations to total error
$\eps_{\mathrm{syn}}$ costs
\[
  O\!\left(
    N_{\rm calls}
    \log\!\frac{N_{\rm calls}}{\eps_{\mathrm{syn}}}
  \right).
\]
The three controlled block-encoding queries contribute $O(C_F)$
additional $T$ gates. The remaining gates in OAA can be implemented using
$O(a+1)$ $T$ gates, which is absorbed by $O(N_{\rm calls}B_F)$ since
$a+1=O(B_F)$; see Appendix~\ref{app:qsp-oaa-cost}.
Combining these costs proves the result.
\end{proof}

\subsection{Hamiltonian simulation}

We now apply Theorem~\ref{thm:qsp-modular-cost} to the standard QSVT
algorithm for Hamiltonian simulation
\cite{gilyen2019qsvt,martyn2021grand}. The regime $\alpha|t|\le\eps_{\mathrm{QSVT}}$ is trivial, since the
identity already approximates the evolution within the target error.
We therefore consider $\alpha|t|>\eps_{\mathrm{QSVT}}$.

Since $U$ exactly block encodes $\widetilde H$ with normalization
$\alpha$, Ref.~\cite[Corollary~60]{gilyen2019qsvt} applies. The construction also requires three uses of
controlled-$U$ or its inverse
\cite[Theorem~58]{gilyen2019qsvt}. For
$0<\eps_{\mathrm{QSVT}}<1/2$, the query-optimal construction makes
\begin{equation}\label{eq:qsp-query-count}
  N_{\rm calls}
  =
  \Theta\!\left(
    \alpha|t|
    +
    \frac{\log(1/\eps_{\mathrm{QSVT}})}
    {\log\!\left(
      e+
      \frac{\log(1/\eps_{\mathrm{QSVT}})}{\alpha|t|}
    \right)}
  \right)
\end{equation}
queries to $U$ and $U^\dagger$.

The target evolution is $e^{-itH}$ rather than
$e^{-it\widetilde H}$. Since $H$ and $\widetilde H$ are Hermitian,
\[
  \left\|
    e^{-itH}-e^{-it\widetilde H}
  \right\|
  \le
  |t|\eps_{\mathrm{BE}}.
\]
A proof of this bound is given in Appendix~\ref{app:qsp-cost}.

The total simulation error is therefore at most
\[
  |t|\eps_{\mathrm{BE}}
  +
  \eps_{\mathrm{QSVT}}
  +
  \eps_{\mathrm{syn}}.
\]
To achieve error $\eps_{\mathrm{sim}}$, it is enough to choose
\[
  |t|\eps_{\mathrm{BE}}
  \le\frac{\eps_{\mathrm{sim}}}{3},
  \qquad
  \eps_{\mathrm{QSVT}}
  =\frac{\eps_{\mathrm{sim}}}{3},
  \qquad
  \eps_{\mathrm{syn}}
  =\frac{\eps_{\mathrm{sim}}}{3}.
\]

Because $\alpha$ is fixed within each Hamiltonian family, the same
query bound applies to every Hamiltonian in that family.

\subsection{Second-quantized Hamiltonians}

We first consider $\cH_n$, whose common normalization is
$\overline\alpha_{2,n}=\Theta(n^4)$ by
Corollary~\ref{cor:secondq-fixed-normalization}.

\begin{corollary}
\label{cor:qsp-secondq}
Let $t > 0$ and $\eps_{\mathrm{sim}}\in(0,1)$.
Choose $\eps_{\mathrm{BE}}\in(0,1]$ such that
$|t|\eps_{\mathrm{BE}}\le\eps_{\mathrm{sim}}/3$, and set
$\eps_{\mathrm{QSVT}}=\eps_{\mathrm{syn}}=\eps_{\mathrm{sim}}/3$.
Assume $\log(n^4/\eps_{\mathrm{BE}})=O(n^4)$.
Let $N_{\rm calls}^{(2)}$ be given by
Eq.~\eqref{eq:qsp-query-count} with
$\alpha=\overline\alpha_{2,n}$.

The QSVT construction implements an
$\eps_{\mathrm{sim}}$-approximate block encoding of $e^{-itH}$
for every $H\in\cH_n$. If
\[
  \log\!\left(
    \frac{3N_{\rm calls}^{(2)}}{\eps_{\mathrm{sim}}}
  \right)=O(B_2),
  \qquad
  B_2=O(N_{\rm calls}^{(2)}),
\]
where $B_2=B_2(n,\eps_{\mathrm{BE}})$, then its worst-case
modular $T$-count satisfies
\[
  T^{\mathrm{mod}}_{\mathrm{QSVT},2}
  =
  \Theta\!\left(
    N_{\rm calls}^{(2)}
    n^2\sqrt{\log\!\left(
      \frac{n^4}{\eps_{\mathrm{BE}}}
    \right)}
  \right).
\]
\end{corollary}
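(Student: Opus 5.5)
The plan is to derive Corollary~\ref{cor:qsp-secondq} by specializing Theorem~\ref{thm:qsp-modular-cost} to $F=2$ at the common normalization $\overline\alpha_{2,n}$, and then checking that the error budget and the side hypotheses of that theorem hold.

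\emph{Correctness.} Fix $H\in\cH_n$. Corollary~\ref{cor:secondq-fixed-normalization} supplies a $(\overline\alpha_{2,n},a,\eps_{\mathrm{BE}})$ block encoding $U$ with $T$-count $O(B_2)$ and $a=O(B_2)$. The assumption $\log(n^4/\eps_{\mathrm{BE}})=O(n^4)$ places us in the regime where the upper bound~\eqref{eq:liu-upper} of Liu \emph{et al.} is valid. By Appendix~\ref{app:qsp-hermitian-approximation} we may take $\widetilde H=\overline\alpha_{2,n}A_U$ to be Hermitian. We then run the QSVT construction with $N^{(2)}_{\rm calls}$ queries as in~\eqref{eq:qsp-query-count}. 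The error decomposition $|t|\eps_{\mathrm{BE}}+\eps_{\mathrm{QSVT}}+\eps_{\mathrm{syn}}\le\eps_{\mathrm{sim}}$ shows that the circuit is an $\eps_{\mathrm{sim}}$-approximate block encoding of $e^{-itH}$. The trivial regime $\overline\alpha_{2,n}|t|\le\eps_{\mathrm{QSVT}}$ is handled separately by the identity circuit. Because $\overline\alpha_{2,n}$ does not depend on the coefficients, a single query count serves the whole family.

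\emph{Cost.} For the lower bound, Theorem~\ref{thm:qsp-modular-cost} gives $\Omega(N^{(2)}_{\rm calls}B_2)$, since every module must use at least the worst-case fixed-normalization cost. That cost equals $\Theta(B_2)$ by Corollary~\ref{cor:secondq-fixed-normalization}, whose lower bound is inherited from Theorem~\ref{thm:secondq-lower}. Here we need $n$ large, and $\eps_{\mathrm{BE}}\le c_\star\sqrt n/2$, which holds automatically once $n$ is large because $\eps_{\mathrm{BE}}\le1$.

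For the upper bound we verify the two conditions in the ``in particular'' clause of Theorem~\ref{thm:qsp-modular-cost}. First, since $\eps_{\mathrm{syn}}=\eps_{\mathrm{sim}}/3$, the synthesis term is $\log(3N^{(2)}_{\rm calls}/\eps_{\mathrm{sim}})$, which is $O(B_2)$ by hypothesis. Second, we need $C_2=O(N^{(2)}_{\rm calls}B_2)$. To bound the controlled-query cost $C_2$, we control only the Clifford layers and the $T$/Pauli-rotation structure of $U$ as follows. Write $U$ in the form~\eqref{eq:pauli-normal-form}. In $C\prod_j R(P_j)$, only the non-Clifford rotations need control. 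A controlled $R(P_j)$ can be implemented with $O(1)$ $T$ gates after Clifford conjugation of $P_j$ to a single qubit: a controlled $\pi/8$ rotation costs a constant number of $T$ gates. The Clifford $C$, however, cannot be controlled for free, so we need a different route for it. The standard choice is to control only the SELECT part of the LCU and leave PREPARE and its inverse uncontrolled. This yields a controlled block encoding at $O(B_2)$ cost, because controlling each multiplexed Pauli in SELECT adds $O(1)$ Toffolis per unary-iteration step. That gives $C_2=O(B_2)$, and hence $C_2=O(N^{(2)}_{\rm calls}B_2)$, since $N^{(2)}_{\rm calls}\ge1$. The stated hypothesis $B_2=O(N^{(2)}_{\rm calls})$ is then used only to absorb into $O(N^{(2)}_{\rm calls}B_2)$ any additive $O(B_2)$ or $O(B_2^2)$ overheads, namely the OAA reflection and the projector-controlled NOTs on $a=O(B_2)$ ancillas.

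\emph{Main obstacle.} I expect the delicate step to be bounding $C_2$. The modular accounting treats $U$ as a black box, so the paper's theorem assumes $C_F$ rather than proving a bound on it. I would first try the control-SELECT-only argument above for the padded Liu \emph{et al.} circuit. If that is awkward, I would fall back on the generic bound $C_2=O(B_2+a)=O(B_2)$ from Appendix~\ref{app:qsp-oaa-cost}-style reasoning. In either case, substituting $B_2=n^2\sqrt{\log(n^4/\eps_{\mathrm{BE}})}$ gives the stated $\Theta$.
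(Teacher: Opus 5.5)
\textbf{There is a genuine gap, at the controlled-query bound $C_2$.} The rest of your plan follows the paper's route. You specialize Theorem~\ref{thm:qsp-modular-cost} to $F=2$ at the normalization $\overline\alpha_{2,n}$, inherit the per-module lower bound from Theorem~\ref{thm:secondq-lower}, and use the first hypothesis for the phase-synthesis term. All of that is fine.

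Your main argument for $C_2=O(B_2)$ has two problems.
\begin{itemize}
\item It assumes the padded Liu \emph{et al.} circuit has the form $\mathrm{PREP}^\dagger\,\mathrm{SEL}\,\mathrm{PREP}$, with SELECT a unary iteration over multiplexed Paulis. Nothing in the paper gives this structure; the construction is described only as a SELECT--SWAP coefficient-lookup tradeoff. Also, under Jordan--Wigner each fermionic term becomes several Pauli strings, not one.
\item Even granting that form, ``$O(1)$ Toffolis per unary-iteration step'' does not give $O(B_2)$. A unary iteration over the $L_{\max}=\Theta(n^4)$ coefficient entries costs $\Theta(n^4)$ $T$ gates. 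This is not $O(B_2)=O\bigl(n^2\sqrt{\log(n^4/\eps_{\mathrm{BE}})}\bigr)$ in general, for example at fixed $\eps_{\mathrm{BE}}$.
\end{itemize}
An argument that controls only a core sandwiched between mutually inverse pieces might work with detailed knowledge of that circuit. As written, though, it is an assertion.

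Your fallback, a ``generic bound $C_2=O(B_2+a)$'', is also unavailable. Controlling a Clifford$+T$ circuit means controlling its Clifford gates too, and their number is not bounded by the $T$-count or by $a$. Appendix~\ref{app:qsp-oaa-cost} costs only the projector-controlled NOTs, not controlled queries.

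The paper's route is what your second hypothesis is for.
\begin{itemize}
\item Control the Liu \emph{et al.} circuit gate by gate, using its Clifford gate count $L_{\max}m_b+o(L_{\max})$. This gives $C_2=O(L_{\max}m_b)=O\bigl(n^4\log(n^4/\eps_{\mathrm{BE}})\bigr)=O(B_2^2)$.
\item Then $B_2=O(N^{(2)}_{\rm calls})$ turns this into $C_2=O(N^{(2)}_{\rm calls}B_2)$.
\end{itemize}
You attributed that hypothesis to absorbing ``$O(B_2^2)$'' overheads from OAA and the projector-controlled NOTs, which is mistaken. Those costs are $O(N^{(2)}_{\rm calls}a)$ and $O(a+1)$ with $a=O(B_2)$, and Theorem~\ref{thm:qsp-modular-cost} absorbs them with no extra assumption. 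The only $B_2^2$ term is $C_2$. Replacing your $C_2$ paragraph with the bound above closes the gap.
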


\begin{proof}
Theorem~\ref{thm:secondq-lower} and the padded construction in
Appendix~\ref{app:secondq-common-normalization} give the
matching block-encoding bounds in the stated regime.
The appendix also gives
$C_2=O(B_2^2)=O(N_{\rm calls}^{(2)}B_2)$.
Together with the phase-synthesis assumption above,
Theorem~\ref{thm:qsp-modular-cost} yields the result.
\end{proof}

\subsection{Kitaev Hamiltonians}

We next consider $\cK_n$. By
Corollary~\ref{cor:kitaev-fixed-normalization}, its common
normalization is $M=3n/2$.

\begin{corollary}
\label{cor:qsp-kitaev}
Let $t>0$ and $\eps_{\mathrm{sim}}\in(0,1)$.
Choose $\eps_{\mathrm{BE}}\in(0,1/32)$ such that
$|t|\eps_{\mathrm{BE}}\le\eps_{\mathrm{sim}}/3$, and set
$\eps_{\mathrm{QSVT}}=\eps_{\mathrm{syn}}=\eps_{\mathrm{sim}}/3$.
Let $N_{\rm calls}^{({\rm K})}$ be given by
Eq.~\eqref{eq:qsp-query-count} with $\alpha=M$.

The QSVT construction implements an
$\eps_{\mathrm{sim}}$-approximate block encoding of $e^{-itH}$
for every $H\in\cK_n$. If
\[
  \log\!\left(
    \frac{3N_{\rm calls}^{({\rm K})}}{\eps_{\mathrm{sim}}}
  \right)
  =O(B_{\rm K}(n,\eps_{\mathrm{BE}})),
\]
then its worst-case modular $T$-count satisfies
\[
  T^{\mathrm{mod}}_{\mathrm{QSVT},{\rm K}}
  =
  \Theta\!\left(
    N_{\rm calls}^{({\rm K})}
    \left[n+\log\!\left(
      \frac{1}{\eps_{\mathrm{BE}}}
    \right)\right]
  \right).
\]
\end{corollary}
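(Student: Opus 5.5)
The plan is to mirror the proof of Corollary~\ref{cor:qsp-secondq}: first establish correctness of the simulation, then reduce the $T$-count claim to Theorem~\ref{thm:qsp-modular-cost} with $F={\rm K}$ by checking its two hypotheses.

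\emph{Correctness.} By Corollary~\ref{cor:kitaev-fixed-normalization}, every $H\in\cK_n$ has a $(M,a,\eps_{\mathrm{BE}})$ block encoding $U$ with $a=O(n+\log(1/\eps_{\mathrm{BE}}))$ and $T(U)=O(B_{\rm K})$. As stated in Appendix~\ref{app:qsp-hermitian-approximation}, $\widetilde H=M A_U$ may be taken Hermitian. Corollary~60 of \cite{gilyen2019qsvt} with $\alpha=M$ and the query count \eqref{eq:qsp-query-count} then gives an $\eps_{\mathrm{QSVT}}$-approximate block encoding of $e^{-it\widetilde H}$. The perturbation bound $\|e^{-itH}-e^{-it\widetilde H}\|\le|t|\eps_{\mathrm{BE}}$ and the phase-synthesis error $\eps_{\mathrm{syn}}$ add to this. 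The three-way error budget then yields total error at most $\eps_{\mathrm{sim}}$. Since $M$ does not depend on $J$, one fixed QSVT circuit shape and one value $N^{({\rm K})}_{\rm calls}$ serve every $H\in\cK_n$.

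\emph{$T$-count.} The lower bound $\Omega(N_{\rm calls}B_{\rm K})$ follows by applying Theorem~\ref{thm:qsp-modular-cost} with per-module cost $B_{\rm K}$. Each module is a fixed-normalization block encoding, so $T^{\BE}_{\eps,M}\ge T^{\BE}_{\eps}=\Omega(B_{\rm K})$ by Theorem~\ref{thm:kitaev-optimal}; this is valid because $\eps_{\mathrm{BE}}<1/32$. For the upper bound, the phase-synthesis hypothesis is assumed in the statement. It remains to verify $C_{\rm K}=O(N^{({\rm K})}_{\rm calls}B_{\rm K})$, and this is the only real step.

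\emph{Controlling $C_{\rm K}$.} I will argue that controlled-$U$ costs $O(B_{\rm K})$ $T$ gates, which requires only $N_{\rm calls}\ge1$. In the LCU circuit $U=(\mathrm{PREP}^\dagger\otimes I)\,\mathrm{SEL}\,(\mathrm{PREP}\otimes I)$, the conjugating pair $\mathrm{PREP},\mathrm{PREP}^\dagger$ needs no control, because it cancels when the control is off. So it suffices to control $\mathrm{SEL}$. Controlled-SEL is a unary-iteration multiplexer over $M=O(n)$ two-qubit Pauli terms with an additional control bit. It costs $O(M)=O(n)$ Toffoli gates, hence $O(n)$ $T$ gates, using one extra clean ancilla. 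Therefore $C_{\rm K}=O(n)\le O(B_{\rm K})$.

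The main obstacle is confirming that the specific SEL implementation in Appendix~\ref{app:kitaev-blockencoding}, including its sign phases $\eta_e$, admits this cheap control without hidden cost. If that fails, the fallback is generic Clifford$+T$ control: replace each $T$ gate by a controlled-$T$ and each Clifford by a controlled Clifford. This costs $O(1)$ $T$ gates per original $T$ gate, but the controlled Cliffords are also non-Clifford, so the bound becomes $O(\text{gate count})$. The gate count is polynomial in $B_{\rm K}$, and under $N_{\rm calls}\ge\alpha|t|=\Theta(n|t|)$ this is absorbed only in restricted regimes. So I expect the multiplexer argument to be the one that goes through.

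Once $C_{\rm K}=O(N_{\rm calls}B_{\rm K})$ is established, Theorem~\ref{thm:qsp-modular-cost} gives the $\Theta$ bound.
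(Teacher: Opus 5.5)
Your proposal is correct and follows the paper's proof. The lower bound comes from the fixed-normalization Kitaev bound (Corollary~\ref{cor:kitaev-fixed-normalization}), the phase-synthesis condition is the stated hypothesis, and $C_{\rm K}=O(B_{\rm K})$ is obtained exactly as in Appendix~\ref{app:kitaev-blockencoding} by controlling only SEL while PREP and PREP$^\dagger$ cancel, with $N_{\rm calls}^{({\rm K})}\ge1$ absorbing it. One small correction: the uncontrolled PREP and PREP$^\dagger$ still appear inside controlled-$U$, so $C_{\rm K}=O\bigl(n+\sqrt{n\log(n/\eps_{\mathrm{BE}})}+\log(n/\eps_{\mathrm{BE}})\bigr)$ rather than $O(n)$; this is still $O(B_{\rm K})$, so your conclusion is unaffected.
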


\begin{proof}
Since $C_{\rm K}=O(B_{\rm K})$ and
$N_{\rm calls}^{({\rm K})}\ge1$, the controlled-query cost
is absorbed by $N_{\rm calls}^{({\rm K})}B_{\rm K}$.
The conclusion follows from
Corollary~\ref{cor:kitaev-fixed-normalization} and
Theorem~\ref{thm:qsp-modular-cost}.
\end{proof}

\section{Conclusion and outlook}\label{sec:discussion}

We determine the optimal $T$-count for block encodings of one
fermionic Hamiltonian family at fixed precision and one spin Hamiltonian family. The two
lower bounds require different methods. For the general
second-quantized family, the parameter dimension
$d=\Theta(n^4)$ makes volume covering effective once ancilla
compression bounds the total number of qubits. The Kitaev family has
only $\Theta(n)$ independent couplings, so the same counting argument
is too weak. Instead, stabilizer nullity gives the linear lower bound,
while a two-bond reduction gives the logarithmic precision bound.

The Kitaev lower-bound strategy is not tied to the honeycomb geometry.
The linear lower-bound argument extends to other Pauli spin families
whenever some allowed Hamiltonian maps a stabilizer input to a state
with stabilizer nullity $\Omega(n)$. And the approximation must still force the postselected output to have
stabilizer nullity $\Omega(n)$. The logarithmic precision argument
applies when the family admits a similar reduction to one-qubit state
preparation. These conditions are model dependent.

Our QSVT application shows how the optimal cost of one block-encoding
query contributes to the $T$-count of a fixed Hamiltonian simulation
circuit before any additional compilation. It does not give a lower bound for arbitrary simulation
architectures or for a global recompilation of the full circuit. This
also differs from the lower bounds of Zlokapa \emph{et al.} on
two-qubit gate complexity \cite{zlokapa2026optimal}, whose model allows
arbitrary single-qubit gates at no cost. Their precision dependence
therefore does not directly imply a $T$-count lower bound.

Several questions remain open. Our lower bounds do not cover adaptive
circuits with mid-circuit measurements, and the fermionic result does
not imply the same scaling for molecular Hamiltonians. It would be
interesting to obtain comparable bounds for physically constrained
fermionic families and for broader classes of spin models. More
generally, our results show that term count alone does not determine
block-encoding cost: the dimension and algebraic structure of the
Hamiltonian family are also essential.

\section*{Author contributions}
JM developed and proved the lower bounds for the second-quantized Hamiltonians, as well as the lower and upper bounds for the Kitaev Hamiltonian, contributed to the analysis of the QSVT application, and participated in writing the manuscript. KJ and YL contributed to discussions on the lower-bound arguments, writing and reviewing the manuscript. KJ generated the T-count figure for realistic molecular systems. YL conceived the idea and supervised the project. 

ChatGPT was used as an auxiliary tool in developing arguments related to the ancilla-compression theorem and the $\Omega(n)$ complexity bound for the Kitaev Hamiltonian, as well as for assistance with writing and language editing.

\bibliographystyle{quantum}
\bibliography{main}

\appendix

\section{Symplectic preliminaries}\label{app:symplectic}

This appendix collects the facts about the binary symplectic representation
used in Appendix~\ref{app:pauli-compression}.  Everything is finite-dimensional
linear algebra over $\F_2$; see, e.g., Refs.~\cite{dehaene2003clifford,gottesman1997thesis}.

Ignoring the overall phases $\{\pm1,\pm i\}$, every $a$-qubit Pauli operator can
be written uniquely as $X^{x}Z^{z}$ with $x,z\in\F_2^{a}$, and we set
$\nu(X^{x}Z^{z})=(x,z)\in V:=\F_2^{2a}$ as in \eqref{eq:nu}.  Two Pauli
operators $P,P'$ satisfy $PP'=(-1)^{\omega(\nu(P),\nu(P'))}P'P$, where
\begin{equation}\label{eq:symp-form}
  \omega\bigl((x,z),(x',z')\bigr)=x\cdot z'+z\cdot x'\pmod 2 .
\end{equation}
The symplectic form $\omega$ is bilinear, alternating and nondegenerate.  For a subspace
$W\subseteq V$ its symplectic complement is
$W^{\perp}=\{v\in V:\omega(v,w)=0\ \forall w\in W\}$.

\begin{lemma}\label{lem:dim-perp}
For every subspace $W\subseteq V$, $\dim W+\dim W^{\perp}=2a$.
\end{lemma}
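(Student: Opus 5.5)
The plan is to identify $V=\F_2^{2a}$ with its dual via the nondegenerate form and reduce the claim to rank–nullity. Consider the linear map
\[
  \phi:V\to V^{*}:=\operatorname{Hom}(V,\F_2),\qquad \phi(v)=\omega(v,\cdot).
\]
Nondegeneracy of $\omega$ means $\ker\phi=0$. Since $\dim V^{*}=\dim V=2a$, the map $\phi$ is an isomorphism.

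First I will verify nondegeneracy directly from \eqref{eq:symp-form}, using the standard basis $e_i,f_i$. Suppose $v=(x,z)$ satisfies $\omega(v,w)=0$ for every $w$. Then $\omega(v,f_i)=x_i=0$ for each $i$, and $\omega(v,e_i)=z_i=0$ for each $i$, so $v=0$.

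Next, I will compose $\phi$ with the restriction map
\[
  r:V^{*}\to W^{*},\qquad \lambda\mapsto\lambda|_W.
\]
This map is surjective: extend a basis of $W$ to a basis of $V$, and extend any functional on $W$ by zero on the added basis vectors. The composite $r\circ\phi:V\to W^{*}$ is then surjective as well, because $\phi$ is bijective. Its kernel is exactly $\{v:\omega(v,w)=0\ \forall w\in W\}=W^{\perp}$. Rank–nullity now gives
\[
  2a=\dim W^{\perp}+\dim W^{*}=\dim W^{\perp}+\dim W.
\]

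An alternative route uses matrices. Choose a basis of $W$ and form the $k\times 2a$ matrix $G$ with these basis vectors as rows. Then $W^{\perp}$ is the null space of $G\Omega$, where $\Omega=\begin{pmatrix}0&I\\ I&0\end{pmatrix}$ is invertible, so $\operatorname{rank}(G\Omega)=k$ and the null space has dimension $2a-k$.

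There is no real obstacle in this argument. The only point requiring care is that over $\F_2$ the form is alternating rather than positive definite, so $W\cap W^{\perp}$ may be nonzero. This does not affect the proof, because it never uses $W\oplus W^{\perp}=V$; it uses only nondegeneracy.
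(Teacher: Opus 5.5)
Your proof is correct and follows the paper's argument essentially verbatim: the paper also composes the isomorphism $v\mapsto\omega(v,\cdot)$ from $V$ to $V^{*}$ with the surjective restriction $V^{*}\to W^{*}$, identifies the kernel as $W^{\perp}$, and applies rank--nullity. Your direct check of nondegeneracy from the standard basis and the alternative matrix argument via $G\Omega$ are both valid extras that the paper does not include.
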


\begin{proof}
Consider $\Phi_W:V\to W^{*}$, $\Phi_W(v)=\omega(v,\cdot)|_W$, whose kernel is
$W^{\perp}$ by definition.  Nondegeneracy of $\omega$ means that
$v\mapsto\omega(v,\cdot)$ is an isomorphism $V\to V^{*}$, and the restriction
map $V^{*}\to W^{*}$ is surjective; hence $\Phi_W$ is surjective.  Therefore
\[
  2a=\dim V=\dim\ker\Phi_W+\dim W^{*}
  =\dim W^{\perp}+\dim W,
\]
as claimed.
\end{proof}

A family $h_1,g_1,\ldots,h_a,g_a$ is a \emph{symplectic basis} of $V$ if it is a
basis and $\omega(h_i,g_j)=\delta_{ij}$, $\omega(h_i,h_j)=\omega(g_i,g_j)=0$.
The standard symplectic basis is $e_i=(\mathbf e_i,0)$, $f_i=(0,\mathbf e_i)$,
corresponding to $X_i$ and $Z_i$.  A linear map $S:V\to V$ is
\emph{symplectic} if $\omega(Sv,Sv')=\omega(v,v')$ for all $v,v'$; equivalently,
$S$ maps some (hence any) symplectic basis to a symplectic basis.  The group of
such maps is $\Sp(2a,\F_2)$.

The subspace $L_Z$ of \eqref{eq:LZ} is the image under $\nu$ of the stabilizer
group of $\ket{0^{a}}$, and any two vectors in $L_Z$ are symplectically
orthogonal.

\begin{fact}[Clifford--symplectic correspondence]\label{fact:cliff-symp}
Conjugation by a Clifford operator $C\in\Cl_a$ induces a symplectic map
$S_C\in\Sp(2a,\F_2)$ through $\nu(CPC^{\dagger})=S_C\,\nu(P)$, and the induced
homomorphism $\Cl_a\to\Sp(2a,\F_2)$ is surjective with kernel generated by the
Pauli group and global phases \cite{dehaene2003clifford}.
\end{fact}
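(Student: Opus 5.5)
The plan is to verify the three claims in turn: well-definedness and symplecticity, the kernel, and surjectivity. We work with the phase-free labels $\nu$ throughout.

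\emph{Well-definedness, linearity, symplecticity.} For $C\in\Cl_a$ and a Pauli $P$, $CPC^{\dagger}$ is again a Pauli up to a phase in $\{\pm1,\pm i\}$, by the definition of the Clifford group. So $S_C(\nu(P)):=\nu(CPC^{\dagger})$ is well defined, since phases are invisible to $\nu$.

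Linearity follows from $\nu(PP')=\nu(P)+\nu(P')$ together with $C(PP')C^{\dagger}=(CPC^{\dagger})(CP'C^{\dagger})$. In particular $S_C$ is determined by its values on the $X_i,Z_i$.

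Conjugation preserves the relation $PP'=\pm P'P$. Hence $\omega(S_Cv,S_Cv')=\omega(v,v')$, so $S_C$ is invertible and symplectic. The homomorphism property $S_{CD}=S_CS_D$ holds because $CD\,P\,(CD)^{\dagger}=C(DPD^{\dagger})C^{\dagger}$.

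\emph{Kernel.} Paulis and global phases clearly lie in the kernel, since they conjugate every Pauli to $\pm$ itself.

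Conversely, suppose $S_C=\mathrm{id}$. Then $CPC^{\dagger}=\lambda_P P$ for each Pauli $P$. Since $(CPC^{\dagger})^2=P^2$ for Hermitian $P$, we get $\lambda_P=\pm1$. The map $v\mapsto\lambda$ is a character of $\F_2^{2a}$, because it is multiplicative on products. By nondegeneracy of $\omega$ (Lemma~\ref{lem:dim-perp}), every character of $\F_2^{2a}$ has the form $v\mapsto(-1)^{\omega(w,v)}$ for some $w$. This is exactly the sign pattern produced by conjugation with a Pauli $Q_w$, $\nu(Q_w)=w$.

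Then $Q_w^{\dagger}C$ commutes with every Pauli. Since the Paulis span all $2^a\times2^a$ matrices, $Q_w^{\dagger}C$ is a scalar, which shows $C\in\langle\text{Paulis},\text{phases}\rangle$.

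\emph{Surjectivity.} This is the main step, and I would handle it with transvections. For $0\ne w\in V$, let $\tau_w(v)=v+\omega(v,w)w$. Take a Hermitian Pauli $P_w$ with $\nu(P_w)=w$ and set $G_w=(I+iP_w)/\sqrt2=\exp(i\pi P_w/4)$. A direct computation gives the action of conjugation by $G_w$:
\begin{itemize}
\item if $P$ commutes with $P_w$, then $G_wPG_w^{\dagger}=P$;
\item if $P$ anticommutes with $P_w$, then $G_wPG_w^{\dagger}=iP_wP$.
\end{itemize}
In both cases the result is a Pauli, so $G_w$ is Clifford, and the induced map is $S_{G_w}=\tau_w$.

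It then remains to invoke the classical fact that $\Sp(2a,\F_2)$ is generated by symplectic transvections \cite{artin1957geometric,grove2002classical}. I expect citing or proving this generation statement to be the only nontrivial input.

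If a self-contained route is preferred, one can instead show directly that the images of $H$, $S$, and $\mathrm{CNOT}$ act transitively on symplectic bases. The strategy is to map an arbitrary symplectic pair $(h_1,g_1)$ to $(e_1,f_1)$ by a Gaussian-elimination-type sweep, then induct on $a$ within $\{e_1,f_1\}^{\perp}$. Transitivity on symplectic bases is equivalent to surjectivity, because a symplectic map is determined by the image of one symplectic basis.
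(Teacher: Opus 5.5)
Your proof is correct, and it takes a different route from the paper. The paper gives no proof of this Fact and simply defers to Dehaene--De Moor \cite{dehaene2003clifford}. That reference describes the full Clifford group by binary data (a symplectic matrix together with a phase vector) and also supplies explicit decompositions into one- and two-qubit gates.

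Your argument is self-contained apart from one classical input:
\begin{itemize}
\item The kernel is handled by viewing the sign pattern $v\mapsto\lambda$ as a character of $\F_2^{2a}$, matching it to conjugation by a Pauli, and using that Paulis span the full matrix algebra.
\item Surjectivity rests on the observation that $\exp(i\pi P_w/4)=R(P_w)^{-2}$ induces the transvection $\tau_w$.
\item The classical input is the theorem that transvections generate $\Sp(2a,\F_2)$. It holds over every field, including characteristic~$2$, so the Artin/Grove references the paper already cites suffice.
\end{itemize}

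The trade-off is this. The generation theorem is imported rather than proved, and you get neither phase bookkeeping nor an efficient gate decomposition. In return you get a short, conceptual argument that fits the paper's Pauli-rotation normal form. It also delivers exactly what the paper later uses: lifting a symplectic map to a Clifford in Proposition~\ref{prop:pauli-compression} and Lemma~\ref{lem:fattal}, and the fact that Paulis induce the identity, used when $X^{b}$ is inserted.

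Two small points to tighten:
\begin{itemize}
\item Multiplicativity of $v\mapsto\lambda$ relies on $\lambda_P$ being independent of the phase of $P$. This matters because the product of two anticommuting Hermitian Paulis is not Hermitian.
\item Identifying characters with $(-1)^{\omega(w,\cdot)}$ uses that $v\mapsto\omega(v,\cdot)$ is an isomorphism $V\to V^{*}$. That follows from nondegeneracy plus a dimension count, which is the step inside the proof of Lemma~\ref{lem:dim-perp}, not the statement of that lemma.
\end{itemize}
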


\section{Pauli-support compression}\label{app:pauli-compression}

Throughout this appendix $V=\F_2^{2a}$ with the notation of
Appendix~\ref{app:symplectic}.  We write $\cW$ for the span of the ancilla parts
of the Pauli operators appearing in the rotations $R(P_j)$ of the canonical
form \eqref{eq:pauli-normal-form}, reserving $W$ for the Clifford appearing in
Appendix~\ref{app:clifford-corner}.

\subsection{The symplectic compression lemma}

\begin{lemma}[Symplectic support compression]\label{lem:symplectic-compression}
Let $\cW\subseteq V$ be a subspace of dimension $r \le a$.  There exists
$S\in\Sp(2a,\F_2)$ such that
\begin{equation}\label{eq:symp-goal}
  S(L_Z)=L_Z
  \qquad\text{and}\qquad
  S(\cW)\subseteq\Span\{e_1,f_1,\ldots,e_r,f_r\}.
\end{equation}
\end{lemma}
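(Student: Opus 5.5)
We need $S$ in the stabilizer of the Lagrangian $L_Z$; such maps have block form $\begin{pmatrix}A&0\\ B&A^{-T}\end{pmatrix}$ with $A\in GL(a,\F_2)$ and $A^{-T}B$ symmetric (acting on $(x,z)$ as $x\mapsto Ax$, $z\mapsto Bx+A^{-T}z$). Equivalently, they are induced by Cliffords fixing $\ket{0^a}$ up to phase: CNOT networks, $CZ$ and $S$ gates.

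The plan is to choose the $x$-part by a change of basis $A$, then handle the $z$-part. The construction has four steps.

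\textbf{Step 1 (projecting to $x$-parts).} Let $\pi_X:V\to\F_2^a$ denote the $x$-projection and set $X_{\cW}:=\pi_X(\cW)$. Then $\dim X_{\cW}=:k\le r$. Choose $A\in GL(a,\F_2)$ with $A X_{\cW}=\Span\{\mathbf e_1,\dots,\mathbf e_k\}$. After applying the corresponding symplectic map, every $w\in\cW$ has $x$-part supported on the first $k$ coordinates.

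\textbf{Step 2 (splitting $\cW$).} Let $\cW_0=\cW\cap L_Z$. These are the purely $Z$-type elements, and $\dim\cW_0=r-k$. Choose $w_1,\dots,w_k\in\cW$ whose $x$-parts are $\mathbf e_1,\dots,\mathbf e_k$, so that $w_i=(\mathbf e_i,z_i)$.

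\textbf{Step 3 (clearing the tails of the $w_i$).} We want to remove the components of the $z_i$ outside a small coordinate set using maps $(x,z)\mapsto(x,z+Bx)$ with $B$ symmetric, which act trivially on $L_Z$. For $i\le k$ and $j>k$, the entry $B_{ji}$ (together with its mirror $B_{ij}$) cancels coordinate $j$ of $z_i$. The mirror entry $B_{ij}$ acts on $x$-coordinate $j>k$, which is zero on all of $\cW$, so it disturbs nothing. After this, each $w_i$ has $z$-part supported on $\{1,\dots,k\}$.

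\textbf{Step 4 (compressing $\cW_0$).} Now $\cW_0\subseteq L_Z$ with $\dim\cW_0=r-k$. Consider its image under truncation to coordinates $k+1,\dots,a$. Apply a block-diagonal $A=\mathrm{diag}(I_k,A')$, which sends $z\mapsto A^{-T}z$ and leaves coordinates $1..k$ untouched. Choosing $A'$ suitably maps this truncated image into $\Span\{\mathbf e_{k+1},\dots,\mathbf e_{r}\}$; the dimension of the truncated image is at most $r-k$. Since $A$ fixes $x$-coordinates $1..k$ and $x$ vanishes elsewhere on $\cW$, the $x$-parts are unchanged. The $z$-parts of the $w_i$ are also unchanged, because they vanish on coordinates $>k$.

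\textbf{Conclusion.} Every element of $\cW$ is now a combination of vectors with $x$ supported on $1..k$ and $z$ supported on $1..r$, so $S(\cW)\subseteq\Span\{e_1,f_1,\dots,e_r,f_r\}$. $L_Z$ is preserved at every step. The composition of these maps gives the required $S$.

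The main obstacle is the bookkeeping in Steps 3 and 4: checking that the symmetric constraint on $B$ and the simultaneous transformation $z\mapsto A^{-T}z$ do not reintroduce support outside the first $r$ coordinates. The ordering above, which first clears the $w_i$ and then applies a block-diagonal $A$ acting only on coordinates beyond $k$, is chosen precisely so that each step leaves earlier progress intact.
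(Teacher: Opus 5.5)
Your proof is correct, and it takes a genuinely different route from the paper.

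\textbf{The paper's route.} The paper argues coordinate-free. It counts dimensions of $L_Z\cap\cW^{\perp}$ and $\cW\cap\cW^{\perp}$ to find an $(a-r)$-dimensional $G\subseteq L_Z\cap\cW^{\perp}$ meeting $\cW$ trivially. It then builds dual partners for $G$ inside $\cW^{\perp}$, orthogonalizes them, and completes to a symplectic basis whose second members span $L_Z$. $S$ is the map sending this adapted basis to the standard one. The inclusion $S(\cW)\subseteq\Span\{e_1,f_1,\ldots,e_r,f_r\}$ follows because the last $a-r$ pairs lie in $\cW^{\perp}$.

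\textbf{Your route.} You work inside the stabilizer of $L_Z$, using its two generating families: $\mathrm{diag}(A,A^{-T})$ (CNOT networks) and $(x,z)\mapsto(x,z+Bx)$ with $B$ symmetric (CZ and phase gates). With these you perform what is essentially Gaussian elimination on the tableau of $\cW$:
\begin{enumerate}[label=(\arabic*)]
\item row-reduce the $X$-parts onto $k=\dim\pi_X(\cW)$ coordinates;
\item clear the $Z$-tails of the $X$-carrying generators, where the mirror entries of $B$ touch only $x$-coordinates that vanish on $\cW$;
\item row-reduce the remaining $Z$-only part on coordinates $k+1,\dots,a$ with a block-diagonal $A$.
\end{enumerate}
Each step checks out, including the edge cases $k=0$ and $k=r$. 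The hypothesis $r\le a$ is exactly what makes the last step possible.

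\textbf{What each buys.} Your route is fully constructive. It gives the slightly finer conclusion that the $X$-parts of $S(\cW)$ occupy only $k$ qubits. It also exhibits a lift of $S$ as a product of CNOT, CZ and $S$ gates, which fixes $\ket{0^a}$ exactly. That would make the phase and $X^{b}$ correction in Proposition~\ref{prop:pauli-compression} unnecessary. The paper's argument is more conceptual and choice-free, at the price of a dimension count and several orthogonalization passes.

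\textbf{A small slip.} For the map acting as $x\mapsto Ax$, $z\mapsto Bx+A^{-T}z$, the symplectic condition is that $A^{T}B$ (equivalently $BA^{-1}$) is symmetric, not $A^{-T}B$. This is harmless, because you only ever use the cases $B=0$ or $A=I$.
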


\begin{proof}
The idea is to construct a symplectic basis $h_1,g_1,\ldots,h_a,g_a$ of $V$
such that $g_1,\ldots,g_a$ is a basis of $L_Z$ and such that the last $a-r$
symplectic pairs $(h_i,g_i)$, $i>r$, lie in $\cW^{\perp}$.  The map sending
$h_i\mapsto e_i$, $g_i\mapsto f_i$ then has both required properties.

\emph{Step 1: dimension count.}
Put
\[
  K_0=L_Z\cap\cW^{\perp},\qquad R_0=\cW\cap\cW^{\perp},
\]
and let $\pi_X:V\to\F_2^{a}$, $\pi_X(x,z)=x$, be the projection onto the $X$
part, with $d_X:=\dim\pi_X(\cW)$.  Since $\ker(\pi_X|_{\cW})=L_Z\cap\cW$,
\begin{equation}\label{eq:LZ-cap-W}
  \dim(L_Z\cap\cW)=r-d_X .
\end{equation}
A vector $(0,z)\in L_Z$ lies in $\cW^{\perp}$ if and only if
$\omega((0,z),(x,z'))=z\cdot x=0$ for all $(x,z')\in\cW$, that is, if and only
if $z\in\pi_X(\cW)^{\perp_*}$ with respect to the ordinary dot product on
$\F_2^{a}$.  Hence
\begin{equation}\label{eq:dim-K}
  \dim K_0=a-d_X .
\end{equation}
Because $K_0\cap R_0\subseteq L_Z\cap\cW$, equations \eqref{eq:LZ-cap-W} and
\eqref{eq:dim-K} give
\[
  \dim K_0-\dim(K_0\cap R_0)\ \ge\ (a-d_X)-(r-d_X)=a-r .
\]
We may therefore choose an $(a-r)$-dimensional subspace $G\subseteq K_0$ with
$G\cap R_0=\{0\}$.  Since $G\subseteq K_0\subseteq\cW^{\perp}$, any vector of
$G\cap\cW$ would lie in $\cW\cap\cW^{\perp}=R_0$, so in fact
\begin{equation}\label{eq:G-cap-W}
  G\cap\cW=\{0\}.
\end{equation}
Fix a basis $g_{r+1},\ldots,g_a$ of $G$.  Since $G\subseteq L_Z$, these vectors
are mutually symplectically orthogonal.

\emph{Step 2: dual vectors in $\cW^{\perp}$.}
Define $\Psi:\cW^{\perp}\to G^{*}$ by $\Psi(h)=\omega(h,\cdot)|_G$.  Its kernel
is $G^{\perp}\cap\cW^{\perp}=(G\cup\cW)^{\perp}$.  Since
$\Span(G\cup\cW)=G\oplus\cW$ with $G\cap\cW=\{0\}$ by \eqref{eq:G-cap-W},
Lemma~\ref{lem:dim-perp} gives
$\dim\ker\Psi=2a-\dim(G\oplus\cW)=2a-a=a$.  Since
$\dim\cW^{\perp}=2a-r$,
\[
  \dim\im\Psi=\dim\cW^\perp-\dim\ker\Psi=a-r=\dim G^{*},
\]
so $\Psi$ is surjective.  Choose $\widetilde h_{r+1},\ldots,\widetilde h_a\in\cW^{\perp}$
with
\[
  \omega(\widetilde h_i,g_j)=\delta_{ij},\qquad r+1\le i,j\le a .
\]
These need not be mutually orthogonal.  Correct them in descending order by
\begin{equation}\label{eq:tail-orthogonalization}
  h_i=
  \begin{cases}
    \displaystyle
    \widetilde h_i+\sum_{j=i+1}^{a}\omega(\widetilde h_i,\widetilde h_j)\,g_j,
    & r+1\le i<a,\\[0.8em]
    \widetilde h_a, & i=a,
  \end{cases}
\end{equation}
Every $g_j$ is symplectically orthogonal to every $g_l$, so the duality relations are preserved:
$\omega(h_i,g_j)=\delta_{ij}$.  Moreover $h_i\in\cW^{\perp}$ because both
$\widetilde h_i$ and the $g_j$ lie there.  For $i<k$, using
$\omega(h_i,g_l)=\delta_{il}=0$ for $l \geq i$ except $l=i$, and
$\omega(g_j,\widetilde h_k)=\delta_{jk}$,
\[
  \omega(h_i,h_k)=\omega(h_i,\widetilde h_k)
  =\omega(\widetilde h_i,\widetilde h_k)
   +\sum_{j>i}\omega(\widetilde h_i,\widetilde h_j)\,\omega(g_j,\widetilde h_k)
  =\omega(\widetilde h_i,\widetilde h_k)+\omega(\widetilde h_i,\widetilde h_k)=0
\]
over $\F_2$.  Hence $h_{r+1},g_{r+1},\ldots,h_a,g_a$ form $a-r$ symplectic
pairs, all of whose members lie in $\cW^{\perp}$.

\emph{Step 3: completing the basis of $L_Z$.}
Extend $g_{r+1},\ldots,g_a$ to a basis
$\widetilde g_1,\ldots,\widetilde g_r,g_{r+1},\ldots,g_a$ of $L_Z$, and for
$1\le i\le r$ set
\begin{equation}\label{eq:head-g-correction}
  g_i=\widetilde g_i+\sum_{k=r+1}^{a}\omega(\widetilde g_i,h_k)\,g_k .
\end{equation}
This is again a basis of $L_Z$, and since $\omega(g_k,h_l)=\delta_{kl}$ for
$k,l>r$ we obtain
\[
  \omega(g_i,h_k)=0,\qquad 1\le i\le r<k\le a .
\]

\emph{Step 4: completing the symplectic basis.}
Let $U_0=\Span\{h_{r+1},g_{r+1},\ldots,h_a,g_a\}$.  The restriction of $\omega$
to $U_0$ is nondegenerate because $U_0$ is spanned by symplectic pairs, so
$U_0^{\perp}$ is a symplectic space of dimension $2a-2(a-r)=2r$, and by Step 3
it contains $g_1,\ldots,g_r$.  Repeating the dual-basis construction and the
orthogonalization \eqref{eq:tail-orthogonalization} inside $U_0^{\perp}$
produces $h_1,\ldots,h_r\in U_0^{\perp}$ with
\[
  \omega(h_i,g_j)=\delta_{ij},\qquad\omega(h_i,h_j)=0,\qquad 1\le i,j\le r .
\]
Then $h_1,g_1,\ldots,h_a,g_a$ is a symplectic basis of $V$.

\emph{Step 5: the transformation.}
Define $S$ by $S(h_i)=e_i$ and $S(g_i)=f_i$ for all $i$.  As $S$ maps a
symplectic basis to a symplectic basis it is symplectic, and since
$L_Z=\Span\{g_1,\ldots,g_a\}$ we get $S(L_Z)=\Span\{f_1,\ldots,f_a\}=L_Z$.
Finally, let $w\in\cW$.  For $i>r$ both $h_i$ and $g_i$ lie in $\cW^{\perp}$, so
\[
  \omega(Sw,e_i)=\omega(Sw,Sh_i)=\omega(w,h_i)=0,
  \qquad
  \omega(Sw,f_i)=\omega(w,g_i)=0 .
\]
These say that both coordinates of $Sw$ in the $i$th symplectic pair vanish for
every $i>r$, that is, $Sw\in\Span\{e_1,f_1,\ldots,e_r,f_r\}$.
\end{proof}

\subsection{Lifting to a Clifford, and preservation of the block}

\begin{proposition}[Pauli-support compression]\label{prop:pauli-compression}
Assume $a>s$.
Let $U$ be written as in \eqref{eq:pauli-normal-form} on $a$ ancillas and $n$
system qubits, let $P_j=B_j\otimes Q_j$ be the ancilla--system splitting of its
Pauli generators, and put
\[
  r=\dim_{\F_2}\Span\{\nu(B_1),\ldots,\nu(B_s)\}\ \le\ s .
\]
Then there is a Clifford $D_A$ on the ancilla register with
\begin{enumerate}[label=(\roman*)]
\item $D_A\ket{0^{a}}=\ket{0^{a}}$, and
\item $D_AB_jD_A^{\dagger}$ acts trivially on the last $a-r$ ancillas for
every $j$.
\end{enumerate}
Moreover, with $D=D_A\otimes I$, the circuit $DUD^{\dagger}$ has the same
encoded block as $U$, hence is again an $(\alpha,a,\eps)$ block encoding of the
same operator.
\end{proposition}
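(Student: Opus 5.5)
The plan is to apply Lemma~\ref{lem:symplectic-compression} to the subspace $\cW=\Span\{\nu(B_1),\ldots,\nu(B_s)\}\subseteq V=\F_2^{2a}$. Its dimension is $r\le s<a$, so the lemma's hypothesis $r\le a$ holds. The lemma produces $S\in\Sp(2a,\F_2)$ with $S(L_Z)=L_Z$ and $S(\cW)\subseteq\Span\{e_1,f_1,\ldots,e_r,f_r\}$. By Fact~\ref{fact:cliff-symp}, there is an ancilla Clifford $D_0$ with $S_{D_0}=S$. Conjugation by $D_0$ sends each $B_j$ to a Pauli whose symplectic vector is $S\nu(B_j)$, up to phase. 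This vector has no component on pairs $i>r$, so $D_0B_jD_0^\dagger$ acts as the identity on the last $a-r$ ancillas. That is property (ii). Multiplying $D_0$ by a Pauli does not change $S_{D_0}$, so (ii) survives the correction made next.

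For property (i), we use that $S(L_Z)=L_Z$. Then $D_0$ maps the stabilizer group of $\ket{0^a}$, namely $\{\pm Z^z\}$, to a group whose unsigned part is again $\{Z^z\}$. So $D_0\ket{0^a}$ is a stabilizer state with stabilizers $\pm Z_i$ for some signs, that is, a computational basis state $\ket{b}$ up to phase. Set $D_A=X^{b}D_0$, which satisfies $D_A\ket{0^a}=\ket{0^a}$ up to a global phase. Absorbing that phase into $D_A$ makes (i) an exact equality, and (ii) is untouched.

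For block preservation, put $D=D_A\otimes I$. Because $D$ fixes $\ket{0^a}$ exactly, $D^\dagger(\ket{0^a}\otimes I)=\ket{0^a}\otimes I$ and $(\bra{0^a}\otimes I)D=\bra{0^a}\otimes I$. Hence $A_{DUD^\dagger}=A_U$, and the normalization, error and ancilla count are all unchanged. The $T$-count is also unchanged, since
\[
DUD^\dagger=(DCD^\dagger)\prod_j R(DP_jD^\dagger)
\]
is again of the form \eqref{eq:pauli-normal-form} with conjugated Pauli generators. The new generators $DP_jD^\dagger=(D_AB_jD_A^\dagger)\otimes Q_j$ are still Hermitian Paulis, up to a sign.

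The main obstacle is the phase bookkeeping in the step for (i): making sure that $S(L_Z)=L_Z$ really forces $D_0\ket{0^a}$ to be a basis state, and that the Pauli correction and global phase can be absorbed without breaking exactness. I would check this directly from the fact that the signed stabilizer group of $D_0\ket{0^a}$ is $D_0\langle Z_i\rangle D_0^\dagger$. The symplectic work itself is already supplied by Lemma~\ref{lem:symplectic-compression}.
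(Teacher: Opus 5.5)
Your proposal is correct and follows the paper's proof essentially step for step. You apply Lemma~\ref{lem:symplectic-compression} to $\cW$, lift $S$ to a Clifford via Fact~\ref{fact:cliff-symp}, use $S(L_Z)=L_Z$ to get $e^{i\theta}\ket b$, correct by $e^{-i\theta}X^{b}$ without changing the induced symplectic map, and read off block preservation from $D(\ket{0^a}\otimes I)=\ket{0^a}\otimes I$. There is one harmless slip: the stabilizer group of $\ket{0^a}$ is $\{Z^{z}\}$, not $\{\pm Z^{z}\}$ (the latter contains $-I$), but your argument only uses the unsigned image, so nothing breaks.
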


\begin{proof}
Apply Lemma~\ref{lem:symplectic-compression} to
$\cW=\Span\{\nu(B_1),\ldots,\nu(B_s)\}$, obtaining $S\in\Sp(2a,\F_2)$ with
\eqref{eq:symp-goal}.  By Fact~\ref{fact:cliff-symp} there is a Clifford
$\widetilde D_A$ inducing $S$.

Since $S(L_Z)=L_Z$ and $L_Z$ is the image of the stabilizer group of
$\ket{0^{a}}$, the Clifford $\widetilde D_A$ maps that stabilizer group to
itself up to signs of generators; equivalently it normalizes the group of
$Z$-type Paulis.  Thus $\widetilde D_A\ket{0^a}$ is stabilized by a signed
maximal $Z$-type stabilizer group.  Equivalently, for each $i$ there is
$\beta_i\in\F_2$ such that
\[
  Z_i\,\widetilde D_A\ket{0^a}=(-1)^{\beta_i}\widetilde D_A\ket{0^a}.
\]
The joint eigenspace specified by these $a$ eigenvalue equations is the
one-dimensional space spanned by the computational-basis state
$\ket b$, where $b=(\beta_1,\ldots,\beta_a)$.  Hence there is
$\theta\in\R$ with
\[
  \widetilde D_A\ket{0^{a}}=e^{i\theta}\ket b .
\]
Set $D_A=e^{-i\theta}X^{b}\widetilde D_A$.  Then $D_A\ket{0^{a}}=\ket{0^{a}}$,
which is (i).  Conjugation by the Pauli $X^{b}$ changes only signs of Paulis,
not their binary symplectic vectors, and the global phase changes nothing, so
$D_A$ still induces $S$.

For (ii), $\nu(D_AB_jD_A^{\dagger})=S\nu(B_j)\in\Span\{e_1,f_1,\ldots,e_r,f_r\}$
by \eqref{eq:symp-goal}, and a Pauli whose symplectic vector has vanishing
$e_i$- and $f_i$-coordinates for $i>r$ acts as the identity on those ancillas.
Thus
\begin{equation}\label{eq:compressed-paulis}
  D_AB_jD_A^{\dagger}=\sigma_j\,B_j'\otimes I^{\otimes(a-r)},
  \qquad\sigma_j\in\{\pm1\},
\end{equation}
for some Hermitian Pauli $B_j'$ on $r$ qubits.

Finally, $D_A\ket{0^{a}}=\ket{0^{a}}$ gives
$D(\ket{0^{a}}\otimes I)=\ket{0^{a}}\otimes I$ and
$(\bra{0^{a}}\otimes I)D^{\dagger}=\bra{0^{a}}\otimes I$, whence
\[
  (\bra{0^{a}}\otimes I)\,DUD^{\dagger}\,(\ket{0^{a}}\otimes I)
  =(\bra{0^{a}}\otimes I)\,U\,(\ket{0^{a}}\otimes I)=A_U .
\]
Since $D$ commutes with each $R(P_j)$ only after conjugation, we record for
later use that $DUD^{\dagger}=(DCD^{\dagger})\prod_jR(DP_jD^{\dagger})$, which
is again of the form \eqref{eq:pauli-normal-form} with the same number of
factors.
\end{proof}

\section{Structure of projected Clifford blocks}\label{app:clifford-corner}

Let $C$ be a Clifford on a register $L$ of $q$ qubits together with a register
$F$ of $f$ qubits, and let
\begin{equation}\label{eq:clifford-corner}
  \Lambda=(\bra{0^{q}}_L\otimes I_F)\,C\,(\ket{0^{q}}_L\otimes I_F).
\end{equation}
This appendix proves the following structure theorem.

\begin{theorem}[Projected Clifford block structure]\label{thm:corner-structure}
Assume $\Lambda\ne0$.  Then there exists an integer $\kappa\ge0$, a Clifford $W$
on $F$, and a projector $\Pi$ onto a stabilizer code with $m$ independent
generators, such that
\begin{equation}\label{eq:corner-form}
  \Lambda=2^{-\kappa/2}\,W\,\Pi ,
\end{equation}
with
\begin{equation}\label{eq:corner-ranges}
  m+\kappa\le q,\qquad m\le\min\{q,f\}.
\end{equation}
All nonzero singular values of $\Lambda$ equal $2^{-\kappa/2}$.
\end{theorem}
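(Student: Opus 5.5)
The plan is to understand $\Lambda$ through its Choi state and then bring that state to a canonical bipartite form. Let $\ket{\Omega}=2^{-f/2}\sum_{x}\ket{x}_{F'}\ket{x}_F$ be the maximally entangled state between $F$ and a reference copy $F'$. The state $C\ket{0^q}_L\ket{\Omega}_{F'F}$ is a stabilizer state on $L\cup F\cup F'$. Postselecting $L$ onto $\ket{0^q}$ means projecting with the $Z$-type stabilizer projector $\proj{0^q}_L$, and the result is $\ket{0^q}\otimes(I\otimes\Lambda)\ket{\Omega}$. This vector is nonzero since $\Lambda\ne0$.

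The first step is the postselection lemma for stabilizer states: projecting a stabilizer state onto a Pauli eigenspace either gives zero, leaves the state unchanged, or multiplies its norm by $2^{-1/2}$ and returns a stabilizer state. After $q$ single-qubit $Z$ projections, the normalized state $\ket{\chi}:=(I\otimes\Lambda)\ket\Omega/\|(I\otimes\Lambda)\ket\Omega\|$ is therefore a stabilizer state on $F'F$. Its norm satisfies $\|(I\otimes\Lambda)\ket\Omega\|^2=2^{-t}$, where $t\le q$ counts the projections that actually halved the norm.

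The second step applies the normal form of Fattal \emph{et al.}: there are Cliffords $W_1$ on $F'$ and $W_2$ on $F$ such that $(W_1\otimes W_2)\ket\chi=\ket{\Phi^+}^{\otimes p}\otimes\ket{0}^{\otimes(f-p)}_{F'}\otimes\ket{0}^{\otimes(f-p)}_F$, up to qubit ordering, where $p$ is the number of Bell pairs. Using the transpose trick $(A\otimes I)\ket\Omega=(I\otimes A^{T})\ket\Omega$, I move $W_1$ onto the $F$ side; $W_1^T$ is again a Clifford. I then read off the operator whose Choi state is the normal form: it is a scalar times a Clifford times a product of $f-p$ rank-one projectors $\proj0$. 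So
\[
\Lambda = c\, W\,\Pi ,
\]
where $\Pi=W_1^{T}\bigl(I^{\otimes p}\otimes\proj{0}^{\otimes(f-p)}\bigr)W_1^{T\dagger}$ is a stabilizer-code projector with $m:=f-p$ generators, and $W$ is a Clifford. Inserting the transposed factors correctly is only bookkeeping.

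The third step determines the constant. Since $W\Pi$ is a partial isometry, all nonzero singular values of $\Lambda$ equal $|c|$. The global phase of $c$ can be absorbed into $W$. The normalization gives $2^{-t}=\|(I\otimes\Lambda)\ket\Omega\|^2=2^{-f}\Tr(\Lambda^\dagger\Lambda)=|c|^2 2^{-m}$, so $|c|^2=2^{-(t-m)}$. Setting $\kappa:=t-m$, I must show $\kappa\ge0$; this also yields $m+\kappa=t\le q$. The inequality $\kappa\ge0$ holds because $\|\Lambda\|\le1$ forces $|c|\le1$, and because $|c|^2$ is a power of two, $\kappa$ is a nonnegative integer. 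Finally, $m\le f$ holds because $m$ counts generators on $f$ qubits, and $m\le t\le q$ follows from $\kappa\ge0$.

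The main obstacle is the first step: justifying that postselection on $\ket{0^q}$ produces a stabilizer state whose norm is an exact power of $2^{-1/2}$, with exponent at most $q$. I would prove this by sequential single-qubit postselection. In each round, either $Z_i$ or $-Z_i$ is already in the stabilizer group, so the norm is preserved or becomes zero, or $Z_i$ anticommutes with some generator, in which case the standard Gottesman update halves the squared norm and yields a new stabilizer state. The fact that $\Lambda\ne0$ rules out the zero outcome at every round.
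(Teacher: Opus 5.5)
Your proposal is correct and follows the paper's proof essentially step for step. Both arguments form the Choi vector of $\Lambda$ by $q$ sequential postselected $Z$-measurements, which gives squared norm $2^{-t}$ with $t\le q$. Both then apply the Fattal et al.\ bipartite normal form and use the transpose trick to write $\Lambda$ as a scalar times a Clifford times a stabilizer-code projector with $m=f-p$ generators. Finally, both use $\|\Lambda\|\le1$ to obtain $\kappa=t-m\ge0$. The only cosmetic difference is how the scalar is fixed: you use $2^{-t}=2^{-f}\Tr(\Lambda^\dagger\Lambda)=|c|^2 2^{-m}$, whereas the paper reads it off directly from $(\Pi_0\otimes\Pi_0)\ket{\Omega_f}=2^{-m/2}\ket{\Omega_p}\ket{0^m}\ket{0^m}$.
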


\subsection{Postselected Pauli measurement}

\begin{lemma}\label{lem:pauli-measurement}
Let $\ket\phi$ be a stabilizer state on $\ell$ qubits with stabilizer group
$\cS$ of rank $\ell$, let $P$ be a Hermitian Pauli operator and
$\Pi_+=\tfrac12(I+P)$.  Exactly one of the following holds:
\begin{enumerate}[label=(\roman*)]
\item $P\in\cS$, and $\Pi_+\ket\phi=\ket\phi$;
\item $-P\in\cS$, and $\Pi_+\ket\phi=0$;
\item $\pm P\notin\cS$, and then $\|\Pi_+\ket\phi\|^{2}=\tfrac12$ and
$\sqrt2\,\Pi_+\ket\phi$ is a stabilizer state whose stabilizer group has rank
$\ell$ and contains $P$.
\end{enumerate}
\end{lemma}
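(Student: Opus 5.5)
The proof will be a case split on how $P$ relates to the stabilizer group $\cS$ of $\ket\phi$. Throughout I use two facts: $\proj\phi=2^{-\ell}\sum_{g\in\cS}g$, and $-I\notin\cS$ since $\cS$ stabilizes a nonzero vector. The latter means $P$ and $-P$ cannot both lie in $\cS$, since otherwise $-I=(-P)P\in\cS$. Consequently the three cases (i) $P\in\cS$, (ii) $-P\in\cS$, (iii) $\pm P\notin\cS$ are mutually exclusive and exhaustive. This gives the ``exactly one'' claim, provided each case forces the stated conclusion.

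\textbf{Cases (i) and (ii).} These are immediate. If $P\in\cS$, then $P\ket\phi=\ket\phi$, so $\Pi_+\ket\phi=\tfrac12(\ket\phi+\ket\phi)=\ket\phi$. If $-P\in\cS$, then $P\ket\phi=-\ket\phi$, so $\Pi_+\ket\phi=0$.

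\textbf{Case (iii).} This is the substantive case, and I proceed in three steps.

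First, I show that $P$ anticommutes with some element of $\cS$. Suppose instead that $P$ commutes with all of $\cS$. Since $\cS$ has rank $\ell$ it is a maximal abelian subgroup of the Pauli group modulo phases, so some phase multiple $\lambda P$ lies in $\cS$.

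Second, I pin down that phase. Both $P$ and elements of $\cS$ are Hermitian with square $I$ (elements of $\cS$ are Hermitian since $-I\notin\cS$). Hence $\lambda=\pm1$, which contradicts $\pm P\notin\cS$. So there is $g\in\cS$ with $gP=-Pg$.

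Third, I compute the norm and the post-measurement state.
\begin{itemize}
\item \emph{Norm.} We have $\|\Pi_+\ket\phi\|^2=\bra\phi\Pi_+\ket\phi=\tfrac12(1+\bra\phi P\ket\phi)$. Moreover $\bra\phi P\ket\phi=\bra\phi Pg\ket\phi=-\bra\phi gP\ket\phi=-\bra\phi P\ket\phi$, where the last step uses $\bra\phi g=\bra\phi$. So $\bra\phi P\ket\phi=0$ and the squared norm is $\tfrac12$.
\item \emph{New stabilizer group.} Let $\cS_0=\{h\in\cS:hP=Ph\}$. This is the kernel of the commutation homomorphism $\cS\to\F_2$, which is nontrivial because of $g$, so $\cS_0$ has index $2$ and rank $\ell-1$. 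Set $\cS'=\langle \cS_0,P\rangle$. This group is abelian. It does not contain $-I$, because if $-I=\pm P h$ with $h\in\cS_0$ we would get $\mp P\in\cS$. Since $P\notin\pm\cS_0$ (as $\cS_0\subseteq\cS$), adjoining $P$ raises the rank, so $\cS'$ has rank $\ell$.
\item \emph{Stabilization.} For $h\in\cS_0$, $h$ commutes with $\Pi_+$, so $h\Pi_+\ket\phi=\Pi_+h\ket\phi=\Pi_+\ket\phi$. Also $P\Pi_+=\Pi_+$, so $P$ stabilizes $\Pi_+\ket\phi$. Hence the nonzero vector $\sqrt2\,\Pi_+\ket\phi$ is stabilized by the rank-$\ell$ group $\cS'$, and is therefore the unique (up to phase) stabilizer state of $\cS'$, which contains $P$.
\end{itemize}

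The main obstacle is the first step of case (iii): showing that commuting with all of $\cS$ forces $\pm P\in\cS$. It requires the maximality of rank-$\ell$ stabilizer groups, which follows from Lemma~\ref{lem:dim-perp}: an isotropic subspace of dimension $\ell$ in $\F_2^{2\ell}$ equals its own symplectic complement.
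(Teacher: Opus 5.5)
Your proof is correct and follows essentially the same route as the paper: the same case split, the maximality argument giving an anticommuting $g\in\cS$, the identity $\bra\phi P\ket\phi=-\bra\phi P\ket\phi$, and the new group generated by $P$ together with the elements of $\cS$ that commute with $P$. The differences are cosmetic. You take that subgroup as the kernel of the commutation homomorphism, where the paper instead adjusts generators so that only $g_1$ anticommutes with $P$. You also justify maximality through Lemma~\ref{lem:dim-perp} and exclude $-I$ algebraically, whereas the paper cites maximality as standard and excludes $-I$ because the group stabilizes a nonzero vector.
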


\begin{proof}
Cases (i) and (ii) are mutually exclusive since $-I\notin\cS$, and both follow
from $P\ket\phi=\pm\ket\phi$.

Assume $\pm P\notin\cS$.  A rank-$\ell$ stabilizer group is a maximal abelian
subgroup of the Pauli group not containing $-I$, so its centralizer in the
Pauli group is generated by $\cS$ and phases; consequently $P$ anticommutes
with some element of $\cS$.  Choose generators $g_1,\ldots,g_\ell$ with $g_1$
anticommuting with $P$, and replace $g_j\mapsto g_1g_j$ for every other
anticommuting generator, so that $g_1$ is the unique anticommuting generator.  Using
$g_1\ket\phi=\ket\phi$ and $Pg_1=-g_1P$,
\[
  \bra\phi P\ket\phi=\bra\phi Pg_1\ket\phi=-\bra\phi g_1P\ket\phi=-\bra\phi P\ket\phi ,
\]
so $\bra\phi P\ket\phi=0$ and
$\|\Pi_+\ket\phi\|^{2}=\tfrac12\bigl(1+\bra\phi P\ket\phi\bigr)=\tfrac12$.
For $j\ge2$ the generator $g_j$ commutes with $\Pi_+$, hence
$g_j\Pi_+\ket\phi=\Pi_+\ket\phi$, and $P\Pi_+=\Pi_+$.  Set
\[
  \widetilde{\cS}:=\langle P,g_2,\ldots,g_\ell\rangle .
\]
Then $\widetilde{\cS}$ stabilizes the normalized state
$\sqrt{2}\Pi_+\ket\phi$.  It has rank $\ell$ because
$P\notin\langle g_2,\ldots,g_\ell\rangle$ (the latter commutes with $g_1$,
while $P$ does not), and it omits $-I$ because it stabilizes a nonzero vector.
\end{proof}

\subsection{The Choi state of the projected block}

Let $R$ be a reference register of $f$ qubits and
$\ket{\Omega_f}_{FR}=2^{-f/2}\sum_x\ket x_F\ket x_R$.  Define the unnormalized
Choi vector
\begin{equation}\label{eq:corner-choi}
  \ket J=(\Lambda\otimes I_R)\ket{\Omega_f} .
\end{equation}

\begin{lemma}\label{lem:choi-stab}
There is an integer $0\le t\le q$ with $\|J\|^{2}=2^{-t}$, and
$\ket{\widetilde J}:=2^{t/2}\ket J$ is a stabilizer state on $F\otimes R$.
\end{lemma}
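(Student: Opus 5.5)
We prove the lemma by writing $\ket J$ as the outcome of a stabilizer preparation followed by a sequence of postselected Pauli measurements, and then tracking the norm through Lemma~\ref{lem:pauli-measurement}. Substituting the definition \eqref{eq:clifford-corner} into \eqref{eq:corner-choi} gives
\[
  \ket{0^q}_L\otimes\ket J
  =\bigl(\proj{0^q}_L\otimes I_{FR}\bigr)\,(C\otimes I_R)\,\bigl(\ket{0^q}_L\otimes\ket{\Omega_f}_{FR}\bigr).
\]
The input $\ket{0^q}\otimes\ket{\Omega_f}$ is a stabilizer state on $\ell=q+2f$ qubits with rank-$\ell$ stabilizer group. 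Its stabilizers are $Z_i$ on $L$, together with $X_jX_j'$ and $Z_jZ_j'$ pairing $F$ with $R$. Applying the Clifford $C\otimes I_R$ yields another normalized stabilizer state $\ket{\phi_0}$ of full rank.

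Next we factor the projector as
\[
  \proj{0^q}=\prod_{i=1}^{q}\tfrac12(I+Z_i),
\]
whose factors commute. We apply them one at a time, setting $\ket{\phi_k}=\Pi_+^{(k)}\ket{\phi_{k-1}}$ with $P=Z_k$, and invoke Lemma~\ref{lem:pauli-measurement} at each step. By induction, each unnormalized $\ket{\phi_k}$ is either $0$ or equal to $2^{-t_k/2}$ times a normalized full-rank stabilizer state, where $t_k$ counts the steps so far that fell into case (iii). Case (i) leaves both the state and $t_k$ unchanged. Case (ii) annihilates the vector. Case (iii) multiplies the norm by $2^{-1/2}$ and renormalizes to a stabilizer state, so $t_k=t_{k-1}+1$. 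One technical point needs care here: after case (iii) the state is normalized, and the lemma must be applied to the renormalized state each time. The scalar factors simply multiply, so this causes no difficulty.

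After $q$ steps we obtain $\ket{0^q}\otimes\ket J=\ket{\phi_q}$ with $t:=t_q\le q$. The standing assumption $\Lambda\neq0$ in this appendix ensures $\ket J\ne0$, since $\ket J$ is, up to normalization, the Choi vector of $\Lambda$ and therefore vanishes only if $\Lambda$ does. Consequently case (ii) never occurs, and $\|J\|^2=\|\phi_q\|^2=2^{-t}$.

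It remains to see that $2^{t/2}\ket J$ is itself a stabilizer state on $FR$ and not merely on $LFR$. The normalized state $2^{t/2}\ket{\phi_q}=\ket{0^q}\otimes2^{t/2}\ket J$ has a full-rank stabilizer group $\mathcal G$ containing $Z_1,\dots,Z_q$: each $Z_k$ was added in case (iii) or was already present in case (i), and it persists afterwards because later projections commute with it. A stabilizer state that factorizes as $\ket{0^q}\otimes\ket\psi$ has $\ket\psi$ a stabilizer state. To see this, multiply the generators of $\mathcal G$ by suitable $Z_i$'s so that their $L$-parts become products of $Z$'s. Each such generator $g$ then acts on $\ket{0^q}$ as $\pm1$, so $g$ restricted to $FR$ stabilizes $\ket\psi$ up to sign. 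These restrictions give $2f$ independent commuting Paulis on $FR$, and $\ket\psi$ is their unique common eigenvector, hence a stabilizer state. I expect this final factorization step to be the only point needing care. The alternative is to note directly that the partial inner product $\bra{0^q}_L$ of a stabilizer state is again stabilizer, which follows from the same generator-reduction argument.
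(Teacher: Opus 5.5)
Your proposal is correct and follows the paper's proof essentially step by step. You prepare $\ket{0^q}\ket{\Omega_f}$, apply $C\otimes I_R$, write $\proj{0^q}_L$ as $q$ commuting postselected $Z_i$ measurements, count case-(iii) steps via Lemma~\ref{lem:pauli-measurement}, exclude case (ii) using $\Lambda\neq0$, and read off the $F\otimes R$ stabilizer from the final group containing all $Z_i^L$. You spell out that last factorization step more fully than the paper's one-line remark. One wording fix: the $L$-parts of the other generators are already $Z$-type because they commute with every $Z_i$, and multiplying by suitable $Z_i$'s then makes those $L$-parts trivial.
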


\begin{proof}
Applying $C\otimes I_R$ to $\ket{0^{q}}_L\ket{\Omega_f}_{FR}$ and then
projecting $L$ onto $\ket{0^{q}}$ produces precisely $\ket J$, because
\[
  (\bra{0^{q}}_L\otimes I_{FR})(C\otimes I_R)(\ket{0^{q}}_L\ket{\Omega_f})
  =(\Lambda\otimes I_R)\ket{\Omega_f}.
\]
The input is a stabilizer state: it is stabilized by
$\langle Z_i^{L}\ (i\le q);\ X_j^{F}X_j^{R},\ Z_j^{F}Z_j^{R}\ (j\le f)\rangle$,
of rank $q+2f$.  The Clifford $C\otimes I_R$ preserves this property, and
$\ket{0^{q}}\!\bra{0^{q}}_L=\prod_{i=1}^{q}\tfrac12(I+Z_i)$, so the projection
is a sequence of $q$ postselected Pauli measurements.  By
Lemma~\ref{lem:pauli-measurement} each step multiplies the squared norm by $1$
or $\tfrac12$ and returns a stabilizer state after normalization; the case
$\Pi_+\ket\phi=0$ cannot occur, since $\Lambda\ne0$ implies $\ket J\ne0$.
Letting $t$ be the number of steps in case (iii) of
Lemma~\ref{lem:pauli-measurement} gives $\|J\|^{2}=2^{-t}$ with
$0\le t\le q$.  In the final stabilizer group every $Z_i^{L}$ occurs, so the
register $L$ factorizes in the state $\ket{0^{q}}$, and the remaining $2f$
generators stabilize $\ket{\widetilde J}$ on $F\otimes R$.
\end{proof}

\begin{lemma}[Generalization of the bipartite stabilizer normal form]\label{lem:fattal}
Let $\ket\chi$ be a stabilizer state on registers $F$ and $R$, each consisting
of $f$ qubits.  Then there are Clifford unitaries $C_1$ on $F$ and $C_2$ on
$R$, and integers $p,m\ge0$ with $p+m=f$, such that
\begin{equation}\label{eq:fattal-form}
  \ket\chi=(C_1\otimes C_2)\bigl(\ket{\Omega_p}\otimes\ket{0^{m}}_F\ket{0^{m}}_R\bigr),
\end{equation}
where $\ket{\Omega_p}$ consists of $p$ Bell pairs shared between the first $p$
qubits of $F$ and of $R$.
\end{lemma}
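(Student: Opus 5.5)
We prove Lemma~\ref{lem:fattal} by reducing to the classical bipartite normal form of Fattal \emph{et al.}~\cite{fattal2004entanglement} and then fixing the count of product qubits. Let $\cS\subseteq\{\pm1,\pm i\}\cdot\{I,X,Y,Z\}^{2f}$ be the rank-$2f$ stabilizer group of $\ket\chi$, and let $\cS_F$, $\cS_R$ be its local subgroups, i.e.\ the elements acting trivially on $R$, respectively on $F$. Set $m_F=\operatorname{rank}\cS_F$, $m_R=\operatorname{rank}\cS_R$, and let $\cS_{FR}=\cS/(\cS_F\cS_R)$.

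\textbf{Step 1 (local structure).} We show $m_F=m_R$ and that $\cS_{FR}$ has even rank $2p$. The reduced state on $F$ is $\rho_F=2^{-f}\sum_{g\in\cS_F}g$. It is proportional to a projector of rank $2^{f-m_F}$, so its entropy is $f-m_F$. The same holds for $R$. Purity of $\ket\chi$ gives $S(\rho_F)=S(\rho_R)$, hence $m_F=m_R=:m$. The quotient has rank $2f-2m$, so $p:=f-m$ satisfies $p+m=f$ as required.

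\textbf{Step 2 (local Cliffords).} Since $\cS_F$ is an abelian group not containing $-I$, of rank $m\le f$, some Clifford $C_1^\dagger$ on $F$ maps it to $\langle Z_{p+1},\ldots,Z_f\rangle$ on $F$; signs are removed by a Pauli correction. The analogous $C_2^\dagger$ does the same on $R$. After applying these, the state is stabilized by $Z^F_j$ and $Z^R_j$ for $j>p$. It therefore factorizes as $\ket{\chi'}\otimes\ket{0^m}_F\ket{0^m}_R$, with $\ket{\chi'}$ a stabilizer state on the first $p$ qubits of each side whose local subgroups are trivial.

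\textbf{Step 3 (maximal entanglement).} It remains to show that $\ket{\chi'}$ equals $(C_1'\otimes I)\ket{\Omega_p}$ for a Clifford $C_1'$ on $F$. Because the local subgroups are trivial, the projection of the stabilizer group of $\ket{\chi'}$ onto its $R$ parts, modulo phases, is injective. As the group has rank $2p$, this projection is all of the $p$-qubit Pauli group on $R$. So every $R$-Pauli $Q$ has a unique partner $\phi(Q)$ with $\phi(Q)\otimes Q$ a stabilizer. Commutation of stabilizers forces $\phi$ to preserve commutation relations, hence $\phi$ is a symplectic map up to signs. By Fact~\ref{fact:cliff-symp}, $\phi$ lifts to a Clifford that maps the stabilizers of $\ket{\Omega_p}$, namely $\{Q^{T}\otimes Q\}$, onto those of $\ket{\chi'}$. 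The remaining sign discrepancy is absorbed into a local Pauli, so the two states coincide up to a global phase, which we absorb into $C_1$.

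\textbf{Main obstacle.} Step~1 is the delicate part: the statement requires $p+m=f$ with equal numbers $m$ of $\ket0$ qubits on each side. The entropy argument handles this cleanly. Step~3 is where care is needed with signs and phases, but the Pauli-correction trick used in Proposition~\ref{prop:pauli-compression} handles it.
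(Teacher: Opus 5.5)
Your proof is correct. Although you describe it as a reduction to Fattal \emph{et al.}, it never uses their theorem: it reproves the part of their canonical form that is needed, so it takes a different route from the paper.

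The paper cites the canonical form directly: local generators on each side, $p$ nonlocal pairs $g_i^F\otimes g_i^R$ and $\bar g_i^F\otimes\bar g_i^R$, equal local ranks $m$, and $p+m=f$. It then applies the Clifford--symplectic correspondence separately on $F$ and on $R$, sending each side's anticommuting pairs and local generators to $\{X_i,Z_i\}_{i\le p}\cup\{Z_{p+j}\}_{j\le m}$. This maps the whole stabilizer group, in one step, to the standard group generated by $X_i^FX_i^R$, $Z_i^FZ_i^R$, $Z_{p+j}^F$, $Z_{p+j}^R$.

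Your route has three parts. You get $m_F=m_R$ from equality of the marginal entropies $f-m_F$ and $f-m_R$. You then strip off the product qubits with local Cliffords first. Finally, you recognize the residual rank-$2p$ group, whose local subgroups are trivial, as the graph $\{\phi(Q)\otimes Q\}$ of a symplectic bijection $\phi$; a single Clifford on $F$ lifting $\phi$ then carries $\ket{\Omega_p}$ to the residual state.

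The sign correction you assert does exist on $F$ alone. The sign discrepancy is a character on this Lagrangian, so it has the form $(\phi(q),q)\mapsto\omega(w,q)$ for some $R$-side vector $w$. Conjugating by the $F$-Pauli with vector $\phi(w)$ produces exactly these signs, because $\phi$ preserves $\omega$.

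The paper's route is shorter given the citation. Yours is self-contained. It uses only three inputs: the form $\rho_F\propto\sum_{g\in\cS_F}g$ of stabilizer marginals, the extension of an isotropic subspace to a symplectic basis (your Step~2), and the surjectivity of $\Cl\to\Sp$.

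Your Step~3 counting would also give $m_F=m_R$ without the entropy argument. If the local ranks differed, the residual group would have rank $p_F+p_R$ and would inject, modulo phases, into both $\F_2^{2p_F}$ and $\F_2^{2p_R}$. That forces $p_F=p_R$.
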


\begin{proof}
Let $\cS$ be the stabilizer group of $\ket\chi$.  The canonical form theorem
for bipartite stabilizer states \cite{fattal2004entanglement} says that, after
choosing suitable independent generators, $\cS$ is generated by local
generators on $F$ and $R$ together with nonlocal pairs
\[
  g_i^F\otimes g_i^R,\qquad \bar g_i^F\otimes \bar g_i^R,
  \qquad 1\le i\le p,
\]
where $g_i^F,\bar g_i^F$ form $p$ anticommuting Pauli pairs on $F$, and the
same is true on $R$.  The local stabilizer subgroups on the two sides have the
same rank $m$, and the canonical form gives $p+m=f$.

On the $F$ side, the $2p+m$ Pauli operators consisting of the $p$
anticommuting pairs and the $m$ local stabilizer generators have the same
commutation relations as
\[
  X_i,\ Z_i\quad (1\le i\le p),\qquad
  Z_{p+j}\quad (1\le j\le m).
\]
The Clifford--symplectic correspondence therefore gives a local Clifford
taking one family to the other, up to signs; these signs can be corrected by
local Paulis.  Applying the same argument on $R$, we obtain local Clifford
unitaries $C_1$ and $C_2$ such that the stabilizer group is mapped to the
standard group generated by
\[
  X_i^F X_i^R,\ Z_i^F Z_i^R\quad (1\le i\le p),
  \qquad
  Z_{p+j}^F,\ Z_{p+j}^R\quad (1\le j\le m),
\]
up to conjugating $\ket\chi$ by $C_1^\dagger\otimes C_2^\dagger$.

This standard group stabilizes
$\ket{\Omega_p}\otimes\ket{0^m}_F\ket{0^m}_R$.  Since a stabilizer state is
determined by its stabilizer group up to global phase, conjugating back gives
\eqref{eq:fattal-form} up to an overall phase; absorbing that phase into
$C_1$ gives the displayed equality.
\end{proof}

\begin{proof}[Proof of Theorem~\ref{thm:corner-structure}]
By Lemma~\ref{lem:choi-stab}, $\ket{\widetilde J}=2^{t/2}\ket J$ is a stabilizer
state on $F\otimes R$; apply Lemma~\ref{lem:fattal} to it, obtaining $C_1,C_2$
and $p+m=f$ with \eqref{eq:fattal-form}.

Let $\Pi_0=I_{2^{p}}\otimes\proj{0^{m}}$ on $f$ qubits.  We have
\[
  (\Pi_0\otimes\Pi_0)\ket{\Omega_f}=2^{-m/2}\,\ket{\Omega_p}\otimes\ket{0^{m}}\ket{0^{m}},
\]
where the right-hand side is written after reordering the qubits so that
$\ket{\Omega_f}=\ket{\Omega_p}\otimes\ket{\Omega_m}$ and using
$(\proj{0^{m}}\otimes\proj{0^{m}})\ket{\Omega_m}=2^{-m/2}\ket{0^{m}}\ket{0^{m}}$.
Combining this with \eqref{eq:fattal-form} and $\ket J=2^{-t/2}\ket{\widetilde J}$,
\[
  \ket J=2^{-(t-m)/2}\,(C_1\Pi_0\otimes C_2\Pi_0)\ket{\Omega_f}.
\]
Using the transpose trick $(I\otimes M)\ket{\Omega_f}=(M^{\top}\otimes I)\ket{\Omega_f}$
together with $\Pi_0^{\top}=\Pi_0=\Pi_0^{2}$,
\[
  \ket J=2^{-(t-m)/2}\,\bigl(C_1\Pi_0C_2^{\top}\otimes I\bigr)\ket{\Omega_f}.
\]
The map $M\mapsto(M\otimes I)\ket{\Omega_f}$ is injective, so comparison with
\eqref{eq:corner-choi} gives
\begin{equation}\label{eq:Lambda-explicit}
  \Lambda=2^{-(t-m)/2}\,C_1\Pi_0C_2^{\top} .
\end{equation}

The transpose of a Clifford is Clifford: for unitary $V$ the matrix $V^{\top}$
is unitary and $V^{\top}PV^{\top\dagger}=(V^{\dagger}P^{\top}V)^{\top}$, while
$P^{\top}=\pm P$ for any Pauli $P$; hence $V$ Clifford implies $V^{\top}$
Clifford.  Setting
\[
  W=C_1C_2^{\top},\qquad \Pi=(C_2^{\top})^{\dagger}\Pi_0C_2^{\top},
  \qquad \kappa=t-m,
\]
Eq.~\eqref{eq:Lambda-explicit} becomes \eqref{eq:corner-form}, with $W$
Clifford.  Moreover $\Pi_0=\prod_{i=p+1}^{f}\tfrac12(I+Z_i)$ is the projector
onto the stabilizer code with generators $Z_{p+1},\ldots,Z_f$, and conjugation
by the Clifford $C_2^{\top}$ maps these to $m$ independent commuting Hermitian
Paulis generating a group that does not contain $-I$; hence $\Pi$ is a
stabilizer-code projector with $m$ independent generators.

It remains to verify \eqref{eq:corner-ranges}.  Since $W$ is unitary and $\Pi$
an orthogonal projector, all nonzero singular values of $\Lambda$ equal
$2^{-\kappa/2}$.  As $\Lambda$ is a block of the unitary $C$ we have
$\|\Lambda\|\le1$, so $2^{-\kappa/2}\le1$ and $\kappa\ge0$, i.e.\ $t\ge m$.
Together with $t\le q$ from Lemma~\ref{lem:choi-stab} this gives
$m+\kappa=t\le q$, hence $m\le q$; and $m\le f$ because $p=f-m\ge0$.
\end{proof}

\section{Completing the ancilla-compression proof}
\label{app:compressed-circuit}

We first give the coherent reconstruction used in the proof of
Theorem~\ref{thm:compression}.

\begin{proposition}[Coherent reconstruction]
\label{prop:reconstruct}
Let the ancilla register be divided as $A=M\sqcup B$, where
$|M|=s$ and $|B|=q$, and let $S$ be an $n$-qubit system register.
Set
\[
    F:=M\sqcup S.
\]
Suppose an ancilla Clifford $D_A$ satisfies
$D_A\ket{0^{s+q}}=\ket{0^{s+q}}$, and that, with
$D=D_A\otimes I_S$,
\[
    DUD^\dagger=C'R_F,
    \qquad
    R_F=\prod_{j=1}^{s}R(P_j'),
\]
where $C'$ is Clifford and every $P_j'$ acts only on $F$.

Let $\widetilde H:=\alpha A_U$ and define
\[
    \Lambda
    :=
    (\bra{0^q}_B\otimes I_F)
    C'
    (\ket{0^q}_B\otimes I_F).
\]
Then the following statements hold.

\begin{enumerate}[label=(\roman*),leftmargin=*]

\item The encoded block satisfies
\begin{equation}
\label{eq:reduced-block}
    \frac{\widetilde H}{\alpha}
    =
    (\bra{0^s}_M\otimes I_S)
    \Lambda R_F
    (\ket{0^s}_M\otimes I_S).
\end{equation}
In particular, if $\widetilde H\ne0$, then $\Lambda\ne0$.

\item Suppose
\[
    \Lambda=2^{-\kappa/2}W\Pi
\]
as in Theorem~\ref{thm:corner-structure}, where
\[
    \Pi=\prod_{j=1}^{m}\frac{I+g_j}{2}
\]
is a stabilizer-code projector with $m$ independent generators.
Introduce a syndrome register $E$ consisting of $m$ qubits
$e_1,\ldots,e_m$, and define
\begin{equation}
\label{eq:projector-gadget}
    G_j
    :=
    H_{e_j}
    \bigl(
        \proj0_{e_j}\otimes I
        +
        \proj1_{e_j}\otimes g_j
    \bigr)
    H_{e_j}.
\end{equation}
Then
\[
    U'
    :=
    W\left(\prod_{j=1}^{m}G_j\right)R_F
\]
is a unitary Clifford$+T$ circuit on $M\sqcup E\sqcup S$ with
$T(U')\le s$, and
\[
    (\bra{0^{s+m}}_{ME}\otimes I_S)
    U'
    (\ket{0^{s+m}}_{ME}\otimes I_S)
    =
    \frac{\widetilde H}{\alpha'},
    \qquad
    \alpha'=2^{-\kappa/2}\alpha.
\]

\end{enumerate}
\end{proposition}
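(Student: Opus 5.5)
For part (i), I will start from the identity $A_U=(\bra{0^{s+q}}\otimes I_S)DUD^\dagger(\ket{0^{s+q}}\otimes I_S)$. This holds because $D_A$ fixes $\ket{0^{s+q}}$, exactly as in the last step of Proposition~\ref{prop:pauli-compression}. I then substitute $DUD^\dagger=C'R_F$ and split the ancilla projector as $\bra{0^s}_M\bra{0^q}_B$.

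Since $R_F$ acts only on $F=M\sqcup S$, it commutes with $\ket{0^q}_B$, giving
\[
(\bra{0^q}_B\otimes I_F)\,C'R_F\,(\ket{0^q}_B\otimes I_F)=\Lambda R_F .
\]
Sandwiching this by $\bra{0^s}_M$ and $\ket{0^s}_M$ yields \eqref{eq:reduced-block}. If $\widetilde H\ne0$, the left side is nonzero, so $\Lambda\ne0$. This is what licenses applying Theorem~\ref{thm:corner-structure} with register $L=B$ and $F=M\sqcup S$, so $f=s+n$ and $m\le\min\{q,s+n\}$.

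For part (ii), the key step is to show that each $G_j$ is a Clifford gadget that coherently implements $(I+g_j)/2$ when the syndrome qubit starts and ends in $\ket0$.

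\emph{Clifford property.} $G_j$ is a controlled-$g_j$ (a controlled Pauli, hence Clifford) conjugated by Hadamards. It is therefore Clifford and contributes no $T$ gates.

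\emph{Projection.} I compute $H\ket0=\ket+$, so the controlled-$g_j$ maps $\ket+\ket\psi$ to $\tfrac1{\sqrt2}(\ket0\ket\psi+\ket1 g_j\ket\psi)$. Projecting back with $\bra+$ gives
\[
(\bra0_{e_j}\otimes I)\,G_j\,(\ket0_{e_j}\otimes I)=\frac{I+g_j}{2}.
\]
The gadgets use distinct syndrome qubits, and the $g_j$ commute. The $E$-projection of $\prod_j G_j$ is therefore $\prod_j\frac{I+g_j}{2}=\Pi$.

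\emph{Assembling the block.} $W$ and $R_F$ act only on $M\sqcup S$, not on $E$. Projecting $E$ first gives
\[
(\bra{0^m}_E\otimes I)\,U'\,(\ket{0^m}_E\otimes I)=W\Pi R_F=2^{\kappa/2}\Lambda R_F .
\]
Projecting $M$ next and using (i) gives
\[
2^{\kappa/2}\,\frac{\widetilde H}{\alpha}=\frac{\widetilde H}{2^{-\kappa/2}\alpha},
\]
which is the claimed normalization $\alpha'=2^{-\kappa/2}\alpha$.

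\emph{$T$-count.} $U'$ is unitary as a product of unitaries. $W$ and the $G_j$ are Clifford, and $R_F$ consists of $s$ rotations $R(P'_j)$. Each $R(P'_j)$ is a Pauli-conjugated $T$ gate up to Clifford, so $T(U')\le s$.

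The main obstacle I expect is bookkeeping rather than depth. I need to be sure the ordering is right, with $W$ applied after the gadgets and $R_F$ first, matching $\Lambda R_F=2^{-\kappa/2}W\Pi R_F$. I also need to confirm that the $g_j$ (which act on $F$, including $M$) do not interfere with the later $M$-projection. That holds because everything is an operator identity on $F$ before $M$ is projected.

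One further subtlety is the case $\Lambda=0$ (zero block). There Theorem~\ref{thm:corner-structure} does not apply, and I would simply treat it separately with a trivial Clifford zero block.
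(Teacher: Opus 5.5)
Your proposal is correct and follows the paper's proof essentially step for step. You use block preservation under $D$ (since $D_A$ fixes $\ket{0^{s+q}}$) and commute $R_F$ past the $B$-projection to get (i); you then use the Hadamard-sandwiched controlled-$g_j$ gadgets on distinct syndrome qubits to produce $\Pi$, and note that the $T$-count comes only from $R_F$. Your closing remark about $\Lambda=0$ lies outside this proposition, since its part (ii) presupposes the factorization $\Lambda=2^{-\kappa/2}W\Pi$; the paper treats the zero-block case separately in the proof of Theorem~\ref{thm:compression}.
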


\begin{proof}
Since $D_A$ fixes $\ket{0^{s+q}}$, conjugation by $D$ preserves the
encoded block. Hence
\[
    \frac{\widetilde H}{\alpha}
    =
    (\bra{0^{s+q}}\otimes I_S)
    C'R_F
    (\ket{0^{s+q}}\otimes I_S).
\]
Writing
\[
    \ket{0^{s+q}}
    =
    \ket{0^s}_M\ket{0^q}_B
\]
and using the fact that $R_F$ acts trivially on $B$ gives
Eq.~\eqref{eq:reduced-block}. If $\Lambda=0$, its right-hand side
vanishes, so $\widetilde H=0$. This proves part~(i).

For part~(ii), each $g_j$ is a Pauli operator, so the controlled
operation in Eq.~\eqref{eq:projector-gadget} is Clifford. Therefore
each $G_j$ is Clifford. Moreover,
\[
    \bra0_{e_j}G_j\ket0_{e_j}
    =
    \bra+_{e_j}
    \bigl(
        \proj0_{e_j}\otimes I
        +
        \proj1_{e_j}\otimes g_j
    \bigr)
    \ket+_{e_j}
    =
    \frac{I+g_j}{2}.
\]
Since the generators $g_j$ commute and each gadget uses a different
syndrome qubit,
\[
    \bra{0^m}_E
    \left(\prod_{j=1}^{m}G_j\right)
    \ket{0^m}_E
    =
    \prod_{j=1}^{m}\frac{I+g_j}{2}
    =
    \Pi.
\]
It follows that
\begin{align*}
    &(\bra{0^{s+m}}_{ME}\otimes I_S)
    U'
    (\ket{0^{s+m}}_{ME}\otimes I_S)
    \\
    &\quad=
    (\bra{0^s}_M\otimes I_S)
    W\Pi R_F
    (\ket{0^s}_M\otimes I_S)
    \\
    &\quad=
    2^{\kappa/2}
    (\bra{0^s}_M\otimes I_S)
    \Lambda R_F
    (\ket{0^s}_M\otimes I_S)
    =
    \frac{\widetilde H}{\alpha'},
\end{align*}
where the last equality uses Eq.~\eqref{eq:reduced-block} and
$\alpha'=2^{-\kappa/2}\alpha$.

Finally, $W$ and all $G_j$ are Clifford. The only non-Clifford part of
$U'$ is therefore $R_F$, which contains at most $s$ Pauli
$\pi/8$ rotations. Hence $T(U')\le s$.
\end{proof}

\begin{proof}[Proof of Theorem~\ref{thm:compression}]
If
\[
    a\le n+2s,
\]
we retain the original circuit. Taking
\[
    U'=U,
    \qquad
    a'=a,
    \qquad
    \kappa=0,
    \qquad
    \alpha'=\alpha
\]
gives all the claimed bounds. We therefore assume
\[
    a>n+2s.
\]
In particular, $a>s$.

Let
\[
    \widetilde H:=\alpha A_U.
\]
Since $U$ is an $(\alpha,a,\eps)$ block encoding of $H$,
\[
    \|H-\widetilde H\|\le\eps.
\]
Thus, it is enough to compress the exact block encoding of
$\widetilde H$.

If $\widetilde H=0$, then $\|H\|\le\eps$. The Clifford circuit
$X\otimes I_S$, with one block-encoding ancilla, has zero encoded
block. It is therefore an $(\alpha,1,\eps)$ block encoding of $H$
with no $T$ gates. Since $n\ge1$ in the Hamiltonian families
considered here,
\[
    1\le n+2s<a,
\]
and the theorem follows with $\kappa=0$. We henceforth assume
$\widetilde H\ne0$.

Write $U$ in the Pauli-rotation normal form
\eqref{eq:pauli-normal-form}:
\[
    U=C\prod_{j=1}^{s}R(P_j).
\]
Decompose each Pauli generator as
\[
    P_j=B_j\otimes Q_j,
\]
where $B_j$ acts on the ancilla register and $Q_j$ acts on the system.
Set
\[
    r
    :=
    \dim_{\F_2}
    \Span\{\nu(B_1),\ldots,\nu(B_s)\}
    \le s.
\]

By Proposition~\ref{prop:pauli-compression}, there is an ancilla
Clifford $D_A$ satisfying
\[
    D_A\ket{0^a}=\ket{0^a}
\]
such that every $D_AB_jD_A^\dagger$ is supported on the same set of
at most $r$ ancillas. Since $r\le s$, we may enlarge this set to a
register $M$ of exactly $s$ ancillas. Let $B$ contain the remaining
\[
    q:=a-s
\]
ancillas, and set
\[
    F:=M\sqcup S,
    \qquad
    |F|=n+s.
\]

With $D=D_A\otimes I_S$, the conjugated circuit has the form
\[
    DUD^\dagger=C'R_F,
    \qquad
    R_F=\prod_{j=1}^{s}R(P_j'),
\]
where $C'$ is Clifford and every $P_j'$ acts only on $F$.

Define
\[
    \Lambda
    :=
    (\bra{0^q}_B\otimes I_F)
    C'
    (\ket{0^q}_B\otimes I_F).
\]
By Proposition~\ref{prop:reconstruct}(i),
\[
    \frac{\widetilde H}{\alpha}
    =
    (\bra{0^s}_M\otimes I_S)
    \Lambda R_F
    (\ket{0^s}_M\otimes I_S).
\]
Since $\widetilde H\ne0$, we have $\Lambda\ne0$.

Theorem~\ref{thm:corner-structure} therefore gives
\[
    \Lambda=2^{-\kappa/2}W\Pi,
\]
where $\kappa\ge0$, $W$ is Clifford, and $\Pi$ is a stabilizer-code
projector with $m$ independent generators satisfying
\[
    m\le\min\{q,n+s\}\le n+s.
\]

Applying Proposition~\ref{prop:reconstruct}(ii) gives a unitary
circuit $U'$ with
\[
    T(U')\le s,
    \qquad
    a'=s+m\le n+2s,
\]
and
\[
    \alpha'A_{U'}=\widetilde H,
    \qquad
    \alpha'=2^{-\kappa/2}\alpha.
\]
Consequently,
\[
    \|H-\alpha'A_{U'}\|
    =
    \|H-\widetilde H\|
    \le\eps,
\]
so $U'$ is an $(\alpha',a',\eps)$ block encoding of $H$.

Because we are in the case $a>n+2s$, the compressed circuit satisfies $a'\le n+2s<a$.
Combining this with the first case gives
\[
    a'\le\min\{a,n+2s\}.
\]
Finally, $\kappa\ge0$ implies
\[
    \alpha'=2^{-\kappa/2}\alpha\le\alpha.
\]
This proves the theorem.
\end{proof}

\section{Coefficient estimates for second-quantized Hamiltonians}\label{app:second-coefficient-norm}
\begin{proof}[Proof of Lemma~\ref{lem:coefficient-norm}]
Let $N_0=\|H(h,V)\|$ and let $\ket\Omega$ be the vacuum.  We write
$x=(h,V)$ and use the elementary fact that the Euclidean norm of the
independent real coordinates of a Hermitian matrix is at most its Frobenius
norm.

\emph{Step 1: one-body coefficients.}
Put $\ket p:=a_p^{\dagger}\ket\Omega$.  Since the two-body term annihilates the
one-particle sector,
\begin{equation}
  \bra p H(h,V)\ket q=h_{pq}.
  \label{eq:one-body-matrix-element}
\end{equation}
Thus $|h_{pq}|\le N_0$ for every $p,q$.  The restriction of $H(h,V)$ to the
one-particle sector is the matrix $h$, so $\|h\|_{\mathrm{op}}\le N_0$.
Since $h$ is an $n\times n$ matrix, its Frobenius norm is bounded by
$\sqrt n$ times its operator norm:
\begin{equation}
  \|h\|_F\le \sqrt n\,N_0.
  \label{eq:hbound}
\end{equation}

\emph{Step 2: two-body coefficients.}
For $P=(p,q)$ with $p<q$, write $\ket P:=b_P^\dagger\ket\Omega$. Fix
$P=(p,q)$ and $R=(r,s)$ with $p<q$ and $r<s$.  By direct computation, the two-particle matrix
element is
\begin{equation}
  \bra P H(h,V)\ket R
  =
  V_{PR}
  +\delta_{qs}h_{pr}-\delta_{ps}h_{qr}
  -\delta_{qr}h_{ps}+\delta_{pr}h_{qs}.
  \label{eq:two-body-matrix-element}
\end{equation}
Indeed, the $V$-term contributes $V_{PR}$ because $b_R\ket R=\ket\Omega$,
and the displayed Kronecker-delta expression is the contribution of
$\sum_{a,b}h_{ab}a_a^\dagger a_b$.  For ordered pairs $p<q$ and $r<s$, at most
two of the four delta terms in \eqref{eq:two-body-matrix-element} are nonzero.
Together with \eqref{eq:one-body-matrix-element} and
$|\bra{P}H(h,V)\ket{R}|\le N_0$, this gives
\begin{equation}
  |V_{PR}|\le 3N_0
  \qquad\text{for every }P,R.
  \label{eq:Vcoef}
\end{equation}
Moreover, when $|P\cup R|=4$ all four delta terms vanish, so
$\bra P H(h,V)\ket R=V_{PR}$.  Let $V^{(4)}$ be the subvector of independent
real two-body coordinates with $|P\cup R|=4$.  If $\Pi_2$ denotes the projection
onto the two-particle sector, then $\|\Pi_2 H\Pi_2\|\le N_0$ and
$\operatorname{rank}(\Pi_2 H\Pi_2)\le D=\binom n2$, hence
\begin{equation}
  \sum_{|P\cup R|=4}|V_{PR}|^2
  \le \|\Pi_2 H\Pi_2\|_F^2
  \le D N_0^2,
  \label{eq:bulk}
\end{equation}
where the first sum is over ordered pairs $P,R\in\cI$.  Since each independent
off-diagonal coordinate is counted twice in this ordered sum,
\begin{equation}
  \|V^{(4)}\|_2^2\le \frac{D}{2}N_0^2\le \frac{n^2}{4}N_0^2.
  \label{eq:bulk-norm}
\end{equation}

\emph{Step 3: the remaining coordinates.}
We now separate the full coefficient vector as
\[
  x=(V^{(4)},x_{\le3}),
\]
where $x_{\le3}$ contains all independent real coordinates of the one-body
matrix $h$ and all independent real coordinates $V_{PR}$ with
$|P\cup R|\le3$.  
The number of real coordinates in $x_{\le3}$ is
\[
  n^2+D^2-6\binom n4
  =
  n^3-\frac32 n^2+\frac32 n
  \le n^3
  \qquad(n\ge2).
\]
Here $6\binom n4$ is precisely the number of real two-body coordinates already
placed in $V^{(4)}$: for each four-element set there are three partitions into
two unordered pairs, and each corresponding off-diagonal Hermitian entry
contributes its real and imaginary parts.

By \eqref{eq:one-body-matrix-element}, every real one-body coordinate in
$x_{\le3}$ has modulus at most $N_0$; by \eqref{eq:Vcoef}, every two-body
coordinate in $x_{\le3}$ has modulus at most $3N_0$.  Therefore every real
coordinate of $x_{\le3}$ has modulus at most $3N_0$, and
\begin{equation}
  \|x_{\le3}\|_2^2\le 9n^3N_0^2.
  \label{eq:rest}
\end{equation}

Combining \eqref{eq:bulk-norm} and \eqref{eq:rest},
\[
  \|x\|_2^2=\|V^{(4)}\|_2^2+\|x_{\le3}\|_2^2
  \le \frac{n^2}{4}N_0^2+9n^3N_0^2
  \le 10n^3N_0^2.
\]
Taking square roots gives
\begin{equation}
  \|x\|_2\le 4n^{3/2}N_0,
\end{equation}
which is precisely \eqref{eq:norm-comparison}.
\end{proof}

\section{Covering estimates}\label{app:second-covering-est}
\begin{proof}[Proof of Lemma~\ref{lem:tube}]
Suppose first that $A_U=0$.  Then $x\in S_{[U]}$ forces $\|H(x)\|\le\eps$, so
Lemma~\ref{lem:coefficient-norm}, applied outside any bounded-coefficient
restriction, gives $\|x\|_2\le\eps/c_0\le\rho$; thus $S_{[U]}$
lies in the ball $B_d(0,\rho)$, which is the degenerate case $x_\star=0$ of
\eqref{eq:tube-inclusion}.  For the volume bound, enclose this ball in the
cylinder $B_{d-1}(0,\rho)\times[-\rho,\rho]$ to get
\[
  \vol(S_{[U]})\le 2\rho\,\omega_{d-1}\rho^{d-1}
  \le (4\sqrt d+3\rho)\omega_{d-1}\rho^{d-1}.
\]
Now suppose $A_U\ne0$.

\emph{Choice of $x_\star$.}  Put
\[
  \mathcal Z_U
  =
  \bigl\{z\in\C\setminus\{0\}:
    \exists x\in\cK,\ \|H(x)-zA_U\|\le\eps
  \bigr\},
\]
which is nonempty by hypothesis.  For $z\in\mathcal Z_U$ with witness $x$,
the reverse triangle inequality and \eqref{eq:Gamma} give
\[
  |z|\|A_U\|=\|zA_U\|
  \le\|H(x)\|+\eps
  \le\Gamma+\eps,
\]
so $\sup_{z\in\mathcal Z_U}|z| \le(\Gamma+\eps)/\|A_U\|<\infty$. Choose $z_\star\in\mathcal Z_U$ such that
\[
  |z_\star|
  \ge
  \frac12\sup_{z\in\mathcal Z_U}|z|,
\] and fix a corresponding witness $x_\star\in S_{[U]}$ for it.
Every witness $z$ of every $x\in S_{[U]}$ then satisfies
\begin{equation}\label{eq:mu-bound}
\mu:=\frac{z}{z_\star},\qquad
  |\mu|\le 2 .
\end{equation}

\emph{The tube.}  Let $x\in S_{[U]}$ have witness $z$, and let $\mu$ be as in
\eqref{eq:mu-bound}.  Using
$z=\mu z_\star$,
\begin{equation}\label{eq:elim}
  \|H(x)-\mu H(x_\star)\|
  \le (1+|\mu|)\eps
  \le 3\eps .
\end{equation}
Since $H(x)$ and $H(x_\star)$ are Hermitian, taking the Hermitian
part gives
\[
  \|H(x)-\operatorname{Re}(\mu)H(x_\star)\|
  \le 3\eps .
\]
By Lemma~\ref{lem:coefficient-norm},
\[
  \|x-\operatorname{Re}(\mu)x_\star\|_2
  \le \frac{3\eps}{c_0}
  \le \rho .
\]
Finally, $|\operatorname{Re}(\mu)|\le|\mu|\le2$, so
$\operatorname{Re}(\mu)x_\star$ lies on the segment
$\{\nu x_\star:-2\le\nu\le2\}$.  This proves
\eqref{eq:tube-inclusion}.

\emph{Volume.}  By \eqref{eq:K-volume} the segment has length
$4\|x_\star\|_2\le4\sqrt d$.  A tube of radius $\rho$ about this segment of length
$4\|x_\star\|_2$ is contained in the union of a cylinder and two half-balls, so its
volume is at most $4\sqrt d\,\omega_{d-1}\rho^{d-1}+\omega_d\rho^{d}$.  To compare
the second term with the cylinder term, write
\[
  \frac{\omega_d}{\omega_{d-1}}
  =
  \sqrt\pi\,
  \frac{\Gamma((d+1)/2)}{\Gamma(d/2+1)} .
\]
Gautschi's inequality, in the form
$\Gamma(z+1)/\Gamma(z+1/2)\ge\sqrt z$ for $z>0$, applied with $z=d/2$, gives
\[
  \frac{\omega_d}{\omega_{d-1}}
  \le \sqrt{\frac{2\pi}{d}}
  \le 3
  \qquad(d\ge1).
\]
Thus the tube volume is bounded by
$(4\sqrt d+3\rho)\omega_{d-1}\rho^{d-1}$, proving
\eqref{eq:tube-volume}.
\end{proof}

\begin{proof}[Proof of Proposition~\ref{prop:covering}]
By hypothesis and Lemma~\ref{lem:tube},
$M\cdot\max_{[U]}\vol(S_{[U]})\ge\vol(\cK)\ge2^{d/2}$, so
\begin{equation}\label{eq:Mchain}
  M\ \ge\ \frac{2^{d/2}}{(4\sqrt d+3\rho)\,\omega_{d-1}\rho^{\,d-1}} .
\end{equation}
From $\Gamma(k/2+1)\ge(k/2e)^{k/2}$ we get
$\omega_k=\pi^{k/2}/\Gamma(k/2+1)\le(2\pi e/k)^{k/2}$, hence
\[
  \omega_{d-1}\rho^{\,d-1}\le\Bigl(\frac{\sqrt{2\pi e}\,\rho}{\sqrt{d-1}}\Bigr)^{d-1}.
\]
Substituting into Eq.~\eqref{eq:Mchain} gives,
\begin{equation}\label{eq:Mchain2}
  M\ \ge\ \frac{\sqrt2}{4\sqrt d+3\rho}
   \left(\frac{\sqrt{d-1}}{\sqrt{\pi e}\,\rho}\right)^{d-1}.
\end{equation}
With $\rho=16n^{3/2}\eps$ and $d-1\ge n^{4}/8$ for $n\ge2$, the base satisfies
\[
  \frac{\sqrt{d-1}}{\sqrt{\pi e}\,\rho}
  \ \ge\ \frac{n^{2}/(2\sqrt2)}{\sqrt{\pi e}\cdot16n^{3/2}\eps}
  \ =\ \frac{c_\star\,\sqrt n}{\eps},
\]
where $c_\star:=\bigl(32\,\sqrt{2\pi e}\bigr)^{-1}$.  Finally $\eps\le1$ gives
$\rho\le16n^{3/2}$.  Since $d\ge n^4/8$,
\[
  32\sqrt d\ge 8\sqrt2\,n^2\ge16n^{3/2}
  \qquad(n\ge2),
\]
and therefore
\[
  4\sqrt d+3\rho\le100\sqrt d.
\]
Since $d\ge5$, this gives
\[
  \log_2(4\sqrt d+3\rho)
  \le \log_2 100+\frac12\log_2 d
  \le 4\log_2 d.
\]
Taking logarithms in \eqref{eq:Mchain2} yields \eqref{eq:covering-count}.
\end{proof}

\section{Details for the Kitaev constant-error bound}\label{app:kitaev-gap}

We first establish the Pauli-eigenspace gap needed for
part~\ref{kitaev-part2}, and then complete its proof. Recall
$M=\tfrac32n$ and, from Lemma~\ref{lem:bond-action} with all $J_e=1$,
\begin{equation}\label{eq:uniform-state-explicit}
  \ket\Psi=\frac{1}{\sqrt M}\Bigl(
    i\sum_{(u,v)\in E_x}\ket{e_v}
    -i\sum_{(u,v)\in E_y}\ket{e_u}
    +\sum_{(u,v)\in E_z}\ket{e_u+e_v}\Bigr),
\end{equation}
a unit vector supported on the $M$ distinct basis states of $\cS$ in
\eqref{eq:support}.

\begin{lemma}\label{lem:pauli-gap}
Let $n\ge4$.  For every Hermitian Pauli $P\ne\pm I$ and either choice of sign,
\[
  \dist\bigl(\ket\Psi,\ker(P\mp I)\bigr)\ \ge\ \frac{1}{\sqrt{2M}} .
\]
\end{lemma}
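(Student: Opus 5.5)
The plan is to reduce the distance to a Pauli expectation value and then reuse the support and sign analysis from the proof of Lemma~\ref{lem:kitaev-trivial-stab}.

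\emph{Reduction to an expectation value.} Since $P$ is a Hermitian Pauli with $P^2=I$, the projector onto $\ker(P\mp I)$ is $\tfrac12(I\pm P)$. Hence
\[
  \dist\bigl(\ket\Psi,\ker(P\mp I)\bigr)^2
  =\bigl\|\tfrac12(I\mp P)\ket\Psi\bigr\|^2
  =\tfrac12\bigl(1\mp\bra\Psi P\ket\Psi\bigr),
\]
and $\bra\Psi P\ket\Psi$ is real. It therefore suffices to prove
\[
  \bigl|\bra\Psi P\ket\Psi\bigr|\le 1-\frac1M
\]
for every Hermitian Pauli $P\ne\pm I$.

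\emph{Writing $P$ in the local basis.} The local bases $\ket{0_A},\ket{1_A}$ and $\ket{0_B},\ket{1_B}$ are images of the computational basis under single-qubit Cliffords. So in the local product basis any Pauli takes the form $P=\zeta X^rZ^z$ with $r,z\in\F_2^V$, exactly as in the proof of Lemma~\ref{lem:kitaev-trivial-stab}, and $(r,z)\ne(0,0)$ because $P\ne\pm I$. Write $\ket\Psi=\sum_{s\in\cS}c_s\ket s$ with $|c_s|=1/\sqrt M$ by \eqref{eq:uniform-state-explicit}. Then
\[
  \bra\Psi P\ket\Psi=\sum_{s\in\cS,\ s+r\in\cS}\overline{c_{s+r}}\,\zeta(-1)^{z\cdot s}c_s ,
\]
where each term has modulus $1/M$.

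\emph{Case $r\ne0$.} Step~1 of the proof of Lemma~\ref{lem:kitaev-trivial-stab} (which uses $n\ge4$) shows $\cS+r\ne\cS$. Since both sets have $M$ elements, some $s\in\cS$ has $s+r\notin\cS$. The sum therefore has at most $M-1$ terms, and the triangle inequality gives $|\bra\Psi P\ket\Psi|\le (M-1)/M$.

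\emph{Case $r=0$.} Now $z\ne0$ and $\zeta=\pm1$, so $\bra\Psi P\ket\Psi=\frac{\zeta}{M}\sum_{s\in\cS}(-1)^{z\cdot s}$. Step~2 of the same proof shows that $z\cdot s$ is not constant on $\cS$. At least one sign therefore differs from the rest, so the sum has modulus at most $M-2$, giving $|\bra\Psi P\ket\Psi|\le 1-2/M$.

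In both cases $\tfrac12(1\mp\bra\Psi P\ket\Psi)\ge\tfrac1{2M}$, which is the claimed bound. The only step needing care is the translation of $P$ into the $\zeta X^rZ^z$ form in the rotated local basis. This is routine because local Clifford conjugation maps Paulis to Paulis up to phase, after which the combinatorics are already available from Lemma~\ref{lem:kitaev-trivial-stab}.
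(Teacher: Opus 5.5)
Your proof is correct, but you reach the key estimate $|\bra\Psi P\ket\Psi|\le 1-1/M$ by a different route from the paper.

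\textbf{The paper's route.} The paper argues arithmetically. The amplitudes of $\ket\Psi$ lie in $M^{-1/2}\{1,\pm i\}$, and every Pauli matrix entry in the local basis lies in $\{0,\pm1,\pm i\}$. Hence $\bra\Psi P\ket\Psi\in M^{-1}\Z[i]\cap\R=M^{-1}\Z$. The paper then invokes only the \emph{statement} of Lemma~\ref{lem:kitaev-trivial-stab} to exclude the values $\pm1$, and discreteness forces the gap $1/M$.

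\textbf{Your route.} You expand $\bra\Psi P\ket\Psi$ term by term in the $\zeta X^rZ^z$ form and reuse the \emph{internal} combinatorial content of that lemma's proof. For $r\ne0$, Step~1 gives $\cS+r\ne\cS$, so at least one overlap term is missing and the triangle inequality applies. For $r=0$, Step~2 shows the signs $(-1)^{z\cdot s}$ are not constant, and a sign count finishes the argument. This reuse is legitimate, since both facts are purely combinatorial statements about $\cS$ and do not depend on the stabilizer hypothesis.

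\textbf{Comparison.} The paper's argument is more modular. Any state with lattice-valued amplitudes and trivial Pauli stabilizer gets the same gap, with no need to re-examine $\cS$. However, it relies on the amplitudes having phases in a discrete set. Your argument needs only the uniform moduli $|c_s|=M^{-1/2}$, with arbitrary phases. It also makes the mechanism behind the gap explicit and gives the slightly sharper bound $1-2/M$ for diagonal Paulis. The final conversion $\dist^2=\tfrac12\bigl(1\mp\bra\Psi P\ket\Psi\bigr)$ is the same in both.
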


\begin{proof}
Write $\ket\Psi=\sum_{s\in\cS}c_s\ket s$ with $c_s\in M^{-1/2}\{1,i,-i\}$.  In
the local product basis every entry of a Pauli operator lies in
$\{0,\pm1,\pm i\}$.  Writing
\[
  \bra\Psi P\ket\Psi
  =
  \sum_{s,s'\in\cS}\overline{c_{s'}}\,P_{s's}\,c_s,
  \qquad
  P_{s's}:=\bra{s'}P\ket s,
\]
each summand indexed by an ordered pair $(s',s)\in\cS\times\cS$ lies in
$M^{-1}\{0,\pm1,\pm i\}$.  Hence
$\bra\Psi P\ket\Psi\in M^{-1}\Z[i]$, and it is real because $P$ is Hermitian;
therefore $\bra\Psi P\ket\Psi\in M^{-1}\Z$.

By Lemma~\ref{lem:kitaev-trivial-stab} applied with all $J_e=1$, no Hermitian
Pauli other than $I$ stabilizes $\ket\Psi$.  Thus $P\ket\Psi\ne\ket\Psi$.
For a Hermitian Pauli, $\bra\Psi P\ket\Psi=1$ would force
\[
  0=1-\bra\Psi P\ket\Psi
  =2\Bigl\|\frac{I-P}{2}\ket\Psi\Bigr\|^2,
\]
and hence $P\ket\Psi=\ket\Psi$, so $\bra\Psi P\ket\Psi\ne1$.  Also
$\bra\Psi P\ket\Psi\ne-1$, because otherwise the same argument applied to $-P$
would give $(-P)\ket\Psi=\ket\Psi$, with $-P\ne\pm I$.  Combined with
$|\bra\Psi P\ket\Psi|\le1$ and the discreteness just established,
\[
  |\bra\Psi P\ket\Psi|\ \le\ 1-\frac1M .
\]
It remains to translate this expectation bound into distance from the
eigenspaces.  We do this for $\ker(I+P)$; the case $\ker(I-P)$ is identical.
Let $\Pi_+=\tfrac12(I+P)$.  Since $\Pi_+$ projects onto the orthogonal
complement of $\ker(I+P)$,
\[
  \dist\bigl(\ket\Psi,\ker(I+P)\bigr)^2
  =
  \|\Pi_+\ket\Psi\|^2
  =
  \frac{1+\bra\Psi P\ket\Psi}{2}
  \ge
  \frac{1}{2M},
\]
which proves the desired bound for $\ker(I+P)$.  Replacing $P$ by $-P$ gives the
same bound for $\ker(I-P)$.
\end{proof}

\begin{proof}[Proof of Theorem~\ref{thm:kitaev-size},
part~\ref{kitaev-part2}]
 By \eqref{eq:support} with all $J_e=1$ we have
$\|H_{\mathrm{unif}}\ket\Phi\|=\sqrt M$; write
$\ket\Psi=M^{-1/2}H_{\mathrm{unif}}\ket\Phi$.
Lemma~\ref{lem:pauli-gap} shows that for every
Hermitian Pauli $P\ne\pm I$ and either sign,
\begin{equation}\label{eq:pauli-eigenspace-gap}
  \dist\bigl(\ket\Psi,\ker(P\mp I)\bigr)\ \ge\ \frac{1}{\sqrt{2M}} .
\end{equation}

Let $U$ be an $(\alpha,a,\eps)$ block encoding of $H_{\mathrm{unif}}$.  From
$\|H_{\mathrm{unif}}-\alpha A_U\|\le\eps$ we get
$\|H_{\mathrm{unif}}\ket\Phi-\alpha A_U\ket\Phi\|\le\eps$.  Since
$\|H_{\mathrm{unif}}\ket\Phi\|=\sqrt M$ and $\eps\le1/4<\sqrt M$, the vector
$A_U\ket\Phi$ is nonzero.  Define the normalized postselected system state
\[
  \ket{\widehat\psi}:=\frac{A_U\ket\Phi}{\|A_U\ket\Phi\|}.
\]
Equivalently, postselecting the block ancillas of
$U(\ket{0^a}\ket\Phi)$ onto $\ket{0^a}$ gives
$\ket{0^a}\otimes\ket{\widehat\psi}$ after normalization.  For any nonzero
vectors $w,w'$ with
$\|w-w'\|\le\eps$ one has
$\bigl\|w/\|w\|-w'/\|w'\|\bigr\|\le2\eps/\max(\|w\|,\|w'\|)$; hence the
state $\ket{\widehat\psi}$ satisfies
$\|\ket{\widehat\psi}-\ket\Psi\|\le2\eps/\sqrt M$.

If $\eps\le1/4$ then
\[
  \|\ket{\widehat\psi}-\ket\Psi\|
  \le \frac{2\eps}{\sqrt M}
  \le \frac{1}{2\sqrt M}
  <
  \frac{1}{\sqrt{2M}} .
\]
Suppose, for contradiction, that $\ket{\widehat\psi}\in\ker(P\mp I)$ for some
Hermitian Pauli $P\ne\pm I$.  Then \eqref{eq:pauli-eigenspace-gap} gives
\[
  \|\ket{\widehat\psi}-\ket\Psi\|
  \ge
  \dist\bigl(\ket\Psi,\ker(P\mp I)\bigr)
  \ge
  \frac{1}{\sqrt{2M}},
\]
contradicting the preceding display.  Thus $\ket{\widehat\psi}$ lies in no
eigenspace of any Hermitian Pauli $P\ne\pm I$; equivalently, it has no
nontrivial Pauli stabilizer and $\nu(\ket{\widehat\psi})=n$.  As in part~\ref{kitaev-part1}, this
forces $T(U)\ge n$.
\end{proof}

\section{Proof of the Kitaev precision lower bound}\label{app:kitaev-precision}
\begin{proof}[Proof of Theorem~\ref{thm:kitaev-precision}]
Pick a vertex $u$ and let $(u,v_x)\in E_x$ and $(u,v_z)\in E_z$ be its $x$- and
$z$-bonds; then $v_x\ne v_z$.  Put
\[
  P=X_uX_{v_x},\qquad Q=Z_uZ_{v_z} .
\]
These are the bond operators of the two chosen edges, and they anticommute
because they differ at the single common vertex $u$, where $X$ and $Z$
anticommute, and act on disjoint sets elsewhere.  With $p:=8\eps\in(0,1/4)$ set
\begin{equation}\label{eq:precision-H}
  H_\eps=\sqrt p\,P+\sqrt{1-p}\,Q ,
\end{equation}
which belongs to $\cK_n$ since $\sqrt p,\sqrt{1-p}\in[0,1]$ and all other
couplings vanish.  The operators $P$ and $Q$ are Hermitian Paulis, so
$H_\eps$ is Hermitian.  From $P^{2}=Q^{2}=I$ and $PQ=-QP$ we get
$H_\eps^{2}=I$; hence the spectrum of $H_\eps$ lies in $\{\pm1\}$, and
$\|H_\eps\|=1$.

Consider the stabilizer state
$\ket\Phi=\ket+_u\otimes\ket R$ with
$\ket R=\ket+_{v_x}\otimes\ket0_{v_z}\otimes\ket{0\cdots0}_{\text{rest}}$.
Then $X_u\ket+_u=\ket+_u$, $X_{v_x}\ket+_{v_x}=\ket+_{v_x}$, and
$Z_{v_z}\ket0_{v_z}=\ket0_{v_z}$, so
\[
  P\ket\Phi=\ket+_u\ket R,\qquad Q\ket\Phi=\ket-_u\ket R,
\]
whence
\[
  H_\eps\ket\Phi=\bigl(\sqrt p\ket+ +\sqrt{1-p}\ket-\bigr)_u\otimes\ket R,
  \qquad \|H_\eps\ket\Phi\|=1 .
\]
Applying a Hadamard on $u$ turns the first factor into
$\ket{\psi_p}=\sqrt p\ket0+\sqrt{1-p}\ket1$, so up to a Clifford and a tensor
factor which is a stabilizer state, the normalized target is $\ket{\psi_p}$.

Now let $U$ be an $(\alpha,a,\eps)$ block encoding of $H_\eps$.  As in the
proof of part~\ref{kitaev-part2} of Theorem~\ref{thm:kitaev-size},
$\|H_\eps\ket\Phi-\alpha A_U\ket\Phi\|\le\eps$ and $\|H_\eps\ket\Phi\|=1$, so
the normalized postselected system output, denoted $\ket{\widehat\Phi}$, is
within Euclidean distance $2\eps$ of $H_\eps\ket\Phi$.  Let
\[
  \ket\eta
  :=
  \bigl(\sqrt p\ket+ +\sqrt{1-p}\ket-\bigr)_u\otimes\ket R
  =
  H_\eps\ket\Phi
\]
and set
\[
  \widehat\rho_u
  :=
  H_u\,\Tr_{\overline u}\bigl(\proj{\widehat\Phi}\bigr)\,H_u ,
\]
where $\Tr_{\overline u}$ traces out all qubits except $u$.  Since partial trace
and unitary conjugation do not increase trace distance,
\[
  \begin{aligned}
  \frac12\bigl\|\widehat\rho_u-\proj{\psi_p}\bigr\|_1
  &=
  \frac12\Bigl\|
  H_u\Tr_{\overline u}\bigl(\proj{\widehat\Phi}\bigr)H_u
  -
  H_u\Tr_{\overline u}\bigl(\proj{\eta}\bigr)H_u
  \Bigr\|_1 \\
  &\le
  \frac12\bigl\|\proj{\widehat\Phi}-\proj{\eta}\bigr\|_1 \\
  &\le
  \|\ket{\widehat\Phi}-\ket\eta\|
  \le
  2\eps
  \le \delta ,
  \end{aligned}
\]
where $\delta:=4\eps$ and the second inequality uses
$\tfrac12\|\proj{\phi}-\proj{\phi'}\|_1
\le\|\ket\phi-\ket{\phi'}\|$ for unit vectors.

The hypotheses of Fact~\ref{fact:bchk} hold: $\delta=4\eps<1/8$ because
$\eps<1/32$, and $p=8\eps=2\delta$ lies strictly between $\delta$ and
$3\delta$.  Replacing every $T$ gate of $U$ by the standard magic-state
injection gadget \cite{bravyi2005magic} converts the block-encoding procedure
followed by postselection into a stabilizer protocol consuming at most $T(U)$
copies of $\ket T$.  Fact~\ref{fact:bchk} then gives
$T(U)\ge\tfrac13\log_2(1/\delta)-\tfrac23$, which is
\eqref{eq:kitaev-precision-bound}.
\end{proof}

\section{Implementation details for the Kitaev block encoding}
\label{app:kitaev-blockencoding}
We give the implementation and error estimates used in Section~\ref{subsec:kitaev-upper}.
\paragraph{Cost of SELECT.}
Let $F$ be a flag register indexed by binary strings
$z\in\{0,1\}^{\ell}$.  The SELECT implementation computes one-hot equality
flags, applies the corresponding controlled Paulis, and then uncomputes the
flags.  More explicitly, define
\[
  f_z(y):=\mathbf 1[y=z],\qquad
  f(y):=(f_z(y))_{z\in\{0,1\}^{\ell}} .
\]
The decoder $D$ is the reversible map
\begin{equation}\label{eq:select-decoder}
  D:\quad
  \ket y_{\rm addr}\ket{0\cdots0}_{F}
  \longmapsto
  \ket y_{\rm addr}\ket{f(y)}_{F}.
\end{equation}
It can be implemented by a binary tree of AND gates: at level $j$ one computes
whether the first $j$ address bits agree with a prefix.  Each child flag is
obtained from its parent flag and one address literal by a Toffoli gate, with
negated literals supplied by Clifford $X$ gates.  The tree has
$O(2^\ell)$ nodes, hence $D$ uses $O(2^\ell)$ Toffoli gates.

On the flagged space, apply the Clifford unitary
\begin{equation}\label{eq:select-controlled-layer}
  C:\quad
  \ket y_{\rm addr}\ket{f(y)}_{F}\ket\psi
  \longmapsto
  \ket y_{\rm addr}\ket{f(y)}_{F}\,
  \biggl(\prod_{e\in E}(\eta_eP_e)^{f_e(y)}\biggr)\ket\psi .
\end{equation}
For each edge $e$, the flag $f_e$ controls the two single-qubit Paulis in
$P_e$; controlled-$X$, controlled-$Y$, and controlled-$Z$ are Clifford gates.
If $\eta_e=-1$, a $Z$ gate on the flag adds the branch phase.  Therefore
$C$ contributes no $T$ gates.  Since exactly one equality flag is $1$, and since
unused addresses have no flagged Pauli action, we obtain
\[
  (D^\dagger C D)
  \ket y_{\rm addr}\ket{0\cdots0}_{F}\ket\psi
  =
  \ket y_{\rm addr}\ket{0\cdots0}_{F}
  \begin{cases}
    \eta_yP_y\ket\psi, & y\in E,\\
    \ket\psi, & y\notin E,
  \end{cases}
\]
which is exactly \eqref{eq:kitaev-select} with the temporary flags returned to
$\ket{0\cdots0}$.  Finally $2^\ell<2M$, and $D^\dagger$ has the same cost as
$D$.  Since an exact Toffoli has constant Clifford$+T$ cost,
\begin{equation}\label{eq:select-cost}
  T(\mathrm{SEL})
  \le T(D)+T(C)+T(D^\dagger)
  =O(2^\ell)+0+O(2^\ell)
  =O(M)=O(n).
\end{equation}
This is the standard QROM/SELECT construction; more space-efficient tradeoffs
exist \cite{low2024trading} but are not needed here.

\paragraph{Cost and accuracy of PREPARE.}
An arbitrary $\ell$-qubit state can be prepared to Euclidean error $\delta'$
using
\begin{equation}\label{eq:gkw-cost}
  O\!\left(\sqrt{2^{\ell}\log(1/\delta')}+\log(1/\delta')\right)
\end{equation}
$T$ gates together with ancillas \cite{gosset2026stateprep}.  Let
$\ket{\widetilde G}$ be the prepared state and
$\delta'=\|\ket{\widetilde G}-\ket G\|$.  Writing
$B(\ket\phi):=(\bra\phi\otimes I)\mathrm{SEL}(\ket\phi\otimes I)$ and using
$\|\mathrm{SEL}\|=1$,
\[
  \begin{aligned}
  \bigl\|B(\ket{\widetilde G})-B(\ket G)\bigr\|
  &=
  \Bigl\|((\bra{\widetilde G}-\bra G)\otimes I)\,
  \mathrm{SEL}\,(\ket{\widetilde G}\otimes I)
  +(\bra G\otimes I)\,
  \mathrm{SEL}\,((\ket{\widetilde G}-\ket G)\otimes I)\Bigr\| \\
  &\le
  \bigl\|(\bra{\widetilde G}-\bra G)\otimes I\bigr\|
  \cdot\bigl\|\ket{\widetilde G}\otimes I\bigr\|
  +\bigl\|\bra G\otimes I\bigr\|\cdot
  \bigl\|(\ket{\widetilde G}-\ket G)\otimes I\bigr\| \\
  &\le 2\delta' .
  \end{aligned}
\]
Multiplying by $\lambda$, the encoded operator differs from $H_K(J)$ by at most
$2\lambda\delta'$, so choosing
\[
  \delta'=\frac{\eps}{2\lambda}
\]
gives an $\eps$ block encoding.  Since $\lambda\le M=O(n)$ and
$2^{\ell}<2M=O(n)$, \eqref{eq:gkw-cost} becomes
\begin{equation}\label{eq:prepare-cost-final}
  T(\mathrm{PREP})=O\!\left(\sqrt{n\log(n/\eps)}+\log(n/\eps)\right),
\end{equation}
and the inverse preparation costs the same.

The decoder and the state-preparation construction use
The decoder and the state-preparation construction use
\[
O\!\left(
n+\sqrt{n\log(n/\eps)}+\log(n/\eps)
\right)
=
O\!\left(n+\log(1/\eps)\right)
\]
work ancillas in total. work ancillas in total.
Combining Eqs.~\eqref{eq:select-cost} and
\eqref{eq:prepare-cost-final}, and using
$\sqrt{xy}\le\tfrac12(x+y)$ together with $\log n=O(n)$, gives
Eq.~\eqref{eq:kitaev-upper}.

\paragraph{Cost of a controlled block encoding.}
Let
$
U_K=(\mathrm{PREP}^{\dagger}\otimes I)\,
\mathrm{SEL}\,
(\mathrm{PREP}\otimes I).
$
A controlled implementation of $U_K$ can be written as
$
(\mathrm{PREP}^{\dagger}\otimes I)\,
\mathrm{ctrl}\text{-}\mathrm{SEL}\,
(\mathrm{PREP}\otimes I),
$
since the PREPARE operations cancel when the control qubit is zero.
Using the one-hot construction above,
$\mathrm{ctrl}\text{-}\mathrm{SEL}$ is obtained by adding one control
to each flag-controlled Pauli operation. Each such operation has
constant Clifford$+T$ cost, so
\[
T(\mathrm{ctrl}\text{-}\mathrm{SEL})=O(n).
\]
Consequently,
\[
C_{\rm K}
=
O\!\left(
n+\sqrt{n\log(n/\eps_{\mathrm{BE}})}
+\log(n/\eps_{\mathrm{BE}})
\right)
=
O(B_{\rm K}).
\]

\section{Resource Estimates for QSVT}
\label{app:qsp-cost}

\subsection{Block encodings used in the QSVT application}
\label{app:qsp-input-block-encodings}

\subsubsection{Common normalization for the second-quantized family}
\label{app:secondq-common-normalization}

For a fixed Hamiltonian, the construction of Liu \emph{et al.}
depends on the number $L$ of different nonzero interaction coefficients
\cite{liu2025block}. This number may vary within $\cH_n$.
The largest possible table size in this family is
\[
  L_{\max}
  :=
  n^2+\binom{n}{2}^2
  =
  \Theta(n^4).
\]
We use a table with $L_{\max}$ entries for every Hamiltonian and set
all unused entries to zero. These entries do not contribute to the
encoded operator. Since all coefficients have magnitude at most one,
the coefficient scale can also be fixed throughout the family.

Let $\overline\alpha_{2,n}$ be the normalization of this padded
construction. Since the normalization of the full-space construction
is linear in the table size,
\[
  \overline\alpha_{2,n}
  =
  \Theta(L_{\max})
  =
  \Theta(n^4).
\]
The padded table may increase the cost of a sparse instance, but it
does not change the worst-case bound for $\cH_n$. For fixed
$\eps_{\mathrm{BE}}$ and sufficiently large $n$, with
$m_b=O(\log(L_{\max}/\eps_{\mathrm{BE}}))$, the $T$-count remains
\[
  O\!\left(
    n+\sqrt{L_{\max}m_b}
  \right)
  =
  O\!\left(
    n^2\sqrt{
      \log\!\left(\frac{n^4}{\eps_{\mathrm{BE}}}\right)
    }
  \right).
\]

\paragraph{Controlled implementation.}
Ref.~\cite{liu2025block} gives a Clifford gate count
$L_{\max}m_b+o(L_{\max})$ for this construction.
Controlling the elementary Clifford$+T$ circuit gate by gate incurs
only constant $T$-count overhead per gate. Hence
\[
  C_2
  =
  O(L_{\max}m_b)
  =
  O\!\left(
    n^4\log\!\frac{n^4}{\eps_{\mathrm{BE}}}
  \right)
  =
  O(B_2^{2}).
\]

\subsubsection{Common normalization for the Kitaev family}
\label{app:common-normalization}

The original LCU construction has normalization
$\lambda(J):=\sum_{e\in E}|J_e|$. Since $|J_e|\le1$ and
$|E|=M$, we have $\lambda(J)\le M$.

\begin{lemma}
Every $H_K(J)\in\cK_n$ has an LCU block encoding with normalization
$M$. This modification does not change the asymptotic $T$-count or
ancilla count.
\end{lemma}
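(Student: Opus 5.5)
The plan is to pad the LCU decomposition so that its total weight is exactly $M$, using two cancelling identity terms as announced in the proof of Corollary~\ref{cor:kitaev-fixed-normalization}. Fix $J$ and write $\lambda=\lambda(J)\le M$. Set $\gamma:=(M-\lambda)/2\ge0$ and consider the formal decomposition
\[
  H_K(J)=\sum_{e\in E}|J_e|\,\eta_eP_e+\gamma\,I+\gamma\,(-I),
\]
whose weights $|J_e|$, $\gamma$, $\gamma$ are nonnegative and sum to $M$. Since $\gamma I+\gamma(-I)=0$, the encoded operator is unchanged. I will enlarge the address register by at most one qubit, so that $\ell'=\lceil\log_2(M+2)\rceil$, and reserve two new addresses $e_+$ and $e_-$. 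SELECT is extended so that it applies $I$ at $e_+$ and $-I$ at $e_-$. The latter is just a $Z$ gate on the corresponding flag, which is Clifford, so the $T$-cost of SELECT is still $O(2^{\ell'})=O(n)$ via the decoder bound \eqref{eq:select-cost}.

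PREPARE now targets
\[
  \ket{G'}=\sum_{e}\sqrt{|J_e|/M}\,\ket e+\sqrt{\gamma/M}\,(\ket{e_+}+\ket{e_-}).
\]
The LCU identity \eqref{eq:lcu-identity} then gives an exact corner of $H_K(J)/M$. I would prepare $\ket{G'}$ with the same state-preparation routine of Ref.~\cite{gosset2026stateprep} on $\ell'$ qubits to Euclidean error $\delta'$. Repeating the estimate $\|B(\ket{\widetilde G'})-B(\ket{G'})\|\le2\delta'$ from Appendix~\ref{app:kitaev-blockencoding} and multiplying by $M$ shows that $\delta'=\eps/(2M)$ suffices. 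Because $M=O(n)$ and $2^{\ell'}=O(n)$, the cost bound \eqref{eq:prepare-cost-final} is unchanged: it remains $O(\sqrt{n\log(n/\eps)}+\log(n/\eps))$. The ancilla count stays $O(n+\log(1/\eps))$, and the total $T$-count is still $O(n+\log(1/\eps))$. The case $\lambda\le\eps$ no longer needs the zero-block shortcut, since the padded construction handles it too: it even covers $\lambda=0$, where $\ket{G'}$ is supported on $e_\pm$ alone.

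The main subtlety is that the edge weights in $\ket{G'}$ are now divided by $M$ rather than $\lambda$, so the error scaling must be rechecked. It works because the common factor $M$ only enters logarithmically. Apart from that, the argument is routine. The resulting normalization $M$ is independent of $J$, which is exactly what the lemma requires.
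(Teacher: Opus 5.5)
Your proposal is correct and follows the paper's proof. In both arguments the Hamiltonian is padded with the cancelling terms $\gamma I+\gamma(-I)$, where $\gamma=(M-\lambda)/2$, two address labels are added with SELECT actions $I$ and $-I$, and the costs of PREP and SELECT remain asymptotically unchanged. Your added details are sound: the recheck with $\delta'=\eps/(2M)$, and the observation that the padded state $\ket{G'}$ is well defined even when $\lambda=0$, so the zero-block shortcut is no longer needed.
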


\begin{proof}
Define
\[
  \mu(J):=\frac{M-\lambda(J)}{2}\ge0.
\]
Adding two cancelling terms gives
\[
  H_K(J)
  =
  \sum_{e\in E}J_eP_e
  +\mu(J)I+\mu(J)(-I).
\]
The Hamiltonian is unchanged, while the total LCU weight is
$\lambda(J)+2\mu(J)=M$. The resulting block encoding therefore has
normalization $M$.

Only two address labels are added, and their SELECT operations are
$I$ and $-I$. Replacing $M$ terms by $M+2$ terms does not change the
asymptotic cost of PREP or SELECT. The $T$-count remains
$O(n+\log(1/\eps_{\mathrm{BE}}))$, and the construction still uses
$O(n+\log(1/\eps_{\mathrm{BE}}))$ ancillas.
\end{proof}

\subsubsection{Hermitian approximation}
\label{app:qsp-hermitian-approximation}
Let $K$ be the operator obtained after approximating the coefficients,
and suppose $\|H-K\|\le\eps_{\mathrm{BE}}$. Define
\[
  \widetilde H:=\frac{K+K^\dagger}{2}.
\]
Then $\widetilde H$ is Hermitian. Moreover, since $H=H^\dagger$,
\[
  \|H-\widetilde H\|
  \le
  \frac{
    \|H-K\|+\|(H-K)^\dagger\|
  }{2}
  \le
  \eps_{\mathrm{BE}}.
\]
The coefficient tables in both constructions can represent
$\widetilde H$ with the same asymptotic cost.

\subsection{Cost of the QSVT circuit}
\label{app:qsp-phase-cost}

Recall that
\[
  \Pi_a:=|0^a\rangle\langle0^a|\otimes I.
\]
The QSVT phase operator
\[
  \Pi_\phi:=e^{i\phi(2\Pi_a-I)}
\]
can be implemented using two $\Pi_a$-controlled NOT gates and one
single-qubit $Z$ rotation
\cite[Fig.~3]{martyn2021grand}. A QSVT sequence with
$N_{\rm calls}$ block-encoding queries contains
$O(N_{\rm calls})$ such operations
\cite[Lemma~19]{gilyen2019qsvt}.

A $\Pi_a$-controlled NOT is an $a$-controlled $X$ gate with negative
controls. With clean work ancillas, standard decompositions implement
it using $O(a)$ Toffoli gates
\cite{barenco1995elementary}. Since an exact Toffoli has constant
$T$-count, the controlled-projector part of the QSVT circuit costs
$O(N_{\rm calls}a)$ $T$ gates.

\paragraph{Oblivious amplitude amplification.}
\label{app:qsp-oaa-cost}
The OAA step includes three uses of $V$ and $V^\dagger$,
together with six projector-controlled NOT gates and three
single-qubit gates
\cite[Theorem~28]{gilyen2019qsvt}.
Both the input and output projectors are
\[
  \Pi_{\mathrm{s}}
  =|0^{a+2}\rangle\langle0^{a+2}|\otimes I.
\]
With clean work ancillas, each projector-controlled NOT
requires $O(a+1)$ $T$ gates.
The three single-qubit gates have the form $e^{-i\phi_j Z}$,
with phases $(-\pi,\pi/2,\pi/2)$
\cite[Lemma~9]{gilyen2019qsvt}.
They are Clifford gates and require no $T$ gates.
Thus the additional OAA cost is $O(a+1)$.
The costs of the calls to $V$ and $V^\dagger$ are already
included in the block-encoding query and phase-synthesis counts.

Combining the cosine and sine sequences requires selecting
the phase angles according to an auxiliary qubit.
Each such rotation can be implemented using two single-qubit
$Z$ rotations and two CNOT gates.
Thus the complete circuit still contains
$O(N_{\rm calls})$ single-qubit $Z$ rotations.
Synthesizing each rotation to error
$O(\eps_{\mathrm{syn}}/N_{\rm calls})$ keeps their total error
within $\eps_{\mathrm{syn}}$ and costs
\[
  O\!\left(
    N_{\rm calls}
    \log\!\frac{N_{\rm calls}}{\eps_{\mathrm{syn}}}
  \right)
\]
$T$ gates \cite{ross2016optimal}.

For both Hamiltonian families, the block-encoding constructions use
$a=O(B_F)$ ancillas. The preceding expression is therefore
\[
  O\!\left(
    N_{\rm calls}
    \left[
      B_F+
      \log\!\frac{N_{\rm calls}}{\eps_{\mathrm{syn}}}
    \right]
  \right).
\]
This proves the upper bound in
Theorem~\ref{thm:qsp-modular-cost}. Its lower bound follows from the
modular cost model, in which every block-encoding query is compiled
separately.

\subsection{Simulation error}
\label{app:qsp-simulation-error}

Let $H$ and $\widetilde H$ be Hermitian and suppose
$\|H-\widetilde H\|\le\eps_{\mathrm{BE}}$. The standard Duhamel bound
gives
\[
  \|e^{-itH}-e^{-it\widetilde H}\|
  \le
  |t|\,\|H-\widetilde H\|
  \le
  |t|\eps_{\mathrm{BE}}.
\]
The QSVT polynomial approximation contributes
$\eps_{\mathrm{QSVT}}$, while phase synthesis contributes
$\eps_{\mathrm{syn}}$. Hence the total simulation error is at most
\[
  |t|\eps_{\mathrm{BE}}
  +
  \eps_{\mathrm{QSVT}}
  +
  \eps_{\mathrm{syn}}.
\]

\end{document}